\newcommand{\pname}{\textsc}
\newcommand{\ProblemFormat}[1]{\pname{#1}}
\newcommand{\ProblemIndex}[1]{\index{problem!\ProblemFormat{#1}}}
\newcommand{\ProblemName}[1]{\ProblemFormat{#1}\ProblemIndex{#1}{}\xspace}
\newcommand{\probClust}{\ProblemName{Binary $k$-Means}}
\newcommand{\probAtMostClust}{\ProblemName{Binary $k$-Means}}
\newcommand{\probProjCl}{\ProblemName{Binary Projective Clustering}}
\newcommand{\probBFact}{\ProblemName{Low Boolean-Rank Approximation}}
\newcommand{\probFact}{\ProblemName{Low GF(2)-Rank Approximation}}
 \newcommand{\bmfgfr}{\ProblemName{Low GF(2)-Rank Approximation}}
\newcommand{\rclustering}{{\sc Binary Constrained Clustering}\xspace}
\newcommand{\bmfbr}{{\sc Low Boolean-Rank Approximation}\xspace}
\newcommand{\kmclust}{{\sc   $k$-Means Clustering}\xspace}
\newcommand{\GF}{{GF}(2)\xspace}
\newcommand{\argmin}{{\rm argmin}\xspace}
\newcommand{\rank}{{\rm rank}\xspace}
\newcommand{\GFrank}{{\rm{GF}}(2)\text{{\rm -rank}}\xspace}
\newcommand{\hdist}{d_H}
\newcommand{\Hdist}{{\operatorname{cost}}}
\newcommand{\opt}{{\sf OPT}}
\newcommand{\bfA}{\mathbf{A}} 
\newcommand{\bfB}{\mathbf{B}}
\newcommand{\bfU}{\mathbf{U}}
\newcommand{\bfV}{\mathbf{V}}  
\newcommand{\bfa}{\mathbf{a}} 
\newcommand{\bfb}{\mathbf{b}} 
\newcommand{\bfc}{\mathbf{c}}
\newcommand{\bfp}{\mathbf{p}} 
\newcommand{\bfs}{\mathbf{s}} 
\newcommand{\bft}{\mathbf{t}}
\newcommand{\bfu}{\mathbf{u}} 
\newcommand{\bfv}{\mathbf{v}} 
\newcommand{\bfw}{\mathbf{w}} 
\newcommand{\bfrho}{\mathbf{q}} 
\newcommand{\bfQ}{\mathbf{Q}} 
\newcommand{\bfx}{\mathbf{x}}
\newcommand{\bfy}{\mathbf{y}} 
\newcommand{\A}{\mathcal{A}} 
\newcommand{\cR}{\mathcal{R}}
\newtheorem{definition}{Definition}
\newtheorem{theorem}{Theorem}
\newtheorem{claim}{Claim}
\newtheorem{corollary}{Corollary}
\newtheorem{lemma}{Lemma}
\newtheorem{observation}[theorem]{Observation}
\newtheorem{proposition}[theorem]{Proposition}
\newcommand{\OO}{{\cal O}}
\newcommand{\BB}{{\cal B}}
\newcommand{\RR}{{\cal R}}
\newcommand{\PP}{{\cal P}}
 \newcommand{\proj}[2]{{\sf proj}_{#2}(#1)}
\newcommand{\rest}[3]{{#1}|_{(#2,#3)}}
\newcommand{\ar}[2]{r}
\newcommand{\best}[1]{{\sf best}_{#1}}
\newcommand{\defproblem}[3]{
  \vspace{3mm}
\noindent\fbox{
  \begin{minipage}{.95\textwidth}
  \begin{tabular*}{\textwidth}{@{\extracolsep{\fill}}lr} \textsc{#1} \\ \end{tabular*}
  {\bf{Input:}} #2  \\
  {\bf{Task:}} #3
  \end{minipage}
  }
  \vspace{2mm}
}
\newcommand{\expect}{\mathbf{E}}
\newcommand{\expectation}[3][0]{%
   \ifcase#1 
        \expect [ #2 \mid #3 ] 
      \or   \expect\bigl[ #2 \bigm\vert #3 \bigr]
      \or   \expect \Bigl[ #2 \Bigm\vert #3 \Bigr] 
      \or \expect\biggl[ #2 \biggm\vert #3 \biggr]
      \or  \expect \Biggl[#2 \Biggm\vert #3 \Biggr]
   \else 
       \expect\left[ #2  \;\middle\vert\; #3 \right]
   \fi}
\newcommand{\probability}[3][0]{%
   \ifcase#1 
    \Pr( #2 \mid #3 ) 
      \or \Pr \bigl( #2 \bigm\vert #3 \bigr)
      \or \Pr \Bigl( #2 \Bigm\vert #3 \Bigr) 
      \or \Pr \biggl( #2 \biggm\vert #3 \biggr)
      \or \Pr \Biggl( #2 \Biggm\vert #3 \Biggr) 
   \else 
      \Pr \left( #2  \;\middle\vert\; #3 \right)
   \fi}
\title{Approximation Schemes for    Low-Rank  Binary Matrix Approximation  Problems\thanks{The research leading to these results have  been supported by the Research Council of Norway via the projects ``CLASSIS'' and ``MULTIVAL".}
}
\author{
Fedor V. Fomin\thanks{
Department of Informatics, University of Bergen, Norway.} \addtocounter{footnote}{-1}
\and
Petr A. Golovach\footnotemark{} \addtocounter{footnote}{-1}
\and
Daniel Lokshtanov\footnotemark{} \addtocounter{footnote}{-1}
\and 
Fahad Panolan\footnotemark{}\addtocounter{footnote}{-1}
\and 
Saket Saurabh\footnotemark{}
}
\date{}
\begin{document}
\maketitle

\thispagestyle{empty}

%
%


\begin{abstract} 

We provide a randomized  linear time approximation scheme for a generic problem about  clustering  of binary vectors subject to additional constrains. The new constrained clustering problem 
encompasses a number of  problems and by solving it,  we obtain  the first  linear time-approximation schemes for a number of well-studied fundamental problems concerning clustering of binary vectors and low-rank approximation of binary matrices. 
Among the problems solvable by our approach are  \bmfgfr,   \bmfbr, and various versions of \textsc{Binary Clustering}.
For example, for \bmfgfr problem, where  for an $m\times n$ binary matrix $A$ and integer $r>0$, we seek for a binary matrix $B$ of \GF rank at most $r$ such that $\ell_0$ norm of matrix
$\bfA-\bfB$ is minimum, our algorithm, for any  $\epsilon>0$ 
in time $ f(r,\epsilon)\cdot n\cdot m$, where $f$ is some computable function,   outputs a $(1+\epsilon)$-approximate solution with probability at least $(1-\frac{1}{e})$. Our approximation algorithms substantially improve the running times and approximation factors of previous works. We also give (deterministic) PTASes for these problems 
running in time $n^{f(r)\frac{1}{\epsilon^2}\log \frac{1}{\epsilon}}$, where $f$ is some function depending on the problem.  Our algorithm for the constrained clustering problem is based on a novel sampling lemma, which is interesting in its own. 

\end{abstract}

\newpage
\pagestyle{plain}
\setcounter{page}{1}

\section{Introduction}

We define a new clustering problem which encompasses a number of well studied problems about low-rank approximation of binary matrices and clustering of binary vectors.

In order to obtain approximation algorithms for low-rank approximation problem, we design approximation algorithms for a ``constrained'' version of binary clustering. 
 
A $k$-ary relation $R$ is a set of binary $k$-tuples with elements from $\{0,1\}$. A $k$-tuple $t=(t_1,\dots, t_k)$ \emph{satisfies} $R$, we write $t\in R$, if $t$ is equal to one of $k$-tuples from $R$.  
\begin{definition}[Vectors satisfying $\cR$]
Let $\cR=\{R_1, \dots, R_m\}$ be a set of $k$-ary relations. We say that a set $C=\{\bfc_1, \bfc_2, \dots, \bfc_k\}$ of binary $m$-dimensional vectors  \emph{satisfies $\cR$} and write $<C,\cR>$, if 
 $(\bfc_1[i],\ldots,\bfc_k[i])\in R_i$ for all $i\in \{1,\ldots,m\}$.
\end{definition}

For example, for $m=2$, $k=3$, $R_1=\{(0,0,1), (1,0,0)\}$, and  $R_2=\{(1,1,1), (1,0,1), (0,0,1)\}$, the set of vectors 
\[
\bfc_1=\left(
\begin{array}{c}
0\\
1\\
\end{array}
\right) , \, 
\bfc_2=\left(
\begin{array}{c}
0\\
0\\
\end{array}
\right) , \,
\bfc_3=\left(
\begin{array}{c}
1\\
1\\
\end{array}
\right)   \]
satisfies $\cR=\{R_1,  R_2\}$ because  $(\bfc_1[1],\bfc_2[1],\bfc_3[1])=(0,0,1)\in   R_1$ and $({\bfc}_1[2],\bfc_2[2],\bfc_3[2])=(1,0,1)\in   R_2$.

Let us recall that the \emph{Hamming distance} between two vectors $\bfx, \bfy\in\{0,1\}^m$, where $\bfx=(x_1,\ldots,x_m)^\intercal$ and $\bfy=(y_1,\ldots,y_m)^\intercal$, is $\hdist(\bfx,\bfy)=\sum_{i=1}^m |x_i-y_i|$ or, in   words, the number of positions $i\in\{1,\ldots,m\}$ where $x_i$ and $y_i$ differ. For a set of vectors $C$ and vector $x$, we define $\hdist(\bfx,C)$, the Hamming distance   between 
$x$ and $C$, as the minimum Hamming distance between $x$ and a vector from $C$. Thus 
$\hdist(\bfx,C)=\min_{\bfc\in C}\hdist(\bfx,\bfc)$.


Then we define the following problem.

\defproblem{\rclustering}{A set $X\subseteq \{0,1\}^m$ of $n$ vectors, a positive integer $k$ and a set of $k$-ary relations
$\cR=\{R_1, \dots, R_m\}$. }{Among all  vector sets $C=\{\bfc_1,\ldots,\bfc_k\}\subseteq \{0,1\}^m$ satisfying $\cR$, find a set $C$ minimizing the sum 
$\sum_{\bfx\in X} \hdist(\bfx,C)$.}

 
First we prove the following theorem. 

\begin{theorem}
\label{thm:mainrclusteringdet}
There is a deterministic algorithm  which  given instance of \rclustering\ and $\epsilon>0$, runs in time $ m\cdot n^{\OO(\frac{k^2}{\epsilon^2}\log \frac{1}{\epsilon})}$  and outputs a $(1+\epsilon)$-approximate solution.   
\end{theorem}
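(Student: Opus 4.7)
The plan is to design a sampling-based approximation scheme and derandomize it by exhaustive enumeration. Let $C^* = \{\bfc_1^*,\ldots,\bfc_k^*\}$ be an (unknown) optimal center set and $X_1,\ldots,X_k$ the Voronoi partition of $X$ it induces. The algorithm enumerates, for each $j$, an ordered sample $S_j \in X^s$ of size $s = \Theta\!\bigl(\frac{k}{\epsilon^2}\log\frac{1}{\epsilon}\bigr)$, together with approximate cluster weights $w_1,\ldots,w_k$ drawn from a geometric grid of $(\log_{1+\epsilon} n)^k$ values. The total number of enumerations is $n^{ks}\cdot(\log_{1+\epsilon} n)^k = n^{O(k^2/\epsilon^2 \cdot \log(1/\epsilon))}$, matching the claimed bound once the per-enumeration work of $O(knm)$ is added.

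For each enumeration $(S_1,\ldots,S_k,w_1,\ldots,w_k)$ the algorithm builds a candidate $C$ coordinate by coordinate, exploiting the fact that the constraint $\langle C,\cR\rangle$ is \emph{coordinate-separable}: the tuple $(\bfc_1[i],\ldots,\bfc_k[i])$ must lie in $R_i\subseteq\{0,1\}^k$, and coordinates do not interact. At coordinate $i$ the algorithm picks $t=(t_1,\ldots,t_k)\in R_i$ minimizing the sample-based estimate
$$\widehat{\mathrm{cost}}_i(t) \;=\; \sum_{j=1}^k w_j\cdot\frac{|\{\bfs\in S_j:\bfs[i]\neq t_j\}|}{|S_j|}$$
of the true coordinate-$i$ cost $\mathrm{cost}_i(t) = \sum_j|\{\bfx\in X_j:\bfx[i]\neq t_j\}|$, and sets $\bfc_j[i]:=t_j$. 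After all $m$ coordinates are processed, the algorithm evaluates $\sum_{\bfx\in X}\hdist(\bfx,C)$ under the Voronoi assignment of $C$ in $O(knm)$ time and keeps the best $C$ across enumerations.

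Correctness reduces to a \emph{sampling lemma}, which is the main obstacle. One must show that whenever the enumeration happens to coincide with a uniform independent sample from each $X_j$ and $w_j$ matches $|X_j|$ within $(1\pm\epsilon/k)$, the center set $C$ produced above satisfies $\sum_{\bfx\in X}\hdist(\bfx,C)\le(1+\epsilon)\opt$; since the enumeration is exhaustive, at least one such good enumeration is tried, which is enough. The delicate point is that a naive additive Chernoff bound on $|\widehat{\mathrm{cost}}_i(t)-\mathrm{cost}_i(t)|$ controls each coordinate's error independently of $\opt$, so summing over the $m$ coordinates fails to yield a multiplicative $(1+\epsilon)$ guarantee. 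The remedy is a variance-sensitive concentration (Bernstein-style) that makes the deviation at coordinate $i$ proportional to $\sqrt{\mathrm{cost}_i(t)}$, so that after a Cauchy--Schwarz step the accumulated error telescopes into an $\epsilon$ fraction of $\opt$ for $s=\Theta(k/\epsilon^2\log(1/\epsilon))$. Tiny clusters of size below $\epsilon n/k^{O(1)}$, which cannot be reliably hit by random sampling, are handled by a separate brute-force guess of their centers that leaves the overall enumeration count asymptotically unchanged, completing the argument.
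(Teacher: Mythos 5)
Your algorithm coincides with the paper's: enumerate all $k$-tuples of ordered samples of size $s=\Theta(\frac{k}{\epsilon^2}\log\frac{1}{\epsilon})$ together with (approximate) cluster weights, build the candidate centers coordinate by coordinate by minimizing your estimator $\widehat{\mathrm{cost}}_i$ over $R_i$ (this is exactly the paper's function $f_i$ up to normalization), and keep the best candidate; the enumeration count and the reduction of correctness to a sampling lemma are also as in the paper. The divergence, and the genuine gap, is in how you propose to prove that sampling lemma.

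A per-coordinate high-probability bound of the form $|\widehat{\mathrm{cost}}_i(t)-\mathrm{cost}_i(t)|\lesssim\sqrt{\mathrm{cost}_i(t)}$ cannot be pushed through with a number of samples independent of $m$ and $n$. First, the variance of $\widehat{\mathrm{cost}}_i(t)$ is of order $\max_j w_j\cdot\mathrm{cost}_i(t)/s$, so a Bernstein bound gives a deviation that is a small multiple of $\mathrm{cost}_i(t)$ only when $\mathrm{cost}_i(t)\gtrsim\max_j|X_j|/(s\epsilon^2)$; coordinates at which almost all vectors agree with the optimal centers violate this, and there may be many of them. Second, even granting a deviation of order $\sqrt{\mathrm{cost}_i(t)}$, Cauchy--Schwarz over the $m$ coordinates gives $\sum_i\sqrt{\mathrm{cost}_i}\le\sqrt{m\cdot\opt}$, which is not $O(\epsilon\cdot\opt)$ unless $\opt=\Omega(m/\epsilon^2)$. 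Third, with $s$ a constant the failure probability per coordinate is a constant, a union bound over $m$ coordinates is unaffordable, and on a failure event the selected tuple can cost as much as $n$ at that coordinate. The paper's Sampling Lemma sidesteps all three issues by proving only a per-coordinate \emph{expectation} bound $\expect[\mathrm{cost}_i(\hat{t})]\le(1+\epsilon)\,\mathrm{cost}_i(t^*)$: the key estimate is that any tuple $\bfv$ with $\mathrm{cost}_i(\bfv)>(1+\frac{\epsilon}{2})\mathrm{cost}_i(t^*)$ is selected with probability at most $\frac{\mathrm{cost}_i(t^*)}{\mathrm{cost}_i(\bfv)}\cdot\frac{\epsilon}{2^{k+1}}$ (proved via Hoeffding when $\mathrm{cost}_i(t^*)$ is a nonnegligible fraction of the total weight, and via a direct binomial tail computation otherwise), so each bad tuple contributes $O(\epsilon\,\mathrm{cost}_i(t^*)/2^{k})$ to the expectation, and linearity of expectation aggregates over coordinates with no union bound and no $\sqrt{m}$ loss; a single sample outcome attaining the expectation then exists and is found by the exhaustive enumeration. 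Finally, your special treatment of tiny clusters is both unnecessary in this deterministic setting (the enumeration over all multisets of $X$ already contains, for every cluster $P_j$ however small, every possible sample outcome drawn from $P_j$) and infeasible as stated, since the centers range over $\{0,1\}^m$ rather than over the input points and hence cannot be guessed by brute force.
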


Our main result is the following theorem. 
\begin{theorem}
\label{thm:mainrclustering}
There is an algorithm  which for a given instance of \rclustering\ and $\epsilon>0$ in time 
$2^{\OO\left(\frac{ k^4}{\varepsilon^2} \log \frac{1}{\varepsilon}\right)}\cdot \left(\frac{1}{\varepsilon}\right)^{\OO\left(\frac{ k}{\varepsilon^2} \log \frac{1}{\varepsilon}\right)}n \cdot m$
 outputs a $(1+\epsilon)$-approximate solution with probability at least $(1-\frac{1}{e})$. 
\end{theorem}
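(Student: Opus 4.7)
\textbf{Proof plan for Theorem~\ref{thm:mainrclustering}.}
The plan is to trade the $n^{O(k^2/\varepsilon^2\log(1/\varepsilon))}$ exhaustive search in Theorem~\ref{thm:mainrclusteringdet} for random sampling, following the spirit of the Kumar--Sabharwal--Sen $k$-means approach but adapted to the coupling imposed by $\cR$. Let $X_1^\star,\ldots,X_k^\star$ be the clusters induced by an optimum center set $C^\star=\{\bfc_1^\star,\ldots,\bfc_k^\star\}$. The core ingredient will be a sampling lemma showing that if one draws $s=\Theta(k^2/\varepsilon^2\log(1/\varepsilon))$ uniform samples from each $X_j^\star$, then with constant probability one can reconstruct centers $\tilde\bfc_1,\ldots,\tilde\bfc_k$ that (i)~satisfy $\cR$ and (ii)~attain cost $(1+\varepsilon)\mathsf{OPT}$. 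The reconstruction is coordinate-wise: at coordinate $i$, given the sample-majority bits and sample-based cluster-size estimates $\widehat{w}_1,\ldots,\widehat{w}_k$, pick the tuple $t\in R_i$ minimising the weighted Hamming discrepancy to the estimated majorities. A Chernoff/Hoeffding argument will show that this noisy coordinate-wise oracle introduces error $\leq \varepsilon\cdot\mathsf{OPT}$ over all $m$ coordinates.

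The second step converts the sampling lemma into an algorithm by the usual peeling-from-a-larger-pool trick. Call $X_j^\star$ \emph{heavy} if $|X_j^\star|\geq \varepsilon n/k$; the total contribution of light clusters can be absorbed into the additive $\varepsilon\cdot\mathsf{OPT}$ after pre-computing a constant-factor approximation (e.g.\ by invoking Theorem~\ref{thm:mainrclusteringdet} with constant $\varepsilon$) that lower-bounds $\mathsf{OPT}$. Drawing $T=\Theta(k\cdot s/\varepsilon)$ uniform samples from $X$, each heavy cluster is hit by $\geq s$ samples with constant probability. We then enumerate every way of labelling $ks$ of these $T$ samples by cluster index; there are $T^{O(ks)}=(1/\varepsilon)^{O(k\cdot s)}$ such labellings, matching the $(1/\varepsilon)^{O(k/\varepsilon^2\log(1/\varepsilon))}$ factor in the target running time. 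For each labelling we compute the candidate centers via the coordinate-wise oracle above in $O(m)$ time per coordinate (after an $O(nm)$ single pass to tabulate sums), and evaluate the cost in $O(nm)$ time; the $2^{O(k^4/\varepsilon^2\log(1/\varepsilon))}$ factor then accounts for enumerating auxiliary information about the $\cR$-consistent tuple choices over the sample coordinates (the $k^4$ arising because we may need to guess, per cluster pair, the rough weight ratios up to precision $\varepsilon/k$ in order to break ties in the coordinate-wise oracle). Repeating $O(1)$ times boosts success to $1-1/e$.

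\textbf{Main obstacle.} The crucial difficulty, and what distinguishes this from the unconstrained Hamming-centroid case, is that the $k$ centers are \emph{entangled} through $\cR$: the optimum bit of $\tilde\bfc_j$ at coordinate $i$ depends on the bits of $\tilde\bfc_{j'}$ for $j'\neq j$ through the relation $R_i$. Hence one cannot reconstruct each cluster centre independently from its own samples: all $k$ estimates must be produced jointly, and the weighted discrepancy used in the coordinate-wise oracle requires reliable estimates of all cluster sizes, not just of the cluster being analysed. The technical heart of the argument is therefore showing that the coordinate-wise choice $t\in R_i$ based on the \emph{empirical} weighted majorities remains $(1+\varepsilon)$-competitive with the choice based on the true majorities, simultaneously across all $m$ coordinates, with failure probability small enough to survive a union bound over the $(1/\varepsilon)^{O(k\cdot s)}$ labellings considered.
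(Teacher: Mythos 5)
Your first ingredient is essentially the paper's Sampling Lemma: at each coordinate $i$ you pick the tuple of $R_i$ minimizing a weighted discrepancy against empirical cluster statistics, with weights being (estimates of) the cluster sizes, and you correctly identify that the centers must be reconstructed jointly because $\cR$ entangles them. That part matches the paper's construction of $\best{\RR}(S_1,\ldots,S_k,w_1,\ldots,w_k)$ and its analysis. The gap is in your second step, the treatment of clusters you call \emph{light}. You claim that the total contribution of clusters with fewer than $\varepsilon n/k$ points ``can be absorbed into the additive $\varepsilon\cdot\mathsf{OPT}$'' once a constant-factor estimate of $\mathsf{OPT}$ is available. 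This is false: a light cluster has few points, but those points may be at Hamming distance up to $m$ from every other center, so dropping its center costs up to $\varepsilon n m$, which is not bounded by any multiple of $\mathsf{OPT}$ (take $k=2$, $n-1$ copies of $\mathbf{0}$ and one copy of $\mathbf{1}$; $\mathsf{OPT}=0$ but your scheme pays $m$). Knowing or lower-bounding $\mathsf{OPT}$ does not help — the light-cluster points still need their own centers, and a single round of uniform sampling from $X$ will never hit a cluster of size $o(n)$ often enough to reconstruct its center. (Separately, invoking Theorem~\ref{thm:mainrclusteringdet} even with constant $\varepsilon$ costs $m\cdot n^{\OO(k^2)}$ time, which already destroys the linear-time bound.)

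This missing piece is precisely what occupies the bulk of the paper's proof. The paper first reduces to $(k,\alpha)$-irreducible instances (if dropping a center changes the optimum by less than a $(1+\alpha)$ factor, recurse on $k-1$ centers), and then runs an iterative peeling: after committing to centers $C_1$ of the clusters that are currently heavy, it sorts the remaining points by distance to $C_1$ and branches on either (a) removing the closest half of the points (safe because they would be assigned to $C_1$ anyway, by the ball argument of Lemma~\ref{lem:safetoremove}), or (b) sampling again to find the next batch of heavy clusters among the survivors. Irreducibility guarantees that after removing the close points, some yet-unfound cluster occupies a constant fraction of what remains, so sampling succeeds; the notion of $\delta$-extendable pairs (Lemmas~\ref{lem:bestC2} and~\ref{lem:findmore}) guarantees that committing to approximate centers for the heavy clusters does not destroy the existence of $\cR$-consistent centers for the remaining ones. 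The recursion $T(n,k)=T(n/2,k)+cT(n,k-k')+c'nm$ is what yields linear time. Without some version of this machinery (or another mechanism for locating arbitrarily small clusters without violating the relational constraints), your plan does not establish the theorem.
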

In other words, the algorithm outputs a set of vectors $C=\{\bfc_1,\ldots,\bfc_k\}\subseteq \{0,1\}^m$ satisfying $\cR$ such that  $\sum_{\bfx\in X} \hdist(\bfx,C)\leq (1+\epsilon)\cdot OPT$, where $OPT$ is the value of the optimal solution.

Theorems~\ref{thm:mainrclusteringdet} and \ref{thm:mainrclustering} have a number of interesting applications.


\subsection{Applications of the main theorem}
Binary matrix factorization is the following problem.  Given a binary  $m\times n$ matrix, that is a matrix with entries from domain $\{0,1\}$, 
\begin{equation*}\bfA=
\begin{pmatrix}
a_{11}&a_{12}& \ldots& a_{1n}\\
a_{21}&a_{21}& \ldots& a_{2n}\\
\vdots & \vdots & \ddots & \vdots\\ 
a_{m1}&a_{m2} &\ldots &a_{mn}
\end{pmatrix}=(a_{ij})\in \{0,1\}^{m\times n}, 
\end{equation*}
the task is to find a ``simple'' binary  $m\times n$ matrix $\bfB$  which  approximates $\bfA$ subject to some specified constrains. One of the   most widely studied error measures is the \emph{Frobenius norm}, which for the  matrix $\bfA$ is defined as
\begin{equation*}
\|\bfA\|_F=\sqrt{\sum_{i=1}^m\sum_{j=1}^n|a_{ij}|^2}.
\end{equation*}
Here the sums are taken over $\mathbb{R}$. Then we want to find  a matrix $\bfB$ with certain properties such that 
\[\|\bfA-\bfB\|_F^2\] is minimum.

In particular, two variants of the problem were studied in the literature, in the first variant on seeks for a matrix $\bfB$ of small \GF-rank. In the second variant, matrix $\bfB$ should be of Boolean rank $r$. Depending on the selection of the rank, we obtain two different optimization problems.

 \paragraph{Low GF(2)-Rank Approximation.} 
 Here 
 the task is to approximate a given binary matrix $\bfA$ by $\bfB$ that has \GF-rank $r$.

\defproblem{\bmfgfr}%
{An $m\times n$-matrix $\bfA$ over \GF and a positive integer $r$.}%
{Find a binary  $m\times n$-matrix $\bfB$   with \GFrank$(\bfB)\leq r$ such that $\|\bfA-\bfB\|_F^2$ is minimum.
}

\paragraph{Low Boolean-Rank Approximation.}
Let $\bfA$ be a binary $m\times n$ matrix. Now we consider the elements of $\bfA$ to be \emph{Boolean} variables. 
The \emph{Boolean rank} of $\bfA$ is the minimum $r$ such that $\bfA=\bfU\wedge \bfV$ for a Boolean $m\times r$ matrix $\bfU$ and a Boolean $r\times n$ matrix $\bfV$, where the product is Boolean, that is,  the logical $\wedge$ plays the role of multiplication and $\vee$ the role of sum. Here  $0\wedge 0=0$, $0 \wedge 1=0$, $1\wedge 1=1$ , $0\vee0=0$, $0\vee1=1$, and  $1\vee 1=1$.  
Thus the  matrix product is over the Boolean semi-ring $({0, 1}, \wedge, \vee)$. This can be equivalently expressed
as the  normal matrix product with addition defined as $1 + 1 =1$. Binary matrices equipped with such algebra are called \emph{Boolean
matrices}.

\defproblem{\bmfbr}%
{A Boolean $m\times n$ matrix $\bfA$ and a positive integer $r$.}%
{Find  a Boolean $m\times n$ matrix $\bfB$ of Boolean rank at most $r$ such that  $\|\bfA-\bfB\|_F^2$ is minimum. 
}

Low-rank matrix approximation problems can be also treated as special cases of \rclustering.

\begin{lemma}\label{lem:matrixFasRClust} For any instance $(\bfA,r)$ of  \bmfgfr, 
 one can  construct in time $\OO(m+n+2^{2r})$ an instance $(X,k=2^r,\cR)$ of \rclustering\ 
with the following property. 
Given any $\alpha$-approximate solution $C$ of $(X,k, \cR)$,  an $\alpha$-approximate solution $\bfB$ of $(\bfA,r)$ can be constructed in time $\OO(rmn)$ and vice versa.
%
%
\end{lemma}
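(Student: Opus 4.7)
The plan is to observe that a binary matrix $\bfB$ has $\GFrank(\bfB) \le r$ iff its columns all lie in some GF$(2)$-subspace of $\{0,1\}^m$ of dimension at most $r$. Such a subspace has at most $2^r$ elements, and in fact we can always pad it so it has exactly $2^r = k$ elements (enumerated as $\bfc_s$ for $s \in \{0,1\}^r$) by allowing repeats of the zero vector if the true dimension is smaller. So I would set $X$ to be the set of columns of $\bfA$, take $k = 2^r$, and design the relations $\cR$ so that the sets $C$ satisfying $\cR$ are precisely the families $(\bfc_s)_{s \in \{0,1\}^r}$ that arise from a linear map $\{0,1\}^r \to \{0,1\}^m$ over GF$(2)$.

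The key observation is that a tuple $(\bfc_s)_{s \in \{0,1\}^r}$ comes from such a linear map iff for every coordinate $i \in \{1,\ldots,m\}$ the $k$-tuple $(\bfc_s[i])_{s \in \{0,1\}^r}$ is a \emph{linear functional} on $\{0,1\}^r$, i.e.\ of the form $s \mapsto \langle v, s\rangle \bmod 2$ for some $v \in \{0,1\}^r$. I would therefore define, once and for all, the $k$-ary relation
\[
R = \bigl\{ (\langle v,s\rangle \bmod 2)_{s \in \{0,1\}^r} \;:\; v \in \{0,1\}^r \bigr\} \subseteq \{0,1\}^k,
\]
which has exactly $2^r$ tuples of length $2^r$, and set $R_1 = R_2 = \cdots = R_m = R$. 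This relation can be built in $\OO(2^{2r})$ time, and together with the trivial pointer-level extraction of the columns of $\bfA$ accounts for the claimed $\OO(m+n+2^{2r})$ construction.

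For the correspondence, given a solution $C = \{\bfc_1,\ldots,\bfc_k\}$ to the clustering instance (satisfying $\cR$), I build $\bfB$ by replacing each column $\bfA_j$ by its Hamming-nearest vector in $C$; then by construction $\GFrank(\bfB) \le r$ (its columns all lie in the subspace described by $C$) and $\|\bfA - \bfB\|_F^2 = \sum_{\bfx \in X} \hdist(\bfx,C)$. In the other direction, given $\bfB$ with $\GFrank(\bfB) \le r$, I take $C$ to be the $\le 2^r$ elements of the column-space of $\bfB$, padded with zeros to cardinality $k$ and indexed by $\{0,1\}^r$ via a choice of basis; this $C$ satisfies $\cR$, and $\sum_{\bfx \in X}\hdist(\bfx,C) \le \|\bfA-\bfB\|_F^2$ because each column of $\bfA$ is at least as close to its nearest vector in $C$ as to the corresponding column of $\bfB \in C$. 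Thus the two optimum values coincide, and an $\alpha$-approximation is transferred in either direction, with the transfer itself running in $\OO(rmn)$ time (finding the nearest vector in $C$ for each column, or computing $r$ basis columns and their $2^r$ combinations).

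The arguments are essentially mechanical once the relation $R$ is identified; the only mild subtlety, which I would handle up front, is the padding convention that lets us insist on exactly $k = 2^r$ vectors even when the true GF$(2)$-rank of $\bfB$ is strictly less than $r$, so that the same fixed-arity relation $R$ captures all subspaces of dimension \emph{at most} $r$.
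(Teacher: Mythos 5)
Your proposal is correct and follows essentially the same route as the paper: the same relation $R$ (your tuples $(\langle v,s\rangle \bmod 2)_{s\in\{0,1\}^r}$ are exactly the rows of the paper's $\Lambda^\intercal\Lambda$), the same identification of $X$ with the columns of $\bfA$, and the same two-directional cost-preserving translation between $C$ and $\bfB$. Your explicit handling of the padding when $\GFrank(\bfB)<r$ is a detail the paper's sketch glosses over, but it does not change the argument.
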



\begin{proof}[Proof Sketch]
Observe that if $\GFrank(\bfB)\leq r$, then $\bfB$ has at most $2^r$ pairwise distinct columns, because each column is a linear combination of at most $r$ vectors of a basis of the column space of $\bfB$.  Also the task of \bmfgfr{} can equivalently be stated as follows: find vectors $\bfs_1,\ldots,\bfs_r\in\{0,1\}^m$ over \GF{} such that 
$$\sum_{i=1}^n\min\{\hdist(\bfs,\bfa_i)\mid \bfs~\text{is a linear combination of}~\bfs_1,\ldots,\bfs_r\}$$ is minimum, where
$\bfa_1,\ldots,\bfa_n$ are the columns of $\bfA$. 
Respectively, to  encode an instance of \bmfgfr\ as an instance of \rclustering, 
we construct the following relation $R$. Set $k=2^r$.
Let $\Lambda=(\mathbf{\lambda}_1,\ldots,\mathbf{\lambda}_k)$ be the $k$-tuple composed by  all pairwise distinct vectors of $\{0,1\}^r$.  Thus each element $\lambda_i\in \Lambda$ is a binary $r$-vector. We define 
 $R=\{(x^\intercal \mathbf{\lambda}_1,\ldots,x^\intercal\mathbf{\lambda}_k)\mid x\in\{0,1\}^r\}.$  Thus $R$ consists of $2^r$ $k$-tuples and every $k$-tuple in $R$ is a row of the matrix $\Lambda^\intercal \cdot \Lambda$.
Now we define $X$ to be the set of columns of $\bfA$ and for each $i\in [m]$, $R_i=R$. Note that since all $R_i$ are  equal, we can construct and keep just one copy of $R$. 

One can show that 
if $B$ is a solution to $(\bfA,r)$, then all linear combinations $C$ of a basis of the column vectors of $\bfB$ is a solution to $(X,k,\RR)$ and 
the cost of $C$ is at most the cost of the solution $\bfB$ of $(\bfA,r)$. 
Similarly,   if $C$ is a solution to $(X,k,\RR)$, then solution $\bfB$ to $(\bfA,r)$ is constructed from $C$  by taking the the $j$-{th} column of $\bfB$ be 
equal to the vector in $C$ which is closest to the $j$-{th} column vector of $\bfA$. 
Clearly the  cost of $\bfB$ is at most the cost of $C$.  It is easy to see that given $\bfB$, one can construct 
$C$ in $\OO(rmn)$ time and vice versa. 
\end{proof}

%

Similarly  for \bmfbr we have the following lemma. 
\begin{lemma}\label{lem:BmatrixFasRClust} For any instance $(\bfA,r)$ of  \bmfbr, 
one can  construct in time $\OO(m+n+2^{2r})$ an instance $(X,k=2^r,\cR)$ of \rclustering\ with the following property. 
Given any $\alpha$-approximate solution $C$ of $(X,k, \cR)$ an $\alpha$-approximate solution $\bfB$ of $(\bfA,r)$ can be constructed in time $\OO(rmn)$ and vice versa.
\end{lemma}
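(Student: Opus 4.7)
The plan is to essentially mimic the encoding used for \bmfgfr, adjusting only the algebraic operation that generates the permissible column patterns. First, I would observe the structural analog of the GF(2) fact: if $\bfB$ has Boolean rank at most $r$, then $\bfB = \bfU \wedge \bfV$ with $\bfU$ of size $m\times r$ and $\bfV$ of size $r\times n$, so each column $\bfb_j$ of $\bfB$ is of the form $\bigvee_{i\,:\,\bfV[i,j]=1}\bfu_i$, i.e.\ a Boolean combination of the $r$ columns $\bfu_1,\ldots,\bfu_r$ of $\bfU$. Hence $\bfB$ has at most $2^r$ distinct columns, one for each subset of $[r]$, and \bmfbr{} is equivalent to finding vectors $\bfu_1,\ldots,\bfu_r\in\{0,1\}^m$ minimizing
\[
\sum_{j=1}^n \min\Bigl\{\hdist(\bfs,\bfa_j)\,\bigm|\, \bfs=\bigvee_{i\in S}\bfu_i \text{ for some } S\subseteq[r]\Bigr\},
\]
where $\bfa_1,\ldots,\bfa_n$ are the columns of $\bfA$.

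Next I would construct the relational instance. Set $k=2^r$ and let $\Lambda=(\lambda_1,\ldots,\lambda_k)$ enumerate all distinct vectors of $\{0,1\}^r$; each $\lambda_j$ picks out which columns of $\bfU$ to combine. Defining the Boolean inner product $\bfx\odot\lambda := \bigvee_{i=1}^r (x_i\wedge \lambda[i])$, set
\[
R=\bigl\{(\bfx\odot\lambda_1,\ldots,\bfx\odot\lambda_k)\,\bigm|\, \bfx\in\{0,1\}^r\bigr\},
\]
so that $R$ collects the $k$-tuples formed by the rows of the Boolean product $\Lambda^\intercal\wedge\Lambda$. Take $X$ to be the set of columns of $\bfA$ and $R_i=R$ for each $i\in[m]$; again only one copy of $R$ needs to be stored, and the construction takes time $\OO(m+n+2^{2r})$.

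For the correspondence, given a solution $\bfB=\bfU\wedge\bfV$ to $(\bfA,r)$, let $C=\{\bigvee_{i\in S}\bfu_i : S\subseteq[r]\}$, one vector per subset indexed by the $\lambda_j$. Row $i$ of the $m\times k$ matrix whose columns are $C$ is precisely $(\bfx\odot\lambda_1,\ldots,\bfx\odot\lambda_k)$ where $\bfx$ is row $i$ of $\bfU$, so $\langle C,\cR\rangle$. Since every column of $\bfB$ equals some vector in $C$, we have $\sum_j \hdist(\bfa_j,C)\le \|\bfA-\bfB\|_F^2$. Conversely, from any $C=\{\bfc_1,\ldots,\bfc_k\}$ satisfying $\cR$, form $\bfB$ by assigning to its $j$-th column the element of $C$ nearest in Hamming distance to $\bfa_j$; the relational constraint on rows of the matrix $[\bfc_1|\cdots|\bfc_k]$ implies that this matrix has the form $\bfU'\wedge\Lambda$ for some $m\times r$ Boolean matrix $\bfU'$, so $\bfB$ admits a Boolean factorization of rank $\le r$ (select the appropriate columns of $\Lambda$ as the Boolean $\bfV$), and the cost only decreases. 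Both translations clearly run in $\OO(rmn)$ time, which preserves $\alpha$-approximability in both directions.

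The step I expect to require the most care is the converse direction: verifying that any relation-satisfying $C$ really does arise from some $m\times r$ Boolean matrix $\bfU'$ via $\bfc_j=\bigvee_{i\,:\,\lambda_j[i]=1}\bfu'_i$. This follows because the definition of $R$ forces row $i$ of $[\bfc_1|\cdots|\bfc_k]$ to be the vector $(\bfx_i\odot\lambda_1,\ldots,\bfx_i\odot\lambda_k)$ for some $\bfx_i\in\{0,1\}^r$, and stacking these $\bfx_i$ produces $\bfU'$; once this is in hand, choosing $\bfV$ so that its $j$-th column is $\lambda_{\pi(j)}$ for the index $\pi(j)$ achieving $\min_{\bfc\in C}\hdist(\bfa_j,\bfc)$ exhibits $\bfB$ as a Boolean product of rank $\le r$.
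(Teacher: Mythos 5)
Your proof is correct and follows essentially the same route the paper takes: the paper explicitly states that Lemma~\ref{lem:BmatrixFasRClust} "essentially repeats the proof of Lemma~\ref{lem:matrixFasRClust}" with GF(2) sums and products replaced by $\vee$ and $\wedge$, so that each $k$-tuple of $R$ is a row of $\Lambda^\intercal\wedge\Lambda$ — which is exactly the construction you give via the Boolean inner product $\odot$. Your added detail on the converse (recovering $\bfU'$ row by row from the relational constraint and choosing the columns of $\Lambda$ to form $\bfV$) is a helpful elaboration that the paper leaves implicit.
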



The proof essentially repeats the proof of Lemma~\ref{lem:matrixFasRClust}. We are now working with the Boolean semi-ring $({0, 1}, \wedge, \vee)$ but still we can use exactly the same trick to  reduce  \bmfbr{} to \rclustering. The only difference is that  \GF{} summations and products are replaced by $\vee$ and $\wedge$ respectively in the definition of the relation $R$.  Thus every   $k$-tuple in $R$ is a row of the matrix $\Lambda^\intercal \wedge \Lambda$.

Hence  to design approximation schemes for 
  \bmfbr and \bmfgfr, it suffice  to give an approximation scheme for \rclustering. The main technical contribution  of 
the paper is the proof of the following theorem.



 For  $\alpha>1$, we say that an algorithm is an $\alpha$-approximation algorithm for
the low-rank approximation problem if for a matrix $A$ and an integer $r$ it  outputs a matrix $B$ satisfying the required constrains such that $\|\bfA -\bfB\|_F^2\leq \alpha \cdot \|\bfA -\bfB_r\|_F^2$, where
 $\bfB_r= \argmin_{\rank(\bfB_r)=r} \|\bfA -\bfB_r\|_F^2$.
By Theorems~\ref{thm:mainrclusteringdet} and \ref{thm:mainrclustering} and Lemmata~\ref{lem:matrixFasRClust} and~\ref{lem:BmatrixFasRClust}, we obtain the following.
\begin{corollary}\label{thm:matrixfactor}
$(i)$ There is a deterministic algorithm  which for a given instance of \bmfbr (\bmfgfr) and $\epsilon>0$ in time  $ m\cdot n^{\OO(\frac{2^{2r}}{\epsilon^2}\log \frac{1}{\epsilon})}$ outputs a $(1+\epsilon)$-approximate solution.
$(ii)$ There is an algorithm  which for a given instance of \bmfbr (\bmfgfr) and $\epsilon>0$ in time 
$  \left(\frac{1}{\epsilon} \right)^{\left(\frac{2^{\OO(r)}}{\epsilon^2}\log \frac{1}{\epsilon}\right)}\cdot n\cdot m$ 
outputs a $(1+\epsilon)$-approximate solution with probability at least $(1-\frac{1}{e})$.
\end{corollary}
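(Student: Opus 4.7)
The plan is to derive both parts of Corollary~\ref{thm:matrixfactor} as essentially mechanical consequences of our prior work: the reductions of Lemma~\ref{lem:matrixFasRClust} and Lemma~\ref{lem:BmatrixFasRClust} translate \bmfgfr and \bmfbr into \rclustering with $k=2^{r}$, and the approximation schemes Theorem~\ref{thm:mainrclusteringdet} (deterministic) and Theorem~\ref{thm:mainrclustering} (randomized) are invoked as black boxes.

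Concretely, given an input $(\bfA,r)$, I would first apply Lemma~\ref{lem:matrixFasRClust} (respectively Lemma~\ref{lem:BmatrixFasRClust}) to build, in time $\OO(m+n+2^{2r})$, the associated clustering instance $(X,k,\cR)$ with $k=2^{r}$. I would then feed this into Theorem~\ref{thm:mainrclusteringdet} for part $(i)$ or into Theorem~\ref{thm:mainrclustering} for part $(ii)$, obtaining a set $C$ of $k$ vectors whose cost on $(X,k,\cR)$ is within a factor $(1+\epsilon)$ of optimal. Finally, I would convert $C$ back to a matrix $\bfB$ in time $\OO(rmn)$ using the reverse direction of the relevant lemma; by the approximation-preserving property of the reduction, $\bfB$ is itself a $(1+\epsilon)$-approximate solution to $(\bfA,r)$, and in the randomized case it inherits the success probability at least $1-\tfrac{1}{e}$.

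What remains is a pure bookkeeping step: substituting $k=2^{r}$ into the two running-time expressions. For part $(i)$, the $n^{\OO(k^{2}/\epsilon^{2}\log(1/\epsilon))}$ factor of Theorem~\ref{thm:mainrclusteringdet} immediately becomes $n^{\OO(2^{2r}/\epsilon^{2}\log(1/\epsilon))}$, and the $\OO(m+n+2^{2r})$ and $\OO(rmn)$ overheads from the reductions are dominated. I expect the slightly more delicate step to be part $(ii)$: after substituting $k=2^{r}$ one gets a product of the form $2^{\OO(2^{4r}/\epsilon^{2}\log(1/\epsilon))}\cdot(1/\epsilon)^{\OO(2^{r}/\epsilon^{2}\log(1/\epsilon))}\cdot n\cdot m$, and the first factor must be rewritten as a power of $1/\epsilon$. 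Using the identity $2=(1/\epsilon)^{1/\log_{2}(1/\epsilon)}$, the $\log(1/\epsilon)$ in the exponent cancels and the first factor collapses into $(1/\epsilon)^{\OO(2^{4r}/\epsilon^{2})}$, which can then be absorbed together with the second factor into a single term $(1/\epsilon)^{\OO(2^{\OO(r)}/\epsilon^{2}\log(1/\epsilon))}$, matching the bound claimed in the corollary. Again the reduction overheads are negligible against the main term.

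In summary, the only nontrivial element in the whole argument is the exponent manipulation in part $(ii)$; the structural content is fully encapsulated by the earlier lemmas and theorems, and no new combinatorial or probabilistic ideas are needed.
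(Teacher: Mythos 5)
Your proposal is correct and follows exactly the paper's (implicit) argument: the corollary is stated as an immediate consequence of Lemmata~\ref{lem:matrixFasRClust} and~\ref{lem:BmatrixFasRClust} together with Theorems~\ref{thm:mainrclusteringdet} and~\ref{thm:mainrclustering}, with $k=2^r$ substituted into the running times. Your exponent manipulation for part $(ii)$, rewriting $2^{\OO(k^4\epsilon^{-2}\log(1/\epsilon))}$ as $(1/\epsilon)^{\OO(2^{4r}\epsilon^{-2})}$ and absorbing it into $(1/\epsilon)^{2^{\OO(r)}\epsilon^{-2}\log(1/\epsilon)}$, is the right (and only) bookkeeping step needed.
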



Let us observe that our results also yield randomized approximation scheme for the ``dual'' maximization versions of the low-rank matrix approximation problems. In these problems one 
wants to  maximize the number of elements that are the same in $A$ and $B$ or, in other words, to maximize the value of $nm-\|\bfA-\bfB\|_F^2$. 
 It is easy to see that for every binary $m\times n$ matrix $\bfA$ there is a binary matrix $\bfB$ with  \GFrank$(\bfB)\leq 1$ such that $\|\bfA-\bfB\|_F^2\leq mn/2$.
 This observation implies that  
 \begin{align*}
(mn-\|\bfA-\bfB^*\|_F^2)-(mn-\|\bfA-\bfB\|_F^2)=&\|\bfA-\bfB\|_F^2-\|\bfA-\bfB^*\|_F^2\leq\\
 \varepsilon \|\bfA-\bfB^*\|_F^2\leq &\varepsilon (mn-\|\bfA-\bfB^*\|_F^2).
\end{align*}

\subsection{Binary clustering and variants} The special case of 
\rclustering where no constrains are imposed on the centers of the clusters is \probClust.


\defproblem{\probClust}{A set $X\subseteq \{0,1\}^m$ of $n$ vectors and  a positive integer $k$. }{Find a set  $C=\{\bfc_1,\ldots,\bfc_k\}\subseteq \{0,1\}^m$  minimizing the sum 
$\sum_{\bfx\in X} \hdist(\bfx,C)$.}

Equivalently, in \probClust we seek to partition a  set of binary vectors $X$ into $k$ clusters $\{X_1,\ldots,X_{k}\}$ such that after  we  assign to each cluster its mean, which is a binary vector $c_i$ (not necessarily  from $X$)  closest to $X_i$, then the sum  
$\sum_{i=1}^{k}\sum_{x\in X_i}\hdist(\bfc_i,\bfx) $ is minimum. 

Of course,  \rclustering generalizes \probClust:  For given instance $(X,k)$ of \probClust    by taking sets $R_i$, $1\leq i\leq m$,  consisting of all possible $k$-tuples   $\{0,1\}^k$, we construct in time $\OO(n+m+ 2^k)$ an  instance $(X,k,\cR)$ of  \rclustering equivalent to $(X,k)$. 
Note that since all the sets $R_i$ are the same, it is sufficient to keep just one copy of the set for the instance $(X,k,\cR)$.  
That is, any $(1+\varepsilon)$-approximation to one instance is also a   $(1+\varepsilon)$-approximation to another.  Theorems~\ref{thm:mainrclusteringdet} and \ref{thm:mainrclustering} implies that 


\begin{corollary}\label{thm:binarycluster}
$(i)$ There is a deterministic algorithm  which for a given instance of \probClust and $\epsilon>0$ in time  $ m\cdot n^{\OO(\frac{k^2}{\epsilon^2}\log \frac{1}{\epsilon})}$ outputs a $(1+\epsilon)$-approximate solution.
$(ii)$ There is an algorithm  which for a given instance of  \probClust  and $\epsilon>0$ in time $ \left(\frac{1}{\epsilon} \right)^{\OO\left(\frac{k^4}{\epsilon^2}\log \frac{1}{\epsilon}\right)}\cdot n\cdot m$  outputs a $(1+\epsilon)$-approximate solution with probability at least $(1-\frac{1}{e})$. 
\end{corollary}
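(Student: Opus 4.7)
The plan is to reduce \probClust\ to \rclustering\ and then invoke Theorems~\ref{thm:mainrclusteringdet} and \ref{thm:mainrclustering}. The reduction is sketched immediately above the corollary statement, so the work is to make it precise and verify that the running times claimed in the corollary come out correctly.

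First, given an instance $(X,k)$ of \probClust, I would construct an instance $(X,k,\cR)$ of \rclustering\ by setting $R_i = \{0,1\}^k$ for every $i\in\{1,\ldots,m\}$. Since all $R_i$ coincide, the construction takes $\OO(n+m+2^k)$ time but only one copy of the relation needs to be stored. Any $k$-tuple of $m$-dimensional binary vectors trivially satisfies $\cR$, so the set of feasible center sets $C=\{\bfc_1,\ldots,\bfc_k\}\subseteq\{0,1\}^m$ in the constrained problem is identical to the set of feasible solutions for \probClust, and the two objective functions coincide pointwise. Hence every $(1+\varepsilon)$-approximate solution to $(X,k,\cR)$ is a $(1+\varepsilon)$-approximate solution to $(X,k)$, and vice versa.

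For part $(i)$, I would apply Theorem~\ref{thm:mainrclusteringdet} to $(X,k,\cR)$ with the same $\epsilon$. This yields a deterministic algorithm running in time $m\cdot n^{\OO(k^2/\epsilon^2\cdot \log(1/\epsilon))}$ that outputs a $(1+\epsilon)$-approximate solution, which by the equivalence above is a $(1+\epsilon)$-approximate solution to the original \probClust\ instance. The overhead of the reduction is absorbed into the running time bound.

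For part $(ii)$, I would apply Theorem~\ref{thm:mainrclustering} to $(X,k,\cR)$. Its running time is
\[
2^{\OO\left(\frac{k^4}{\varepsilon^2}\log\frac{1}{\varepsilon}\right)}\cdot\left(\frac{1}{\varepsilon}\right)^{\OO\left(\frac{k}{\varepsilon^2}\log\frac{1}{\varepsilon}\right)}\cdot n\cdot m.
\]
Assuming $\varepsilon<1/2$ (otherwise there is nothing to prove, as any single-point solution trivially achieves the bound up to padding constants), we have $(1/\varepsilon)^{t}\geq 2^{t}$ for every $t\geq 0$, so the first factor is dominated by $(1/\varepsilon)^{\OO(k^4/\varepsilon^2\log(1/\varepsilon))}$, and this term also dominates the second factor because $k^4\geq k$. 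Combining yields the claimed bound $(1/\varepsilon)^{\OO(k^4/\varepsilon^2\log(1/\varepsilon))}\cdot n\cdot m$, and the success probability is inherited as at least $1-1/e$.

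There is essentially no obstacle; the only thing to be careful about is bookkeeping in the running-time comparison for part $(ii)$, where the two separate multiplicative factors in Theorem~\ref{thm:mainrclustering} must be merged into the single $(1/\varepsilon)^{\OO(k^4/\varepsilon^2\log(1/\varepsilon))}$ bound stated in the corollary. The reduction correctness is trivial because imposing the all-ones relations $R_i=\{0,1\}^k$ places no restriction on $C$.
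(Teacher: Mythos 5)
Your proposal is exactly the paper's argument: impose the vacuous relations $R_i=\{0,1\}^k$ to view \probClust{} as an instance of \rclustering{} and invoke Theorems~\ref{thm:mainrclusteringdet} and~\ref{thm:mainrclustering}, absorbing the $2^{\OO(k^4/\epsilon^2\log(1/\epsilon))}$ factor into $(1/\epsilon)^{\OO(k^4/\epsilon^2\log(1/\epsilon))}$ for $\epsilon<1/2$. The only inessential slip is your aside that a single-point solution handles $\epsilon\geq 1/2$ (it does not give a $(1+\epsilon)$-approximation in general); the standard fix is simply to run the algorithm with $\epsilon$ capped at $1/2$, which still meets the stated time bound.
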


  Theorems~\ref{thm:mainrclusteringdet} and \ref{thm:mainrclustering} can be used for many other variants of binary clustering.  Let us briefly mention some other clustering problems which fit in our framework.
  
For example, the following generalization  of binary clustering can be formulated as \rclustering.
Here the centers of clusters are linear subspaces of bounded dimension $r$. (For $r=1$ this is \rclustering and for $c=1$ this is \bmfgfr.)
More precisely, in 
\probProjCl we are given a set $X\subseteq \{0,1\}^m$ of $n$ vectors and  positive integers $k$ and $r$. The task is to find a family of $r$-dimensional linear subspaces   $C=\{C_1,\ldots,C_k\} $  over \GF minimizing the sum 
\[\sum_{\bfx\in X} \hdist(\bfx,\cup_{i=1}^kC).\]

%

To see that \probProjCl is the special case of  \rclustering, we observe that the condition that $C_i$ is a $r$-dimensional subspace over \GF can be encoded (as in Lemma~\ref{lem:matrixFasRClust}) by $2^r$ constrains.  Similar arguments hold also for the variant of  \probProjCl when instead of $r$-dimensional subspaces we use $r$-flats
 ($r$-dimensional affine subspaces).

  In \textsc{Correlative $k$-Bicluster Editing}, we are given a bipartite graph and the task is to change the minimum number of adjacencies such that the resulting graph is the disjoint union of at most $k$ complete bipartite graphs \cite{amit2004bicluster}.  This is the special case of   \rclustering where each constrain $R_i$ consists of $k$-tuples and each of the $k$-tuples contains exactly one element $1$ and all other elements $0$. 
   Another problem which  can be reduced to  \rclustering is   the following  variant of the {\sc Biclustering} problem \cite{WulffUB13}.  
  Here  for matrix $\bfA$, and positive integers $k,r$, we want to find a binary  $m\times n$-matrix $\bfB$  such that $\bfB$ has at most $r$ pairwise-distinct rows, $k$ pairwise distinct-columns
such that 
   $\|\bfA-\bfB\|_F^2$ is minimum.

\subsection{Previous work}
\paragraph{Low-rank binary matrix approximation}

Low-rank matrix approximation   is a fundamental and extremely well-studied problem. 
 When the measure of the similarity between $\bfA$ and $\bfB$  is the Frobenius norm of matrix $\bfA-\bfB$, the 
rank-$r$ approximation (for any $r$) of matrix $\bfA$ can be efficiently found via the singular value decomposition (SVD).  
This is an extremely well-studied problem and we refer to  surveys and books  \cite{RavindrV08,MahonyM11,Woodr14} for an overview of this topic. 
However, SVD does not guarantee to find an optimal solution in the case when additional structural constrains on  the low-rank approximation matrix $\bfB$ (like being non-negative or  binary) are imposed.
In fact, most of the  variants of low-rank approximation with additional constraints are NP-hard.

For long time  the predominant approaches for solving  such low-rank approximation problems with NP-hard  constrains were either heuristic methods based on convex relaxations or optimization methods.  
Recently, there has been considerable interest
in the  rigorous   analysis  of such problems   \cite{AroraGKM12,ClarksonW15,Moitra16,RazenshteynSW16}.

 \paragraph{\bmfgfr} 
 arises naturally in applications involving binary data sets and 
serve as  important tools in dimension reduction for high-dimensional data sets with binary attributes, see  \cite{DanHJWZ15,Jiang2014,GutchGYT12,Koyuturk2003,PainskyRF16,Shen2009,Yeredor11} for further references. Due to the numerous applications
of  low-rank binary 
matrix approximation, various heuristic algorithms for these problems could be found in the literature
 \cite{DBLP:conf/icdm/JiangH13a,Jiang2014,fu2010binary,Koyuturk2003,Shen2009}. 

When it concerns a rigorous analysis of  algorithms for  \bmfgfr, the previous results  include the following.  
Gillis and Vavasis ~\cite{GillisV15} and Dan et al. \cite{DanHJWZ15} have shown that \bmfgfr is NP-complete  for every $r\geq1$. A subset of the authors studied parameterized algorithms for \bmfgfr in \cite{abs-1803-06102}.

The first approximation algorithm for \bmfgfr is due to 
Shen et al. \cite{Shen2009} who gave a $2$-approximation algorithm for the special case of $r=1$. Shen et al. \cite{Shen2009} formulated the rank-one problem as Integer Linear Programming and proved that its relaxation gives a 2-approximation. 
They also observed that the  efficiency of their algorithm can be improved by reducing the linear program to the Max-Flow problem. Jiang et al. \cite{Jiang2014} found a much simpler algorithm by observing that for the rank-one case, simply selecting the best column of the input matrix yields a 2-approximation. Bringmann et al.  \cite{ BringmannKW17} developed a  2-approximation algorithm for $r=1$ which runs in sublinear time. Thus even for the special case $r=1$ no polynomial time approximation scheme was known prior to our work.

For rank $r>1$,  Dan et al. \cite{DanHJWZ15} have shown that a $(r/2 +1 +\frac{r}{2(2^r-1)})$-approximate solution can be formed from $r$ columns of the input matrix $\bfA$. Hence by trying all possible $r$ columns of $\bfA$,  we can obtain    $r/2 +1 +\frac{r}{2(2^r-1)}$-approximation in time $n^{\OO(r)}$.  Even the  existence of a   linear time algorithm with a constant-factor approximation for $r>1$ was open. 

%
%

\paragraph{\probBFact}
 in case of 
  $r=1$ coincides with \probFact. Thus   by the results of Gillis and Vavasis ~\cite{GillisV15} and Dan et al. \cite{DanHJWZ15} \probBFact is  NP-complete already for $r=1$. 
 While computing  \GF-rank (or rank over any other field) of a matrix can be performed in polynomial time, 
deciding whether the Boolean rank of a given matrix is at most $r$ is already an NP-complete problem. 
 This follows from the well-known relation between the Boolean rank and covering edges of a bipartite graph by bicliques \cite{GregoryPJL91}. 
Thus for fixed $r$, the problem is solvable in time $2^{2^{\OO(r)}}(nm)^{\OO(1)}$ ~\cite{GrammGHN08,FominGP18} 
  and unless Exponential Time Hypothesis (ETH) fails, it cannot be solved in time $2^{2^{o(r)}}(nm)^{\OO(1)}$ \cite{ChandranIK16}.

  There is a large body of work on \probBFact, especially in the data mining and knowledge discovery communities. 
In data mining, matrix decompositions are often used to produce concise representations of data. 
Since much of the real data such as word-document data  is binary or even Boolean in nature, 
Boolean low-rank approximation could provide a deeper insight into 
the semantics associated with the original matrix. There is a big body of work done on \probBFact, see e.g.  \cite{Bartl2010,BelohlavekV10,DanHJWZ15,LuVAH12,MiettinenMGDM08,DBLP:conf/kdd/MiettinenV11,DBLP:conf/icde/Vaidya12}. In the literature the problem appears under different names like \textsc{Discrete Basis Problem} \cite{MiettinenMGDM08} or 
\textsc{Minimal Noise Role Mining Problem} \cite{VaidyaAG07,LuVAH12,Mitra:2016}.

Since for $r=1$ \probBFact is equivalent to  \probFact, the 2-approximation algorithm for \probFact in the case of $r=1$ is also a 2-approximation algorithm for  \probBFact. 
For rank $r>1$,  Dan et al. \cite{DanHJWZ15} described a procedure which produces a  $2^{r-1} +1$-approximate solution to the problem. 

\textbf{Let us note  that independently   Ban et al.~\cite{2018arXiv180706101B} obtained a very similar algorithmic result for low-rank binary approximation. Their algorithm runs in time $\left(\frac{1}{\epsilon}\right)^{2^{\OO(r)}/\epsilon^2} n \cdot m \cdot \log^{2^r} n$. Moreover, they also obtained a lower bound of $2^{2^{\delta r}}$ for a constant $\delta$ under  Small Set Expansion Hypothesis and Exponential Time Hypothesis.   Surprisingly, at first glance, the technique and approach in \cite{2018arXiv180706101B} 
to obtain algorithmic result for low-rank binary approximation is similar to that of ours. 
}

\paragraph{\probClust} was introduced by Kleinberg, Papadimitriou,  and Raghavan \cite{KleinbergPR04} as one of the examples of segmentation problems.  Ostrovsky  and Rabani \cite{OstrovskyR02} gave a randomized PTAS for  \probClust. In other words they show that   for any $\gamma>0$ and $0<\varepsilon< 1/8$ there is an  algorithm 
finding an $(1+8\varepsilon)^2$-approximate solution with probability at least $1-n^{-\gamma}$. The 
  running time of the algorithm of  Ostrovsky  and Rabani  is $n^{f(\varepsilon, k)}$ for some function $f$.
  No Efficient Polynomial Time Approximation Scheme (EPTAS), i.e. of running time ${f(\varepsilon, k)}\cdot n^c$, for this problem  was known prior to our work. 
  
   For the dual maximization problem, where one wants to  maximize  $nm-\sum_{i=1}^{k}\sum_{x\in X_i}\hdist(c_i,x) $ a significantly faster approximation is known. 
Alon and Sudakov \cite{AlonS99} gave a randomized EPTAS.
For a fixed $k$ and $\varepsilon>0$ the running time of the $(1-\varepsilon)$-approximation algorithm of Alon and Sudakov is linear in the input length. 
%
%

\probAtMostClust can be seen as  a discrete variant of the well-known \textsc{$k$-Means Clustering}.   
This problem has been studied thoroughly, particularly in the areas of computational geometry and machine learning.  We refer to \cite{DBLP:journals/jacm/AgarwalHV04,DBLP:conf/stoc/BadoiuHI02,KumarSS10}  for further references to the works  on \textsc{$k$-Means Clustering}. In particular, the ideas from the algorithm  for \textsc{$k$-Means Clustering} of Kumar et al. 
\cite{KumarSS10}  form  the starting point of our algorithm for \rclustering.

 \subsection{Our approach.}

 \paragraph{Sampling lemma and deterministic algorithm} 

Our algorithms are based on Sampling Lemma (Lemma~\ref{lem:samplingMain}). 
Suppose we have a relation $R\subseteq \{0,1\}^k$ and weight tuple $\bfw=(w_1,\ldots,w_k)$, where $w_i\geq 0$ for all $i\in \{1,\ldots,k\}$. 
Then Sampling Lemma says that for any $\epsilon>0$, there is a constant  $r=\Theta(\frac{k}{\varepsilon^2} \log \frac{1}{\varepsilon})$ such that for any tuple $\bfp=(p_1,\ldots,p_k)$, $0\leq p_i\leq 1$, $r$ random samples from Bernoulli distribution $B(p_i)$ for each $i\in \{1,\ldots,k\}$ gives a {\em good estimate} of the minimum weighted distance of $\bfp$ from the tuples in $R$.  For more details we refer 
to  Lemma~\ref{lem:samplingMain}.

Here we explain how our sampling lemma works to design a PTAS.
 Let  $J=(X,k,{\cal R}=\{R_1,\ldots,R_{m}\})$ of \rclustering, $\varepsilon>0$ be an instance \rclustering\ and let $C=\{\bfc_1,\ldots,\bfc_k\}$ be an optimum solution to $J$. Let $X_1\uplus X_2\ldots \uplus X_k $ be a partition of $X$ such that $\sum_{\bfx\in X} \hdist(\bfx,C)=\sum_{i=1}^k \sum_{\bfx\in X_i} \hdist(\bfx,\bfc_i)$. Informally, using Sampling Lemma, we prove that there is a constant  $r=\Theta(\frac{k}{\varepsilon^2} \log \frac{1}{\varepsilon})$ such that  given $w_i=\vert X_i\vert$ and $r$ vectors uniformly at random chosen with repetition from $X_i$ for all $i\in \{1,\ldots,k\}$, 
 we can compute in linear time a $(1+\epsilon)$-approximate solution to $J$ (see proof of Lemma~\ref{lem:samplelargecluster0}). This immediately implies a PTAS for the problem.   

 \paragraph{Linear time algorithm (Theorem~\ref{thm:mainrclustering})} 
 
 
The general idea of
our algorithm for \rclustering\ is inspired by the  algorithm of Kumar et al.~\cite{KumarSS10}.
  Very informally,  the  algorithm of Kumar et al.~\cite{KumarSS10} for  \kmclust  is based on repeated sampling and does the following. 
For any (optimum) solution there is a cluster which size is at least $\frac{1}{k}$-th of the  number of input vectors. 
Then when we sample a constant number of 
vectors from the input uniformly at random, with a good probability, the sampled vectors  will be from the largest cluster. 
Moreover, if we sample sufficiently many (but still constant) number of vectors, they not only will belong to the largest cluster with a good probability, but  taking  the mean  of the sample as the center of the whole cluster in the solution, we obtain a vector ``close to optimum''.
This procedure  succeeds if the size of the largest cluster is   a large fraction of the number of vectors we used to sample from. 
Then the main idea behind the algorithm of Kumar et al. to assign vectors at a small distance from the guessed center vectors to their  clusters.  Moreover, once some vectors are assigned to clusters, then the next largest cluster  will be a constant (depending on $k$ and $\epsilon$) fraction the size of  the yet unassigned  vectors. 
With the right choice of parameters, it is possible to show that with a good probability this procedure will be a good approximation to the optimum solution.

On a very superficial level we want to implement a similar procedure: iteratively  sample, identify centers from samples, assign some of the unassigned vectors to the centers, then  sample again, identify centers,  etc.  Unfortunately, it is not that simple. The main difficulty  is that   in  
  \rclustering, even though we could   guess vectors from  the largest cluster, we cannot select a center vector    for this cluster because  the centers of ``future'' clusters
  should satisfy constrains from $\cR$---selection of one center could influence the ``future'' in a very bad way. Since we cannot select a good center, we cannot assign vectors to the cluster, and thus we cannot guarantee that sampling will   find the next cluster. The whole procedure just falls apart!
  
  Surprisingly, the sampling idea still works for   \rclustering but we have to be more careful. The main idea behind our approach is that if we sample all ``big'' clusters simultaneously  and assign the centers to these clusters such that the assigned centers ``partially'' satisfy $\cR$, then with a good probability this choice does not mess up much the solution.  After sampling vectors from all big clusters, we are left with two jobs-- (i) find centers for the clusters sampled simultaneously and $(ii)$ these centers will be a subset of our final solution.
  The condition $(i)$ is guaranteed by our new Sampling Lemma (Lemma~\ref{lem:samplelargecluster}). Towards maintaining condition $(ii)$, we prove that 
  even after finding ``approximately close centers'' for the big clusters, there exist centers for the small clusters which together with 
  the already found centers is a {\em good} approximate solution (i.e, they obey the relations and its cost is small).  
    As far as we succeed in finding with a good probability a subset of  ``good'' center vectors, we assign  some of the remaining input vectors to the clusters around the centers. Then we can proceed iteratively.   
  
Now we explain briefly how to obtain the running time to be linear. In each iteration 
after finding some center vectors we have mentioned that the vectors in the remaining input vectors which are close to already found centers can be {\em safely} assigned to the clusters of the center vectors already found. In fact we show that if the number of such vectors (vectors which are close to already found centers) are at most half the fraction of remaining input vectors, then there exist at least one cluster (whose center is yet be computed) which contains a constant fraction of the remaining set of vectors. In the other case we have that half fraction of the remaining vectors can be assigned to already found centers. This leads to a recurrence relation $T(n,k)=T(\frac{n}{2},k)+cT(n,k-k')+c'n\cdot m$ and $T(n,0)=T(0,k)=1$, where $c$ and $c'$ are constants depending on $k$ and $\epsilon$, and $k'\geq 1$, provided we could find approximate cluster centers from the samples of large clusters 
in linear time.  The above recurrence will solves to $f(k,\epsilon)n\cdot m$, for some function $f$.  
We need to make sure that we can compute cluster centers from the samples. In the case of designing a PTAS, we have already explained that we could compute approximate cluster centers using samples {\em if know the size of each of those clusters}.  In fact we show that if the sizes of large clusters are {\em comparable} and know them {\em approximately}, then we could compute approximate cluster centers in linear time (see Lemma~\ref{lem:samplelargecluster}).

In Section~\ref{sec:prelims}, we give notations, definitions and some known results which we use through out the paper. In Section~\ref{sec:notations} we give notations related to \rclustering. In Section~\ref{sec:samplinglemma}, we prove the important Sampling Lemma, which we use to design both PTAS and linear time randomized approximation scheme for \rclustering.  Then in Section~\ref{sec:ptas}, we prove how Sampling Lemma can be used to get a (deterministic) PTAS for the problem. The subsequent sections are building towards obtaining a linear time approximation scheme for the problem.

\section{Preliminaries}
\label{sec:prelims}

We use ${\mathbb N}$ to denote the set $\{1,2,\ldots\}$. 
For an integer $n\in {\mathbb N}$, we use $[n]$ as a shorthand for $\{1,\ldots,n\}$. 
For a set $U$ and non-negative integer $i$, $2^U$ and $\binom{U}{i}$ denote 
the set of subsets of $U$ and set of $i$ sized subsets of $U$, respectively.  
For a tuple $b=(b_1,\ldots,b_k)\in \{0,1\}^k$ and an index $i\in \{1,\dots, k\}$, we use  $b[i]$ denotes 
the $i$th entry of $t$, i.e, $b[i]=b_i$. 
We use $\log$ to denote the logarithm with base $2$.   

 In the course of our algorithm, we will be construction a solution  iteratively. When we find a set of vectors $C=\{\bfc_1, \dots, \bfc_r\}$, $r<k$, which will be a part of the solution, these vectors should satisfy relations $\RR$. Thus we have to guarantee that for some index set $I\subset\{1,\dots, k\}$ of size $r$, the set of vectors $C$ satisfies the part of $\RR$ ``projected'' on $I$. 
 More precisely, 
\begin{definition}[Projection of $\RR$ on $I$,  $\proj{\RR}{I}$]
Let $R\subseteq \{0,1\}^k$ be a relation and $I=\{i_1,\ldots,i_r\} \subseteq \{1,\dots, k\}$ be a subset of indices, 
where $i_1<i_2<\cdots<i_r$.  We say that a relation  $R'\subseteq \{0,1\}^r$ is a \emph{projection  of $R$ on  $I$}, denoted by $\proj{R}{I}$,  if 
  $R'$  is a  set of $r$-tuples   from 
  $ \{0,1\}^{r}$  such that 
$u=(u_{1},\ldots,u_{r})\in \proj{R}{I}$ if and only if there exists $t\in R$ such that $t[i_j]=u_{j}$ for 
all $j\in \{1,\dots, r\}$. In other words, the tuples of  $\proj{R}{I}$ are obtained from tuples of $R$ by leaving only the entries with coordinates from $I$.
 For a family $\RR=\{R_1,\ldots,R_m\}$ of relations, where $R_i\subseteq \{0,1\}^k$,  
we use $\proj{\RR}{I}$ to denote the family 
$\{\proj{R_1}{I},\ldots,\proj{R_m}{I}\}$.
\end{definition}

Thus a set of vectors $\bfc_1, \dots, \bfc_r\in  \{0,1\}^m$ satisfies $\proj{\RR}{I}$ if and only if for every $\ell\in \{1,\dots, m\}$ there exists $\bft\in R_\ell$ such for every  $j\in \{1,\dots, r\}$,    $\bfc_\ell[j]=\bft[i_j]$. As far as we fix a part of the solution $C=\{\bfc_1, \dots, \bfc_r\}$ and index set $I$ of size $r$, such that $C$ satisfies   $\proj{\RR}{I}$,
we can reduce the family of relations $\RR$ by deleting from each relation $R_i\in \RR$ all $k$-tuples not compatible with $C$ and $I$. 
More precisely, for every $1\leq i \leq m$,  we can leave only  $k$-tuples which projections on $I$ are equal to
$   \{\bfc_1[i], \dots, \bfc_r[i]\}$.  Let the reduced family of relations be  $\rest{\RR}{I}{C}$.
Then in every 
solution $S$ extending $C$, the set of  vectors $S\setminus C$ should satisfy  the projection of $\rest{\RR}{I}{C}$ on $\bar{I}=\{1,\dots, k\} \setminus I$. This brings us to the following definitions. 

\begin{definition}[Reducing relations $\RR$ to  $\rest{\RR}{I}{C}$]
Let $R\subseteq \{0,1\}^k$ be a relation and $I=\{i_1,\ldots,i_r\} \subseteq \{1,\dots, k\}$ be a subset of indices, 
where $i_1<i_2<\cdots<i_r$,  and let $u=(u_{1},\ldots,u_{r})\in \proj{R}{I}$ be an 
  $r$-tuple.  We say that relation $R'\subseteq R$ is \emph{obtained from  $R$ subject to $I$ and $u$} and write $R'=\rest{R}{I}{u}$, if 
  \[
  R'=\{t \in R~ |~ t[i_j]=u_{j}  \mbox{ for all } j\in \{1,\dots, r\}\}.
  \]
  For a set of vectors 
  $C=\{\bfc_1, \dots, \bfc_r\}$, set  $I\subseteq \{1,\dots, k\}$ of size $r$, and a family of relations 
   $\RR=\{R_1, \dots, R_m\}$,
  we denote by $\rest{\RR}{I}{C}$ the family of relations 
  $\{R_1', \dots, R_m'\}$, where $R_i'= \rest{R_i}{I}{(\bfc_1[i],\ldots,\bfc_{r}[i])}$, $1\leq i\leq m$.
  
   \end{definition}

  
%
\begin{definition}[$\RR(I,C)$: Projection of  $\rest{\RR}{I}{C}$ on  $\bar{I}$ ]
For relation  $R\subseteq \{0,1\}^k$,  $r$-sized subset of indices $I\subseteq \{1,\dots, k\}$  and $r$-tuple  $u=(u_{1},\ldots,u_{r})\in \proj{R}{I}$, we  use  $R(I,u)$  to denote the projection of $\rest{R}{I}{u}$ on  $\bar{I}=\{1,\dots, k\} \setminus I$.

For a    family  $\RR=\{R_1,\ldots,R_m\}$ of relations, set of $r\leq k$ vectors $C=\{\bfc_1,\ldots,\bfc_r\}$ from $\{0,1\}^m$, and $r$-sized set of indices $I\subseteq \{1,\dots, k\}$, 
we  use $\RR(I,C)$ to denote the family $\{R_1',\ldots,R_m'\}$, where 
 $R_i'=R_i(I,t_i)=
\proj{\rest{R_{i}}{I}{t_i}}{\overline{I}}$ and $t_i=(\bfc_1[i],\ldots,\bfc_{r}[i])$.
 \end{definition}

 In other words, $R(I,u)$ consists of all $(k-r)$-tuples $v$, such that  ``merging'' of $u$ and $v$ results in a $k$-tuple from $R$.
  In particular, the extension of $u$ and $I$ in $R$ can be generated by ``merging'' $u$ and all vectors of   $R(I,u)$.

We also use $\mathbf{0}$ and $\mathbf{1}$ to denote   vectors with all entries equal to $0$ and $1$, respectively, where the dimension of 
the vectors will be clear from the context. 
For vector $\bfx \in \{0,1\}^m$ and set $X\subset \{0,1\}^m$, we use $\hdist(\bfx,C)$ to denote the minimum Hamming distance (the number of different coordinates) 
between $\bfx$ and  vectors in $C$. 
 For sets $X,Y\subset \{0,1\}^m$, we 
define  \[\Hdist(X,Y)=\sum_{\bfx\in X}\hdist(\bfx,Y).\]  
For a vector $\bfx\in \{0,1\}^m$ and  integer $\ell>0$, we use $\BB(\bfx,\ell)$ to denote the open ball  of radius $\ell$ centered in $x$, that is, 
the set of vectors   in  $  \{0,1\}^m$ at Hamming distance    less than $\ell$ from $\bfx$.


%


\paragraph*{Probability.} 
In the analysis of our algorithm we will be using well known tail inequalities like Markov's  and Hoeffding's inequalities. 
%
\begin{proposition}[Markov's inequality~\cite{Mitzenmacherbook}]
\label{prop:markov}
Let $X$ be a non-negative random variable and $a > 0$. Then 
\[
\Pr (X\geq a\cdot \expect[X])\leq \frac {1}{a}.
\]
\end{proposition}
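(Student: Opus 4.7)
The plan is to give the standard indicator-function argument; this is a textbook result (see~\cite{Mitzenmacherbook}) so no new ideas are needed, but I would write out the two-line derivation for completeness. First I would set $t = a\cdot \expect[X]$ and assume $\expect[X]>0$, since if $\expect[X]=0$ then the non-negativity of $X$ forces $X=0$ almost surely and the statement is understood in the usual vacuous sense.

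The key step is the pointwise inequality
\[
X \;\geq\; t\cdot \mathbf{1}_{\{X\geq t\}},
\]
which holds precisely because $X$ is non-negative: on the event $\{X\geq t\}$ the right-hand side equals $t\leq X$, while on its complement the right-hand side is $0\leq X$. Taking expectations, using linearity, and noting that $\expect[\mathbf{1}_{\{X\geq t\}}] = \Pr(X\geq t)$ gives
\[
\expect[X] \;\geq\; t\cdot \Pr(X\geq t) \;=\; a\cdot \expect[X]\cdot \Pr\bigl(X\geq a\cdot\expect[X]\bigr).
\]
Dividing through by the positive quantity $a\cdot \expect[X]$ yields the claimed bound $\Pr(X\geq a\cdot\expect[X])\leq 1/a$.

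I do not expect any real obstacle: the single substantive step is the pointwise comparison above, and that is where the non-negativity hypothesis on $X$ is essential. The only fiddly point worth mentioning in passing is the degenerate case $\expect[X]=0$, which has to be handled by the almost-sure argument noted above rather than by the division step.
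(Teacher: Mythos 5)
Your proof is the standard indicator-function argument and is correct; the paper itself states Markov's inequality as a cited classical fact from \cite{Mitzenmacherbook} without proof, so there is nothing to compare against. Your aside about the degenerate case $\expect[X]=0$ (where the stated form fails literally for $a>1$ unless read vacuously) is a fair observation about the statement, not a gap in your argument.
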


\begin{proposition}[Hoeffding's inequality~\cite{Hoeffding63}]
\label{prop:Hoeffding}
Let $X_1,\ldots, X_n$ be independent random variables such that each $X_i$ is strictly bounded by the intervals $[a_i, b_i]$. 
Let $X=\sum_1^{n}X_i$ and $t>0$. Then   
\[
\Pr(X-\expect[X]\geq t)\leq e^{\left(-\frac{2t^2}{\sum_{i\in [n]} (b_i-a_i)^2}\right)}.
\]
\end{proposition}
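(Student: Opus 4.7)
The plan is to follow the standard Chernoff–style argument: apply Markov's inequality to the exponential of the centred sum, use independence of the $X_i$ to factorise the resulting moment generating function, bound each factor by a convexity estimate (Hoeffding's lemma), and finally optimise over the free exponential parameter.

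First, for any $\lambda>0$, Proposition~\ref{prop:markov} applied to the non-negative random variable $e^{\lambda(X-\expect[X])}$ gives
\[
\Pr\bigl(X-\expect[X]\ge t\bigr)\le e^{-\lambda t}\cdot \expect\bigl[e^{\lambda(X-\expect[X])}\bigr].
\]
Since $X=\sum_i X_i$ and the $X_i$ are independent, the expectation on the right factorises as $\prod_{i=1}^n \expect[e^{\lambda(X_i-\expect[X_i])}]$. This reduces the problem to bounding the moment generating function of a single centred random variable $Y_i:=X_i-\expect[X_i]$, which is supported in the interval $[a_i-\expect[X_i],\,b_i-\expect[X_i]]$ of length $b_i-a_i$.

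Second, I would establish Hoeffding's lemma: if $\expect[Y]=0$ and $Y\in[a,b]$ almost surely, then $\expect[e^{\lambda Y}]\le e^{\lambda^2(b-a)^2/8}$. Convexity of $y\mapsto e^{\lambda y}$ on $[a,b]$ gives the pointwise bound
\[
e^{\lambda y}\le\frac{b-y}{b-a}e^{\lambda a}+\frac{y-a}{b-a}e^{\lambda b},
\]
and taking expectations with $\expect[Y]=0$ yields $\expect[e^{\lambda Y}]\le e^{\phi(u)}$, where $p=-a/(b-a)$, $u=\lambda(b-a)$, and $\phi(u)=-pu+\log(1-p+pe^u)$. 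A short calculation shows $\phi(0)=\phi'(0)=0$ and
\[
\phi''(u)=\frac{p(1-p)e^u}{(1-p+pe^u)^2}\le\frac14,
\]
the last inequality following from $(1-p+pe^u)^2\ge 4\,(1-p)(pe^u)$ (AM–GM on the two summands in the denominator). Taylor's theorem with remainder then gives $\phi(u)\le u^2/8$, i.e.\ $\expect[e^{\lambda Y}]\le e^{\lambda^2(b-a)^2/8}$.

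Third, assembling the pieces and taking logarithms yields
\[
\Pr(X-\expect[X]\ge t)\le\exp\!\Bigl(-\lambda t+\tfrac{\lambda^2}{8}\sum_{i=1}^n(b_i-a_i)^2\Bigr).
\]
The exponent is a quadratic in $\lambda$, minimised at $\lambda^\ast=4t/\sum_i(b_i-a_i)^2$, and substituting produces the claimed bound $\exp\!\bigl(-2t^2/\sum_i(b_i-a_i)^2\bigr)$. The only analytically non-trivial ingredient, and the step I expect to be the main obstacle, is the bound $\phi''(u)\le 1/4$ inside Hoeffding's lemma: this is the place where the sharp constant $2$ in the exponent appears, and getting it wrong would cost a constant factor in the concentration bound. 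Everything else is a routine application of Markov, independence, and single-variable optimisation.
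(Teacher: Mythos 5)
Your proof is correct and is the standard Chernoff--Hoeffding argument (Markov on the exponentiated centred sum, factorisation by independence, Hoeffding's lemma via convexity and the bound $\phi''\le 1/4$, then optimisation over $\lambda$). The paper states this proposition as a cited result from Hoeffding's 1963 paper without proof, so there is nothing to compare against beyond noting that your derivation is the classical one and all steps, including the constant, check out.
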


\section{Notations related to \rclustering}\label{sec:notations}

Let $J=(X,k,{\cal R}=\{R_1,\ldots,R_m\})$ be an instance of \rclustering\  
and $C=\{\bfc_1,\ldots,\bfc_k\}$ be a solution to $J$, that is, a set of vectors satisfying ${\cal R}$. Then the cost of  
$C$ is $\Hdist(X,C)$. Given set $C$, there is a natural way we can partition the set of vectors $X$ into $k$ 
sets $X_1\uplus\cdots\uplus X_k$ such that \[\Hdist(X,C)=\sum_{i=1}^{k}\Hdist(X_i,\{\bfc_i\})=\sum_{i=1}^{k}\sum_{\bfx\in X_i}\hdist(\bfx,\bfc_i).\] 
Thus for each vector $\bfx$ in $X_i$, the closest to $\bfx$ vector from $C$ is  $\bfc_i$. 
We call  such partition \emph{clustering of $X$ induced by $C$} and refer to sets $X_1,\ldots,X_k$ as to  {\em clusters corresponding to   $C$}. 

We use  $\opt(J)$  to denote the optimal solution to $J$. That is
\[
\opt(J)=\min \{  \Hdist(X,C) ~| ~ <C,\cR>  \}.
\]
Note that in the definition of a vector set $C$ satisfiying relations $\RR$, we require that the size of $C$ is $k$. We also need a relaxed notion for vector sets of size smaller than $k$ to satisfy a part of $\RR$. 
  
 \begin{definition}[Vectors respecting $\cR$] 
  Let  $C=\{\bfc_1,\ldots,\bfc_{i}\}\subseteq \{0,1\}^m$ be a set of binary vectors, where $i\leq k$, we say that $C$ respects 
  ${\cal R}$ if 
there is an index set 
$I\in \{1,\dots, k\}$ such that $<C,\proj{\RR}{I}>$, that is,  $C$ satisfies $\proj{\RR}{I}$. 
In other words, $C $ is a solution to $(X,i,\proj{\RR}{I})$. 
\end{definition}

Notice that given a set $C'$ of 
$i \leq k$ vectors  which respects $\RR$, one can extend it to a set $C$   in time linear in the size of $J$ such that $C$ satisfies $\cal R$. Thus $C$ is a (maybe non-optimal) solution to $J$ such that $\Hdist(X,C)\leq \Hdist(X,C')$. We will use this observation in several places and thus state it as a proposition. 

\begin{proposition}
\label{prop:smallsolution}
Let $J=(X,k,{\RR}=\{R_1,\ldots,R_m\})$ be an instance of \rclustering\ and 
$C'=\{\bfc_1,\cdots, \bfc_i\}\subseteq \{0,1\}^{m}$ for  $i\leq k$ be a set of vectors  respecting $\RR$. 
Then 
 there is linear time algorithm 
which finds a  solution $C$ to $J$ such that $\Hdist(X,C)\leq \Hdist(X,C')$.  
\end{proposition}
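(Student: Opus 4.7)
The plan is to reduce the statement to a purely combinatorial extension task. The first observation is monotonicity of the cost under enlarging the center set: if the final $C$ contains $C'$ as a subset, then for every $\bfx\in X$ we have $\hdist(\bfx,C)=\min_{\bfc\in C}\hdist(\bfx,\bfc)\leq \min_{\bfc\in C'}\hdist(\bfx,\bfc)=\hdist(\bfx,C')$, so $\Hdist(X,C)\leq \Hdist(X,C')$ automatically. Hence the only thing to prove is that $C'$ can be extended, in linear time, to a set $C$ of size exactly $k$ that satisfies $\RR$.

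For the construction, I would first fix the index set $I=\{i_1,\ldots,i_i\}\subseteq\{1,\ldots,k\}$ witnessing that $<C',\proj{\RR}{I}>$; such an $I$ exists by hypothesis and, in the intended use inside the iterative algorithm, will already be tracked together with $C'$. Write $\bar I=\{j_1,\ldots,j_{k-i}\}$. The idea is to choose the $\ell$-th coordinate of the $k-i$ new centers one coordinate at a time. For each $\ell\in\{1,\ldots,m\}$, since $(\bfc_1[\ell],\ldots,\bfc_i[\ell])\in \proj{R_\ell}{I}$, there is at least one $k$-tuple $\bft_\ell\in R_\ell$ whose projection onto the positions in $I$ equals $(\bfc_1[\ell],\ldots,\bfc_i[\ell])$. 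Pick any such $\bft_\ell$ and, for every $s\in\{1,\ldots,k-i\}$, set $\bfc_{j_s}[\ell]:=\bft_\ell[j_s]$.

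By construction, for every $\ell$ the full $k$-tuple $(\bfc_1[\ell],\ldots,\bfc_k[\ell])$ coincides with $\bft_\ell\in R_\ell$, so $C=\{\bfc_1,\ldots,\bfc_k\}$ satisfies $\RR$ and is a feasible solution to $J$. Combined with the monotonicity observation above, this gives $\Hdist(X,C)\leq \Hdist(X,C')$. The running time per coordinate $\ell$ amounts to a single scan of $R_\ell$ to locate a compatible tuple, so the total time is linear in the description size of $\RR$ (with $k$ absorbed into the constant, as it is a problem parameter).

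The only mildly delicate point is producing the witnessing set $I$ if it is not handed over with $C'$; since $|I|=i\leq k$, one can brute-force search over the at most $\binom{k}{i}\leq 2^k$ candidate subsets and test each against $\proj{\RR}{\cdot}$, which stays within the promised linear dependence on the instance size. I do not anticipate any genuine obstacle: once the monotonicity argument is noted, the remaining work is the coordinate-wise lifting described above, which is exactly the content of the definitions of $\proj{\RR}{I}$ and $\rest{\RR}{I}{C'}$ introduced in Section~\ref{sec:prelims}.
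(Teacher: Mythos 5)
Your argument is correct and is exactly the (implicit) argument of the paper, which states this as an observation without a formal proof: extend $C'$ coordinate-wise by lifting each projected tuple back to a full tuple of $R_\ell$, then use the monotonicity $\hdist(\bfx,C)\leq\hdist(\bfx,C')$ for $C\supseteq C'$. Your remarks on recovering the witnessing index set $I$ (carried along by the algorithm, or recovered by a $2^k$-factor search absorbed into the parameter) are consistent with how the proposition is used in the paper.
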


\begin{definition}
Let $J=(X,k,{\cal R})$ be an instance of \rclustering. For $i\in \{1,\dots, k\}$, we define
\[
\opt_{i}(J)=\min\{ \Hdist(X,C\}~ |~ |C|=i \text{ and } C \text{ respects } {\cal R} \}.
\]  
\end{definition}
An equivalent way of defining $\opt_i(J)$ is 
\[
\opt_{i}(J)=\min\{ \opt(X,i,\proj{\RR}{I})~ |~ I\subseteq \{1,\dots, k\} \text{ and } |I|=i\}.
 \]  
%
Notice that \[\opt_1(J)\geq  \opt_2(J) \geq \cdots \geq \opt_k(J)=\opt(J).\]

\section{Sampling probability distributions}\label{sec:samplinglemma}


One of the main ingredient of our algorithms is the lemma about sampling of specific probability distributions. 
To state the lemma we use the following notations. For a real $p$ between $0$ and $1$ we will denote by $B(p)$ the Bernoulli distribution which assigns probability $p$ to $1$ and $1-p$ to $0$. 
We will write $X \sim B(p)$ to denote that $X$ is a random variable with distribution $B(p)$.
\begin{definition}[Weighted distance $d^{{\bfw}}$]
For two $k$-tuples $\bfu=(u_1,\ldots,u_k)$ and  $\bfv=(v_1,\ldots,v_k) $ over reals and $k$-tuple  $\bfw=(w_1,\ldots,w_k)$ with $w_i\geq 0$, 
the {\em distance from $\bfu$ to $\bfv$ weighted by $\bfw$} is defined as  
$$d^{{\bfw}}({\bfu}, {\bfv}) = \sum_{i=1}^{k} w_i|u_i-v_i|.$$

\end{definition}


Informally, Sampling Lemma proves the following. For an integer $k$,  a relation $R\subseteq \{0,1\}^k$, a sequence of probability 
distribution $\bfp=(p_1,\ldots,p_k)$ and $\epsilon>0$, there is a constant $r$ (depending on $k$ and $\epsilon$) such that for every $1 \leq i \leq k$ a sample of $r$ random values from $B(p_i)$ gives us a tuple $\bfrho\in R$ which is a {\em good} estimate of  $d^{ \bfw}( \bfrho,  \bfp)$. 

\begin{lemma}[Sampling Lemma]
\label{lem:samplingMain}
There exists   $c > 0$ such that for every 
 $\epsilon > 0$,
positive integers $k$ and 
 $r \geq c \cdot \frac{k}{\epsilon^2}\cdot \log \frac{1}{\epsilon}$,
$k$-tuples $ {\bfp} = (p_1, \ldots, p_k)$ with $0 \leq p_i \leq 1$,
and  $  \bfw = (w_1, \ldots, w_k)$ with $0 \leq w_i$, and relation 
 $R \subseteq \{0,1\}^k$, the following is satisfied.

For every $1 \leq i \leq k$ and $1 \leq j \leq r$, let  $X_i^j \sim B(p_i)$, and let $ {\bfQ} = (Q_1, \ldots Q_k)$ be the $k$-tuple  of random variables,  where $Q_i = \frac{1}{r}\sum_{j=1}^r X_i^j$.
Let $d_{min} $ be the minimum distance weighted by $\bfw$ from $\bfp$ to a $k$-tuple from $R$.    
  Let $ {\bfrho}$ be a $k$-tuple from  $R$ within the minimum weighted by $\bfw$ distance to $\bfQ$, 
  that is, $\bfrho=\argmin_{\bfx\in R}d^\bfw(\bfx,\bfQ)$, 
 and let  
 $D = d^{ \bfw}( \bfrho,  \bfp)$. Then
$\expect[D] \leq (1+\epsilon)d_{min}.$
\end{lemma}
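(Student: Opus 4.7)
The plan is to reduce bounding $\expect[D]$ to a concentration argument via unbiasedness plus an argmin decomposition. I would set $\bfrho^*\in R$ to be any minimiser (so $d^\bfw(\bfrho^*,\bfp)=d_{min}$) and write, for brevity, $d(\bfx):=d^\bfw(\bfx,\bfp)$ and $\hat d(\bfx):=d^\bfw(\bfx,\bfQ)$. First I would observe the key reformulation: since $x_i,X_i^j\in\{0,1\}$, $|x_i-Q_i|=\tfrac1r\sum_{j=1}^r|x_i-X_i^j|$, so for any \emph{fixed} $\bfx\in R$,
\[
\hat d(\bfx)\;=\;\tfrac1r\sum_{j=1}^r Y_j(\bfx),\qquad Y_j(\bfx):=\sum_{i=1}^k w_i|x_i-X_i^j|\in[0,W],
\]
with $W:=\sum_i w_i$, and the $Y_j(\bfx)$ are i.i.d.\ across $j$ with $\expect[Y_j(\bfx)]=d(\bfx)$; in particular, $\hat d(\bfx)$ is an unbiased estimator of $d(\bfx)$.

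Next, using that $\bfrho$ minimises $\hat d$ over $R$ so $\hat d(\bfrho)\leq \hat d(\bfrho^*)$, I would split
\[
D-d_{min}\;=\;\bigl(d(\bfrho)-\hat d(\bfrho)\bigr)\;+\;\underbrace{\bigl(\hat d(\bfrho)-\hat d(\bfrho^*)\bigr)}_{\leq\,0}\;+\;\bigl(\hat d(\bfrho^*)-d_{min}\bigr).
\]
Taking expectations kills the last term by unbiasedness, leaving $\expect[D]-d_{min}\leq \expect\bigl[\max_{\bfx\in R}(d(\bfx)-\hat d(\bfx))^{+}\bigr]$.

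The naive approach would then be to apply Hoeffding's inequality to each fixed $\bfx$: $\Pr(d(\bfx)-\hat d(\bfx)>\tau)\leq \exp(-2r\tau^2/W^2)$. Union-bounding over the $|R|\leq 2^k$ tuples of $R$ and integrating the tail gives $\expect\bigl[\max_{\bfx}(d(\bfx)-\hat d(\bfx))^+\bigr]=O\bigl(W\sqrt{k\log(1/\epsilon)/r}\bigr)$, so the choice $r=\Theta(k\log(1/\epsilon)/\epsilon^2)$ keeps this below $\epsilon W$.

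The hard part will be that this additive bound $\epsilon W$ is weaker than the multiplicative bound $\epsilon d_{min}$ demanded by the lemma whenever $d_{min}\ll W$. My plan to overcome this is to replace Hoeffding by Bernstein's inequality, exploiting that each $|x_i-X_i^j|$ is Bernoulli with mean $|x_i-p_i|$ and hence variance at most $|x_i-p_i|$; this delivers the variance-sensitive bound $\mathrm{Var}(Y_j(\bfx))\leq w_{\max}\,d(\bfx)$ (where $w_{\max}:=\max_i w_i$) and the multiplicatively-sharp tail $\Pr(d(\bfx)-\hat d(\bfx)>\epsilon\,d(\bfx))\leq \exp\!\bigl(-\Omega(\epsilon^2 r\,d(\bfx)/w_{\max})\bigr)$. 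I would then partition $R$ into geometric scales $R_j:=\{\bfx\in R:d(\bfx)\in[(1+\epsilon)^j d_{min},(1+\epsilon)^{j+1}d_{min})\}$, apply the Bernstein tail together with the $|R_j|\leq 2^k$ union bound within each scale, and sum the (geometrically decaying) per-scale contributions to upgrade the additive bound to the required $\expect[D]\leq (1+\epsilon)d_{min}$; the $2^k$-size union bound inside each scale is exactly what forces the $\log(1/\epsilon)$ factor in $r=\Omega(k\log(1/\epsilon)/\epsilon^2)$.
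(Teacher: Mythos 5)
Your opening moves are sound: the unbiasedness of $\hat d(\bfx):=d^\bfw(\bfx,\bfQ)$ for a \emph{fixed} $\bfx$, the variance bound $\mathrm{Var}(Y_j(\bfx))\leq w_{\max}\,d(\bfx)$, and the three-term telescoping decomposition are all correct. The gap is in the step where you relax $\expect[D]-d_{min}\leq \expect[d(\bfrho)-\hat d(\bfrho)]$ to $\expect\bigl[\max_{\bfx\in R}(d(\bfx)-\hat d(\bfx))^{+}\bigr]$ and then try to push that quantity below $\epsilon\, d_{min}$. The right-hand side is already too large on its own: even the single term $\expect\bigl[(d_{min}-\hat d(\bfrho^*))^{+}\bigr]$ coming from the optimizer $\bfrho^*$ is of order $\min\bigl(d_{min},\sqrt{w_{\max}d_{min}/r}\bigr)$, so forcing it below $\epsilon\, d_{min}$ requires $r\gtrsim w_{\max}/(\epsilon^2 d_{min})$, which blows up as $d_{min}/w_{\max}\to 0$ and cannot be capped by $O(k\log(1/\epsilon)/\epsilon^2)$. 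A concrete witness: take $k=1$, $R=\{0,1\}$, $w_1=1$, $p_1=\delta$ tiny, so $d_{min}=\delta$; if $r\delta<1$ then $Q_1=0$ with probability $(1-\delta)^r\geq 1/e$, and already $\expect[(\delta-Q_1)^{+}]\geq \delta/e$, far above $\epsilon\delta$. Yet in this instance the lemma's conclusion is trivially true, since $\bfrho=0$ and $D=d_{min}$ whenever $Q_1<1/2$, i.e., with overwhelming probability. So the quantity you are bounding is not the right one; the loss occurs when you replace the signed (mostly cancelling) random variable $d(\bfrho)-\hat d(\bfrho)$ by a one-sided maximum. The geometric-scale refinement does not help here because the offending contribution sits at the lowest scale, precisely at $\bfrho^*$.

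The fix, which is what the paper's proof does, is to avoid asking $\hat d(\bfx)$ to concentrate \emph{additively} near $d(\bfx)$ at all. Instead one writes $\expect[D]=\sum_{\bfv\in R}d(\bfv)\Pr(\bfrho=\bfv)$, splits $R$ into tuples with $d(\bfv)\leq(1+\epsilon/2)d_{min}$ and the rest, and for a bad $\bfv$ bounds the \emph{comparison} probability $\Pr\bigl(d^\bfw(\bfv,\bfQ)\leq d^\bfw(\bfu,\bfQ)\bigr)$ where $\bfu$ is the true optimizer. After the normalisation $\bfu=\mathbf{0}$ and zeroing $w_i$ where $v_i=0$, this comparison collapses to the single-variable, scale-free threshold event $\{d^\bfw(\mathbf{0},\bfQ)\geq w^*/2\}$ with $w^*=\sum_i w_i$. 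Because the threshold $w^*/2$ is well separated from $\expect[d^\bfw(\mathbf{0},\bfQ)]\leq w^*/(2+\epsilon/2)$, a Hoeffding bound (together with a separate binomial-tail argument when $d^\bfw(\mathbf{0},\bfp)$ is very small compared with $w^*$) yields $\Pr(\hat d(\bfv)\leq\hat d(\bfu))\leq\frac{d_{min}}{d(\bfv)}\cdot\frac{\epsilon}{2^{k+1}}$ with $r=O(k\log(1/\epsilon)/\epsilon^2)$, independently of how small $d_{min}$ is. In short: the lemma relies on the \emph{ordering} between $\hat d(\bfv)$ and $\hat d(\bfu)$ being robust, not on $\hat d$ being a multiplicatively accurate estimate of $d$; your plan enforces the stronger, false property and therefore cannot close.
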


\begin{proof}
Let  $\bfu=\argmin_{\bfx\in R}d^\bfw(\bfx,\bfp)$. Then $d_{min} = d^{ \bfw}( \bfu,  \bfp)$.
%
 Let $R_{small}$ be the set of all tuples $\bfv \in R$ such that  
$d^{\bfw}(\bfv, \bfp) \leq (1+\frac{\epsilon}{2})d_{min}$. Let $R_{big} = R \setminus R_{small}$. 
%
%
We will prove the following claim.

\begin{claim}\label{clm:farGivesLowProb}
For every $\bfv \in R_{big}$, 
$$\Pr(d^{\bfw}(\bfv, \bfQ) \leq d^{\bfw}(\bfu, \bfQ)) \leq \frac{d_{min}}{d^{\bfw}(\bfv, \bfp)} \cdot \frac{\epsilon}{2^{k+1}}.$$
\end{claim}

Assuming Claim~\ref{clm:farGivesLowProb} we complete the proof of the lemma:

\begin{align*}
\expect[D] = \sum_{v \in R_{small}} d^{\bfw}(\bfv, \bfp) \cdot \Pr(\bfrho = \bfv) + \sum_{v \in R_{big}} d^{\bfw}(\bfv, \bfp) \cdot \Pr(\bfrho = \bfv) \\
\leq d_{min}(1+\frac{\epsilon}{2}) + \sum_{v \in R_{big}} d^{\bfw}(\bfv, \bfp) \cdot \Pr(d^{\bfw}(\bfv, \bfQ) \leq d^{\bfw}(\bfu, \bfQ)) \\
\leq d_{min}(1+\frac{\epsilon}{2}) + \sum_{v \in R_{big}} d^{\bfw}(\bfv, \bfp) \cdot \frac{d_{min}}{d^{\bfw}(\bfv, \bfp)} \cdot \frac{\epsilon}{2^{k+1}} \\
\leq d_{min}(1+\frac{\epsilon}{2}) + d_{min} \cdot \frac{\epsilon}{2} \leq d_{min}(1+\epsilon). 
\end{align*}

Hence, all that remains to prove the lemma is to prove Claim~\ref{clm:farGivesLowProb}.

\begin{proof}[Proof of Claim~\ref{clm:farGivesLowProb}]
We will assume without loss of generality that $\epsilon \leq \frac{1}{10}$. By renaming $0$ to $1$ and vice versa at the coordinates $i$ where $u_i = 1$,  we may assume that $\bfu = \mathbf{0}$. Thus $d_{min} = d^{\bfw}(\mathbf{0}, \bfp)$. We may now rewrite the statement of the claim as:

\begin{align}\label{eqn:tryToProve}
\Pr(d^{\bfw}(\bfv, \bfQ) \leq d^{\bfw}(\mathbf{0}, \bfQ)) \leq \frac{d^{\bfw}(\mathbf{0}, \bfp)}{d^{\bfw}(\bfv, \bfp)} \cdot \frac{\epsilon}{2^{k+1}}.
\end{align}

Consider now the weight $k$-tuple $\bfw' = (w_1', \ldots, w_k')$ where $w_i' = w_i$ if $v_i = 1$ and $w_i' = 0$ if $v_i = 0$. We have that $\Pr(d^{\bfw}(\bfv, \bfQ) \leq d^{\bfw}(\mathbf{0}, \bfQ)) = \Pr(d^{\bfw'}(\bfv, \bfQ) \leq d^{\bfw'}(\mathbf{0}, \bfQ))$, and that $\frac{d^{\bfw'}(\mathbf{0}, \bfp)}{d^{\bfw'}(\bfv, \bfp)} \leq \frac{d^{\bfw}(\mathbf{0}, \bfp)}{d^{\bfw}(\bfv, \bfp)}$. Hence, in order to prove   \eqref{eqn:tryToProve},  it is sufficient to prove

\begin{align*}
\Pr(d^{\bfw'}(\bfv, \bfQ) \leq d^{\bfw'}(\mathbf{0}, \bfQ)) \leq \frac{d^{\bfw'}(\mathbf{0}, \bfp)}{d^{\bfw'}(\bfv, \bfp)} \cdot \frac{\epsilon}{2^{k+1}}.
\end{align*}

In other words,  it is sufficient to prove  \eqref{eqn:tryToProve} under the additional assumption that $w_i = 0$ whenever $v_i = 0$. Under this assumption we have that $d^{\bfw}(\bfv, \bfQ) = d^{\bfw}(\mathbf{1}, \bfQ)$, and that $d^{\bfw}(\bfv, \bfp) = d^{\bfw}(\mathbf{1}, \bfp)$. Thus, it is suffices to prove that $d^{\bfw}(\mathbf{1}, \bfp) \geq (1 + \frac{\epsilon}{2}) \cdot d^{\bfw}(\mathbf{0}, \bfp)$ implies that

\begin{align}\label{eqn:newToProve}
\Pr(d^{\bfw}(\mathbf{1}, \bfQ) \leq d^{\bfw}(\mathbf{0}, \bfQ)) \leq \frac{d^{\bfw}(\mathbf{0}, \bfp)}{d^{\bfw}(\mathbf{1}, \bfp)} \cdot \frac{\epsilon}{2^{k+1}}.
\end{align}

Let $w^* = \sum_{i = 1}^k w_i$. We have that $d^{\bfw}(\mathbf{0}, \bfp) + d^{\bfw}(\mathbf{1}, \bfp) = w^*$.
Thus $d^{\bfw}(\mathbf{1}, \bfp) \geq (1 + \frac{\epsilon}{2}) \cdot d^{\bfw}(\mathbf{0}, \bfp)$ implies that $d^{\bfw}(\mathbf{0}, \bfp) \leq \frac{w^*}{2+\frac{\epsilon}{2}}$. 
We also have that $d^{\bfw}(\mathbf{1}, \bfQ) + d^{\bfw}(\mathbf{0}, \bfQ) = w^*$, and therefore $d^{\bfw}(\mathbf{1}, \bfQ) \leq d^{\bfw}(\mathbf{0}, \bfQ)$ if and only if $d^{\bfw}(\mathbf{0}, \bfQ) \geq \frac{w^*}{2}$. Furthermore, $d^{\bfw}(\mathbf{1}, \bfp) \leq w^*$.
Hence, to prove the claim (in partiular Equation~\ref{eqn:newToProve}) it is sufficient to show that $d^{\bfw}(\mathbf{0}, \bfp) \leq \frac{w^*}{2+\frac{\epsilon}{2}}$ implies
\begin{align}\label{eqn:finalToProve}
\Pr(d^{\bfw}(\mathbf{0}, \bfQ) \geq \frac{w^*}{2}) \leq \frac{d^{\bfw}(\mathbf{0}, \bfp)}{w^*} \cdot \frac{\epsilon}{2^{k+1}}.
\end{align}

We now prove Equation~\ref{eqn:finalToProve} distinguishing between two cases.

\smallskip\noindent
{\bf Case 1,} {\em $d^{\bfw}(\mathbf{0}, \bfp) > \frac{w^*}{128k}$}. \\
\smallskip
\noindent
We have that $d^{\bfw}(\mathbf{0}, \bfQ) = \sum_{i=1}^k\sum_{j=1}^r \frac{w_i}{r}X_i^j$.
Thus, $d^{\bfw}(\mathbf{0}, \bfQ)$ is the sum of $kr$ independent random variables, grouped into groups of size $r$, where all variables in group $i$ take value $\frac{w_i}{r}$ with probability $p_i$ and value $0$ with probability $1-p_i$.
It follows that $\expect[d^{\bfw}(\mathbf{0}, \bfQ)] = d^{\bfw}(\mathbf{0}, \bfp) \leq \frac{w^*}{2+\frac{\epsilon}{2}}$. Hence $\frac{w^*}{2} - \expect[d^{\bfw}(\mathbf{0}, \bfQ)] \geq \frac{w^*}{2}- \frac{w^*}{2+\frac{\epsilon}{2}} > \frac{\epsilon w^*}{10}$, where the last inequality follows from the assumption that $\epsilon \leq \frac{1}{10}$. Thus we may use Proposition~\ref{prop:Hoeffding} to upper bound $\Pr(d^{\bfw}(\mathbf{0}, \bfQ) \geq \frac{w^*}{2})$.

\begin{align*}
\Pr(d^{\bfw}(\mathbf{0}, \bfQ) \geq \frac{w^*}{2}) \leq \Pr(d^{\bfw}(\mathbf{0}, \bfQ) - \expect[d^{\bfw}(\mathbf{0}, \bfQ)] > \frac{\epsilon w^*}{10})
\leq \exp\left(-\frac{2\epsilon^2(w^*)^2}{100\Sigma_{i=1}^k \Sigma_{j=1}^r \left(\frac{w_i}{r}\right)^2}\right) \\
\leq \exp\left( -\frac{\epsilon^2r}{50}\right) 
\leq \frac{w^*}{128k} \cdot \frac{128k}{w^*} \cdot \frac{\epsilon}{128k \cdot 2^{k+1}} 
\leq \frac{d^{\bfw}(\mathbf{0}, \bfp)}{w^*} \cdot \frac{\epsilon}{2^{k+1}}. 
\end{align*}
Here the second transition is by Proposition~\ref{prop:Hoeffding}, while the fourth is by the choice of $r = \Omega(\frac{k}{\epsilon^2}\cdot \log \frac{1}{\epsilon})$.

\medskip\noindent
{\bf Case 2,} {\em $d^{\bfw}(\mathbf{0}, \bfp) \leq \frac{w^*}{128k}$}. \\
\noindent
For every $i$ such that $w_i \geq \frac{w^*}{4k}$,  we have $p_i \leq \frac{1}{32}$. Since $Q_i$ is binomially distributed we have that



\begin{align*}
\Pr(Q_i \geq \frac{1}{4}) = \sum_{t=\lceil \frac{r}{4}\rceil}^r {r \choose t} p_i^t(1-p_i)^{r-t} 
\leq 2^r \cdot p_i^{\frac{r}{4}} 
\leq p_i \cdot 2^r \cdot (\frac{1}{32})^{\frac{r}{4}-1} 
\leq p_i \cdot 2^{-\frac{r}{4}-5} 
\leq \frac{p_i}{4k^2} \cdot \frac{\epsilon}{2^{k+1}}.
\end{align*}
Here the last inequality follows from the fact that of $r = \Omega(\frac{k}{\epsilon^2}\cdot \log \frac{1}{\epsilon})$

For every $i$ such that $w_i \geq \frac{w^*}{4k}$ we have that $p_i \leq \frac{d^{\bfw}(\mathbf{0}, \bfp)}{w^*} \cdot 4k$, since otherwise
$w_ip_i > d^{\bfw}(\mathbf{0}, \bfp)$, a contradiction. Thus, for every such $i$ we have 
$$\Pr(Q_i \geq \frac{1}{4}) \leq \frac{d^{\bfw}(\mathbf{0}, \bfp)}{w^* k} \cdot \frac{\epsilon}{2^{k+1}}.$$
By the union bound,  we have that
\[
\Pr\left(\sum_{i ~:~ w_i \geq \frac{w^*}{4k}} w_iQ_i \geq \frac{w^*}{4}\right) \leq   \frac{d^{\bfw}(\mathbf{0}, \bfp)}{w^*} \cdot \frac{\epsilon}{2^{k+1}}.
\]
Since 
\[
\sum_{i : w_i < \frac{w^*}{4k}} w_iQ_i < \frac{w^*}{4}
\]
(with probability $1$), it follows that 
$$\Pr( d^{\bfw}(\mathbf{0}, \bfQ) \geq \frac{w^*}{2} ) \leq   \frac{d^{\bfw}(\mathbf{0}, \bfp)}{w^*} \cdot \frac{\epsilon}{2^{k+1}}.$$
This proves the claim, and completes the proof of Lemma~\ref{lem:samplingMain}. 
\end{proof}\end{proof}

\section{Warm up: Deterministic PTAS}\label{sec:ptas}
As a warm up, let us show how Sampling Lemma can be used to obtain a deterministic PTAS for  \rclustering. 
%
Towards that we need the following definition. 

\begin{definition}
\label{def:fidef}
Let $k,m\in {\mathbb N}$ and ${\cal R}=\{R_1,\ldots,R_{m}\}$ be a family of relations, where $R_i\subseteq \{0,1\}^k$ for each $i \in \{1,\ldots,m\}$. Let $\{S_1,\ldots, S_k\}$  be a family of multisets of vectors from $\{0,1\}^m$ and $w_1,\ldots,w_k \in {\mathbb R}_{\geq 0}$. 
For a vector (multi)set $B\subseteq \{0,1\}^m$, let $z^{(i)}(B)$  be the number of vectors in $B$ with the $i$-{th}  entry 
equal to $0$ and let $d^{(i)}(B) =\vert B\vert-z^{(i)}(B)$ be the number of vectors in $B$ with the $i$-{th}  entry 
equal to $1$. Then $\best{{\cal R}}(S_1,\ldots,S_k,w_1,\ldots,w_k)$ is a set of vectors $\{\bfc_1,\ldots,\bfc_{k}\}$ satisfying ${\cal R}$  which is defined  as follows.   
For  $i\in \{1,\dots, m\} $ 
and a  $k$-tuple  $\bfb=(b_1,\ldots,b_{k})\in R_i$, let 
\begin{equation*}
\label{eqn:fi}
f_i(b_1,\ldots,b_{k})=\sum_{j\in I_b} w_j \cdot z^{(i)}(S_j) + \sum_{j\in [k]\setminus I_b} w_j \cdot d^{(i)}(S_j), 
\end{equation*}
where $I_b=\{j\in \{1,\dots, k\} \colon b_j=1 \}$. 
The set   $\{\bfc_1,\ldots,\bfc_{k}\}$ be  such that 
$(\bfc_1[i],\ldots,\bfc_{k}[i])\in R_i$ and  $f_i(\bfc_1[i],\ldots,\bfc_{k}[i])=\min_{\bfb\in R_i} f_i(\bfb)$,  $1\leq i\leq m$.
\end{definition}

\begin{lemma}
\label{lem:samplelargecluster0}
Let $J=(X,k,{\cal R}=\{R_1,\ldots,R_{m}\})$ of \rclustering, $\varepsilon>0$ and 
$r=\Theta(\frac{k}{\varepsilon^2} \log \frac{1}{\varepsilon})$ is the constant defined in Lemma~\ref{lem:samplingMain}. 
Then there exist $w_1,\ldots,w_k\in {\mathbb N}$ and a family $\{S'_1,\ldots,S'_k\}$ of $r$ sized  multisets of vectors  from $X$,  such that 
$$\Hdist(X,\best{{\cal R}}(S'_1,\ldots,S'_k,w_1,\ldots,w_k))\leq  (1+\varepsilon)\opt(J).$$
\end{lemma}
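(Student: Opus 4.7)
The plan is to exhibit the desired $w_j$ and $S_j'$ by a probabilistic argument: fix an optimal solution, take the $w_j$ to be the cluster sizes, and sample the $S_j'$ uniformly at random from the corresponding optimal clusters. After verifying that the operator $\best{\cR}$ acts coordinate-wise, the Sampling Lemma will be applied independently to each of the $m$ coordinates and the bounds summed.

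First I would fix an optimum $C^{*}=\{\bfc^{*}_1,\ldots,\bfc^{*}_k\}$ of $J$ together with its induced clustering $X^{*}_1\uplus\cdots\uplus X^{*}_k=X$. Define $w_j:=|X^{*}_j|$ (handling empty clusters trivially by reindexing or padding with $w_j=1$ and an arbitrary $S_j'$, which contributes nothing to the cost). For every coordinate $i\in[m]$ and every cluster index $j\in[k]$ let $p_j^{(i)}$ be the fraction of vectors in $X^{*}_j$ whose $i$-th entry equals~$1$, and set $\bfp^{(i)}=(p_1^{(i)},\ldots,p_k^{(i)})$. A direct coordinate-wise expansion of the Hamming cost gives the decomposition
\[
\opt(J)=\Hdist(X,C^{*})=\sum_{i=1}^{m}\sum_{j=1}^{k}w_j\,|\bfc^{*}_j[i]-p_j^{(i)}|\ \geq\ \sum_{i=1}^{m}\,\min_{\bfb\in R_i}\,d^{\bfw}(\bfb,\bfp^{(i)})=:\sum_{i=1}^m d_{\min}^{(i)}.
\]
(In fact equality holds, because the inner minimisation over $\bfb\in R_i$ independently at each coordinate is achieved by $(\bfc^{*}_1[i],\ldots,\bfc^{*}_k[i])$, otherwise $C^{*}$ would not be optimal.)

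Next I would draw, independently for each $j$, a multiset $S_j'$ of $r$ vectors sampled uniformly with replacement from $X^{*}_j$. Let $Q_j^{(i)}$ be the fraction of vectors in $S_j'$ with $i$-th entry equal to $1$; then $Q_j^{(i)}=\tfrac{1}{r}\sum_{t=1}^{r}X_{j,t}^{(i)}$ with $X_{j,t}^{(i)}\sim B(p_j^{(i)})$ independent, exactly as in Lemma~\ref{lem:samplingMain}. A short calculation using $z^{(i)}(S_j')=r(1-Q_j^{(i)})$ and $d^{(i)}(S_j')=rQ_j^{(i)}$ shows that for every $\bfb\in R_i$,
\[
f_i(\bfb)\;=\;r\cdot d^{\bfw}\!\bigl(\bfb,\bfQ^{(i)}\bigr),
\]
so the vector $\best{\cR}(S_1',\ldots,S_k',w_1,\ldots,w_k)$ picks, independently at each coordinate $i$, precisely
\[
\bfrho^{(i)}=\argmin_{\bfb\in R_i} d^{\bfw}\!\bigl(\bfb,\bfQ^{(i)}\bigr).
\]

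Now I would apply Lemma~\ref{lem:samplingMain} separately for each coordinate $i$ with the probability tuple $\bfp^{(i)}$, weights $\bfw$, and relation $R_i$: since $r=\Theta(\tfrac{k}{\varepsilon^2}\log\tfrac{1}{\varepsilon})$ satisfies the hypothesis of the lemma, we obtain
\[
\expect\!\bigl[d^{\bfw}(\bfrho^{(i)},\bfp^{(i)})\bigr]\ \leq\ (1+\varepsilon)\,d_{\min}^{(i)}.
\]
Writing $C:=\best{\cR}(S_1',\ldots,S_k',w_1,\ldots,w_k)=\{\bfc_1,\ldots,\bfc_k\}$ and assigning each $\bfx\in X^{*}_j$ to $\bfc_j$ (an upper bound on $\hdist(\bfx,C)$), the same coordinate expansion as before yields
\[
\Hdist(X,C)\ \leq\ \sum_{j=1}^k\sum_{\bfx\in X^{*}_j}\hdist(\bfx,\bfc_j)\ =\ \sum_{i=1}^m d^{\bfw}(\bfrho^{(i)},\bfp^{(i)}).
\]
Taking expectations, linearity of expectation plus the Sampling Lemma bound gives $\expect[\Hdist(X,C)]\leq(1+\varepsilon)\sum_{i=1}^m d_{\min}^{(i)}\leq(1+\varepsilon)\opt(J)$. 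Hence there must exist a realisation of the samples $S_1',\ldots,S_k'$ for which $\Hdist(X,\best{\cR}(S_1',\ldots,S_k',w_1,\ldots,w_k))\leq(1+\varepsilon)\opt(J)$, as required.

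The only real obstacle is the identification of the combinatorial object $\best{\cR}$ with the probabilistic minimiser $\bfrho^{(i)}$ of the Sampling Lemma; once the scaling factor $r$ between $f_i$ and $d^{\bfw}(\cdot,\bfQ^{(i)})$ is recognised, the rest is a clean coordinate-wise averaging argument.
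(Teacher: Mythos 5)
Your proposal is correct and follows essentially the same route as the paper's proof: fix an optimal clustering, set $w_j$ to the cluster sizes, sample $r$ vectors uniformly with replacement from each cluster, observe that $f_i(\bfb)=r\cdot d^{\bfw}(\bfb,\bfQ^{(i)})$ so that $\best{\cR}$ coincides coordinate-wise with the minimiser $\bfrho$ of the Sampling Lemma, and sum the per-coordinate expectation bounds. The paper phrases the per-coordinate cost via auxiliary functions $g_i$ rather than directly as $d^{\bfw}(\cdot,\bfp^{(i)})$, but this is only a notational difference.
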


\begin{proof}

Let $C^*=\{\bfc_1^*,\ldots,\bfc_{k}^*\}$ be an optimal solution  to $J$ with corresponding clusters $P_1,\ldots,P_{k}$.  That is $\opt(J)= \Hdist(X,C^*)= \sum_{i=1}^{k}\Hdist(P_i,\{c^*_i\})$. For each $i\in \{1,\ldots,k\}$, we set $w_i=\vert P_i\vert$.  
For each $i\in \{1,\ldots,k\}$, we define a multiset  $S_i$  of $r$ vectors, where each vector in $S_i$ is chosen 
uniformly at random with repetition from $P_i$.  To prove the lemma it is enough to prove that 

\begin{equation}
\label{eqn:expbestran}
\expect[\Hdist(X,\best{{\cal R}}(S_1,\ldots,S_k,w_1,\ldots,w_k))]\leq  (1+\varepsilon)\opt(J)
\end{equation}

Recall the definition of functions $f_i, 1\leq i\leq m$ (see Definition~\ref{def:fidef}). 
For  $i\in \{1,\dots, m\} $ 
and a  $k$-tuple  $\bfb=(b_1,\ldots,b_{k})\in R_i$, 
\begin{equation}
\label{eqn:fi000}
f_i(b_1,\ldots,b_{k})=\sum_{j\in I_b} w_j \cdot z^{(i)}(S_j) + \sum_{j\in [k]\setminus I_b} w_j \cdot d^{(i)}(S_j), 
\end{equation}
where $I_b=\{j\in \{1,\dots, k\} \colon b_j=1 \}$. 
The set   $\best{{\cal R}}(S_1,\ldots,S_k,w_1,\ldots,w_k)=\{\bfc_1,\ldots,\bfc_{k}\}$ be  such that 
$(\bfc_1[i],\ldots,\bfc_{k}[i])\in R_i$ and  $f_i(\bfc_1[i],\ldots,\bfc_{k}[i])=\min_{\bfb\in R_i} f_i(\bfb)$,  $1\leq i\leq m$.

Notice that $\best{{\cal R}}(S_1,\ldots,S_k,w_1,\ldots,w_k)=\{\bfc_1,\ldots,\bfc_{k}\}$ is a random variable. 
We define a random variable 
$
Y=\sum_{i=1}^{k}\Hdist(P_i,\{\bfc_i\}).
$
To prove \eqref{eqn:expbestran}, it is enough prove that 
$\expect[Y]\leq (1+\varepsilon)\opt(J)$.
We define functions $g_i$ for all $i\in [m]$, which denotes the cost of  each element in the relation $R_i$ 
with respect to the partition $P_1\uplus\ldots \uplus P_{k}$ of $X$. Formally, 
for each $\bfb=(b_1,\ldots,b_{k})\in R_i$, we put
\begin{eqnarray}
g_i(\bfb)=\sum_{j\in I_b} z^{(i)}(P_j) + \sum_{j\in [k]\setminus I_b}d^{(i)}(P_j),  \label{eqngi000}
\end{eqnarray}
where $I_b=\{j\in \{1,\dots, k\} \colon b_j=1 \}$. 
Let $V_i=g_i(\bfc_1^*[i],\ldots,\bfc_{k}^*[i])$. Notice that $\opt(J)=\sum_{i=1}^{m}V_i$. 
Let $Y_i=g_i(\bfc_1[i],\ldots,\bfc_{k}[i])$, $1\leq i \leq m$. By \eqref{eqngi000} and the definition of $Y$, we have that $Y=\sum_{i=1}^{m}Y_i$. 
By the linearity of expectation, we have that  $\expect[Y]=\sum_{i=1}^{m}\expect[Y_i]$. 
Thus, to prove that $\expect[Y]\leq (1+\varepsilon)\opt(J)$, it is sufficient  to prove that 
  $\expect[Y_i]\leq (1+\varepsilon)V_i $ for every $i\in [m]$. 
\begin{claim}
\label{claim:yiexp}
For every  $i\in [m]$, $\expect[Y_i]\leq (1+\varepsilon)V_i$.
\end{claim}

\begin{proof}
Fix an index $i\in [m]$.  Let $z_j=z^{(i)}(P_j)$ and $d_j=d^{(i)}(P_j)$. Thus $z_j$ ($d_j$) is the number of vectors from $P_j$ whose $i$-th coordinate is $0$  ($1$). Let $n_j=\vert P_j\vert$ and $p_j=\frac{d_j}{n_j}$,   $j\in \{1,\dots, k\}$. 
Since each vector from 
$S_j$ is equally likely to be any vector in $P_j$,  
for every vector $v\in S_j$, 
\begin{equation}
\label{eqn:pi1000}
 \Pr(v[i]=1)
 =\frac{d_j}{n_j}=p_j \qquad\mbox{ and } \qquad  \Pr(v[i]=0)
 =\frac{z_j}{n_j}=1-p_j.  
\end{equation} 
Let $\bfp=(p_1,\ldots,p_{k})$.  We define  $k$-tuple   
$\bfw=(w_1,\ldots,w_k)$.  
and  claim that for every $\bfb=(b_1,\ldots,b_{k})\in R_i$, $d^{\bfy}(\bfb,\bfp)=g_i(\bfb)$. Indeed,
\begin{eqnarray}
d^{\bfw}(\bfb,\bfp)&=&\sum_{j=1}^{k} w_j \vert b_j - p_j \vert \nonumber\\
&=&\sum_{j\in I_b} w_j (1 - p_j)+\sum_{j\in [k]\setminus I_b} w_j  \cdot p_j \nonumber\\
&=& \sum_{j\in I_b} z_j+\sum_{j\in [k]\setminus I_b} d_j  \nonumber\\
&=&g_i(\bfb). \label{eqn:wgi1000}
\end{eqnarray}

For each set $S_j=\{v_1, \dots, v_r\}$, $1\leq j\leq k$   and $1\leq q\leq r$, we define random variable $X_{j}^{q}$ which is $1$ when $v_{q}[i]=1$ and $0$ otherwise.
By \eqref{eqn:pi1000}, we have that $X_{j}^{q}\sim B(p_j)$ for all $1 \leq j \leq k$ and $1 \leq q \leq r$. 
Let $\bfQ = (Q_1, \ldots Q_k)$ be the $k$-tuple of random variables,  where $Q_j = \frac{1}{r}\sum_{q=1}^r X_j^{q}$.
From the definitions of $z^{(i)}(S_j)$, $d^{(i)}(S_j)$ and $X_j^q$, we have that 
\begin{eqnarray}
z^{(i)}(S_j)=\sum_{q=1}^r (1-X_{j}^{q})  \quad \mbox{ and } \quad 
d^{(i)}(S_j)=\sum_{q=1}^r X_{j}^{q}. \label{eqn:zd000}
\end{eqnarray}

Let $\bfw=(w_1,\ldots,w_k)$. For   $\bfb\in R_i$, we 
  upper bound $d^\bfw(\bfb,\bfQ)$ in terms of $f_i(\bfb)$. 
\begin{eqnarray}
 d^\bfw(\bfb,\bfQ)&=&\sum_{j=1}^{k} w_j \vert b_i - Q_j \vert \nonumber\\
&\leq & \sum_{j\in I_b} w_j (1 - Q_j)+\sum_{j\in [k]\setminus I_b} w_j  \cdot Q_j\nonumber\\
&\leq &  \sum_{j\in I_b} w_j \cdot \left(1- \frac{1}{r}\sum_{q=1}^r X_{j}^{q}\right) + \sum_{j\in [k]\setminus I_b} w_j \cdot \left(\frac{1}{r}\sum_{q=1}^r X_{j}^{q}\right)\nonumber\\
&\leq & \frac{1}{ r} \left(\sum_{j\in I_b} w_j \cdot \left(\sum_{q=1}^r (1-X_{j}^{q})\right) + \sum_{j\in [k]\setminus I_b} w_j \cdot \left(\sum_{q=1}^r X_{j}^{q}\right)\right)\nonumber\\
&\leq & \frac{1}{r} \left( \sum_{j\in I_b} w_j \cdot z^{(i)}(S_j) + \sum_{j\in [k]\setminus I_b} w_j \cdot d^{(i)}(S_j)\right)\qquad\qquad\qquad \mbox{(by \eqref{eqn:zd000})}\nonumber\\
&\leq& \frac{1}{r} f_i(\bfb). \qquad\qquad\qquad\qquad (\mbox{by \eqref{eqn:fi000}}) \label{eqn:ficorrect}
\end{eqnarray}

Let  
\[
\bfrho=\argmin_{\bfx\in R_i} d^\bfw(\bfx,\bfQ)
.\]
By \eqref{eqn:ficorrect} and by the definition of the vector set $\{\bfc_1,\ldots,\bfc_{k}\}$, we have that $\bfrho=(\bfc_1[i],\ldots,\bfc_{k}[i])$. 
 
 We also define $k$-tuple
 \[
 \bfu=\argmin_{\bfx\in R_i} d^\bfw(\bfx,\bfp) 
.\]

This implies that
\begin{equation}
\label{eqn:dminapp2000}
d^\bfw(\bfu,\bfp) =g_i(\bfc_1^*[i],\ldots,\bfc_{k}^*[k])= V_i. 
\end{equation}

Thus the minimum weighted by $\bfw$ distance  $d_{min} $ from $\bfp$ to a $k$-tuple from $R_i$ is equal to $d^\bfw( \bfu,\bfp)$.    Let $D$ be the random variable which is a minimum weighted by $\bfw$ distance from $\bfrho$ to $\bfp$. By Lemma~\ref{lem:samplingMain}, 
\begin{equation}
\label{eq:samplingMain000}
\expect[d^\bfw(\bfrho,\bfp)]=  \expect[D] \leq (1+\epsilon)d_{min}=  (1+\epsilon) d^\bfw( \bfu,\bfp) . \end{equation}


Finally we upper bound $\expect[Y_i]$. 
\begin{eqnarray*}
\expect[Y_i]&=&\expect[g_i(\bfrho)]\\
&=&\expect[d^\bfw(\bfrho,\bfp)] \qquad\qquad\qquad\quad(\mbox{by \eqref{eqn:wgi1000}})\\
&\leq& (1+\varepsilon)\cdot d^\bfw( \bfu,\bfp) \qquad\qquad (\mbox{by  \eqref{eq:samplingMain000})}\nonumber\\
&\leq& (1+\varepsilon) V_i. \qquad\qquad\qquad (\mbox{by \eqref{eqn:dminapp2000}})
 \end{eqnarray*}
 This completes the proof of the claim.
\end{proof}

By Claim~\ref{claim:yiexp}, the fact that $\opt(J)=\sum_{i=1}^{m} V_i$ and by the linearity of expectation,  we have that 
\(
\expect[Y]\leq (1+\varepsilon) \opt(J) .\label{eqn:expYgivenE}
\)
This completes the proof of the lemma. 
\end{proof}

Lemma~\ref{lem:samplelargecluster0} implies Theorem~\ref{thm:mainrclusteringdet}. 
%
%
The remaining part of the paper is built towards obtaining a linear time randomized approximation scheme for \rclustering.


\section{Sampling instances with large clusters}\label{sec:samplinglemma1}

In this section we prove the algorithmic variant of  Sampling Lemma which will be the main engine of our randomized algorithm. Informally, the lemma says that   
if there is an optimal solution $C$ for the set of vectors $X$ such that  each of the clusters corresponding to $C$ contains a {large} fraction of the   vectors from $X$, then sampling { constantly} many vectors from $X$  for each cluster is a good estimate for a {good} approximate solution. 
 In fact, we need   a   stronger property:  We want to derive a good clustering of a subset of vectors $Z\subseteq X$ which is unknown to us (a hidden subset).

\begin{lemma}[Algorithmic Sampling Lemma]
\label{lem:samplelargecluster}
Let $X\subseteq \{0,1\}^m$ be a set of $n$  binary vectors and $Z \subseteq X$ be (an unknown) set of vectors. 
Let $J=(Z,k,{\cal R}=\{R_1,\ldots,R_{m}\})$ be an instance of \rclustering. 
Suppose that there exists a solution $C^*=\{\bfc_1^*,\ldots,\bfc_{k}^*\}$ (not necessarily optimal) to $J$ with corresponding clusters $P_1,\ldots,P_{k}$ and 
$1\geq \beta>0$ such that $\vert P_j\vert \geq n\beta$ for all $j\in \{1,\dots, k\}$.   We denote the cost of  $C^*$ by $V= \Hdist(Z,C^*)= \sum_{i=1}^{k}\Hdist(P_i,\{c^*_i\})$. 


\smallskip
Then there exists an algorithm $\A$ with the following specifications.  

\begin{itemize}\setlength\itemsep{-.8mm}
\item Input of $\A$ is  $X$, $k$, ${\RR}=\{R_1,\ldots,R_{m}\}$, $\delta,\varepsilon>0$, $0< \beta\leq 1$,
 and values 
$w_1,\ldots,w_{k}$ (promised bounds on the sizes of clusters $P_i$) such that for some constant $c$,  for each $j\in \{1,\dots, k\}$,
\[\frac{|P_j|}{c}\leq w_j\leq \frac{(1+\delta)|P_j|}{c}.\]  
\item  Output  of $\A$ is a  solution $C=\{c_1,\ldots,c_k\}$ to $J$ such that 
$\sum_{i=1}^k\Hdist(P_i,\{c_i\})\leq (1+\varepsilon)^2 (1+\delta)V$
with probability at least 
$\frac{\varepsilon \cdot \beta^{r\cdot k}}{1+\varepsilon}$, where   $r={\Theta\left (\frac{k}{\varepsilon^2} \log \frac{1}{\varepsilon}\right)}$,   and  
\item $\A$ runs in time $\OO\left( \left(\frac{k}{\varepsilon}\right)^2 \log \frac{1}{\varepsilon} \cdot  \sum_{i=1}^{m}\vert R_i \vert  \right)$.  
\end{itemize}



\end{lemma}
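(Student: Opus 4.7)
\textbf{Proof proposal for Lemma~\ref{lem:samplelargecluster}.}

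The plan is to mimic the proof of Lemma~\ref{lem:samplelargecluster0}, but with two twists: (i) we must sample from $X$ instead of from the unknown sets $P_j$, and (ii) we only know approximate cluster sizes $w_j$ rather than the true $|P_j|$. The algorithm $\A$ proceeds as follows: for each $j \in \{1,\dots,k\}$, draw a multiset $S_j$ of $r = \Theta(\frac{k}{\varepsilon^2}\log\frac{1}{\varepsilon})$ vectors uniformly at random (with repetition) from $X$, and output $C = \best{\cR}(S_1,\ldots,S_k,w_1,\ldots,w_k)$ in the sense of Definition~\ref{def:fidef}.

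The analysis splits into three steps. First, define the \emph{good event} $\mathcal{E}$ that for every $j\in\{1,\dots,k\}$, every vector drawn into $S_j$ belongs to $P_j$. Since a single uniform draw from $X$ lands in $P_j$ with probability $|P_j|/n \geq \beta$, and all $rk$ draws are independent, $\Pr(\mathcal{E}) \geq \beta^{rk}$. Moreover, conditional on $\mathcal{E}$, the vectors in $S_j$ are i.i.d.\ uniform samples from $P_j$ (this is the standard fact that uniform sampling conditioned on landing in a set is uniform on that set). Hence conditional on $\mathcal{E}$, the distributional setup matches exactly that of Lemma~\ref{lem:samplelargecluster0}.

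Second, I will re-run the expectation calculation of Lemma~\ref{lem:samplelargecluster0} with the approximate weights. Let $\tilde\bfw = (|P_1|,\dots,|P_k|)$ be the ``true'' weights and $\bfw = (w_1,\dots,w_k)$ the input ones; the hypothesis gives $\tilde w_j \leq c\, w_j \leq (1+\delta)\tilde w_j$. Since the $\argmin$ defining $\bfrho$ in Definition~\ref{def:fidef} is invariant under uniform scaling of the weights, I will work with $c\bfw$ throughout; the weighted distance $d^{c\bfw}$ then satisfies $d^{\tilde\bfw}(\cdot,\cdot) \leq d^{c\bfw}(\cdot,\cdot) \leq (1+\delta)\, d^{\tilde\bfw}(\cdot,\cdot)$. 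Repeating the chain of (in)equalities leading to Claim~\ref{claim:yiexp}, but invoking Lemma~\ref{lem:samplingMain} on $d^{c\bfw}$, for each coordinate $i$ I obtain
\begin{equation*}
\expect[g_i(\bfrho_i)\mid \mathcal{E}] \;\leq\; \expect[d^{c\bfw}(\bfrho_i,\bfp)\mid\mathcal{E}] \;\leq\; (1+\varepsilon)\, d^{c\bfw}(\bfu_i,\bfp) \;\leq\; (1+\varepsilon)(1+\delta)\, V_i,
\end{equation*}
where the last step uses $d^{c\bfw}(\bfu_i,\bfp) \leq d^{c\bfw}((\bfc_1^*[i],\dots,\bfc_k^*[i]),\bfp) \leq (1+\delta)V_i$. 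Summing over $i$ and setting $Y=\sum_{i=1}^k \Hdist(P_i,\{c_i\})$ gives $\expect[Y \mid \mathcal{E}] \leq (1+\varepsilon)(1+\delta) V$.

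Third, I apply Markov's inequality (Proposition~\ref{prop:markov}) with $a = 1+\varepsilon$: conditional on $\mathcal{E}$, $\Pr\bigl(Y \leq (1+\varepsilon)^2 (1+\delta) V \mid \mathcal{E}\bigr) \geq 1 - \tfrac{1}{1+\varepsilon} = \tfrac{\varepsilon}{1+\varepsilon}$. Multiplying by $\Pr(\mathcal{E}) \geq \beta^{rk}$ yields the stated success probability $\tfrac{\varepsilon \beta^{rk}}{1+\varepsilon}$. For the running time, the algorithm needs to compute, for each $i\in\{1,\dots,m\}$ and each $j\in\{1,\dots,k\}$, the counts $z^{(i)}(S_j)$ and $d^{(i)}(S_j)$ (total work $O(rkm)$), and then evaluate $f_i$ for every tuple in each $R_i$ in $O(k)$ time per tuple (total work $O(k \sum_i |R_i|)$). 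Using $|R_i|\geq 1$ and $r=\Theta(\tfrac{k}{\varepsilon^2}\log\tfrac{1}{\varepsilon})$, both terms are absorbed by $O\bigl(rk \sum_i |R_i|\bigr) = O\bigl((k/\varepsilon)^2 \log(1/\varepsilon) \sum_i |R_i|\bigr)$, as claimed. The main delicacy is the rescaling argument for the approximate weights: one needs to be careful that the $\argmin$ used by $\best{\cR}$ is unchanged by scaling by $c$, so the extra $(1+\delta)$ factor (rather than $(1+\delta)^2$) suffices on both the probability and the expectation bound.
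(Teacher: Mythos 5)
Your proof is correct and follows essentially the same route as the paper: sample $S_1,\ldots,S_k$ uniformly from $X$, condition on the event $\mathscr{E}$ that each $S_j\subseteq P_j$ (which has probability at least $\beta^{rk}$ and makes the conditional distribution match the setup of Lemma~\ref{lem:samplingMain}), bound $\expect[Y\mid\mathscr{E}]\leq(1+\varepsilon)(1+\delta)V$ coordinate-by-coordinate, and finish with Markov's inequality. The only cosmetic difference is bookkeeping: you scale the input weights by $c$ once and use the sandwich $d^{\tilde\bfw}\leq d^{c\bfw}\leq(1+\delta)d^{\tilde\bfw}$, whereas the paper keeps $c$ as a multiplicative factor and moves between $d^{\bfy}$ and $d^{\bfw}$ through an intermediate minimizer $\bfu^*$ — the underlying chain of inequalities and the resulting $(1+\varepsilon)(1+\delta)$ factor are identical.
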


\begin{proof}
Let $r$ be the constant defined for $k $ and $\varepsilon $ in Lemma~\ref{lem:samplingMain}. 
That is  $r =\Theta\left(\frac{k}{\varepsilon^2}\cdot \log \frac{1}{\varepsilon}\right)$.
 For   vector $\bfb=(b_1,\ldots,b_{k})\in \{0,1\}^{k}$, we define $I_b=\{j\in \{1,\dots, k\} \colon b_j=1 \}$. 
For vector set $B\subseteq X$, let $z^{(i)}(B)$  be the number of vectors in $B$ with the $i$-{th}  entry 
equal to $0$. Similarly, let $d^{(i)}(B) =\vert B\vert-z^{(i)}(B)$ be the number of vectors in $B$ with the $i$-{th}  entry 
equal to $1$.

\paragraph*{Algorithm.}
The  algorithm $\A$ is very simple.  We  sample $k$ times (with possible repetitions)
uniformly at random  $r$ vectors from $X$.  Thus we obtain $k$ sets of vectors $S_1, \dots, S_k$, each of the sets is of size $r$. 
Based on these samples we 
 output solution $C=\{\bfc_1,\ldots,\bfc_{k}\}$  as follows.   
For  $i\in \{1,\dots, m\} $ 
and a  $k$-tuple  $\bfb=(b_1,\ldots,b_{k})\in R_i$, let 
\begin{equation}
\label{eqn:fi}
f_i(b_1,\ldots,b_{k})=\sum_{j\in I_b} w_j \cdot z^{(i)}(S_j) + \sum_{j\in [k]\setminus I_b} w_j \cdot d^{(i)}(S_j).
\end{equation}
Then $C$ is the set of vectors minimizing functions $f_i$ subject to constraints $\RR$. More precisely, we define  a vector set 
 $C=\{\bfc_1,\ldots,\bfc_{k}\}$    such that 
$(\bfc_1[i],\ldots,\bfc_{k}[i])\in R_i$ and  $f_i(\bfc_1[i],\ldots,\bfc_{k}[i])=\min_{\bfb\in R_i} f_i(\bfb)$,  $1\leq i\leq m$.
In other words, $C=\best{{\cal R}}(S_1,\ldots,S_k,w_1,\ldots,w_k)$. 
 Clearly, $C$ is a solution to $J$. 

\paragraph*{Running time.}
We assume that  input vectors $X$ are  stored in an array and that we can sample a vector u.a.r from a set of $n$ vectors stored in an 
array in constant time.  
For each $i\in [m]$ and $\bfb\in R_i$, the computation of $f_i(\bfb)$ takes time $\OO(r\cdot k)$. Then  computations of functions  $f_i(\bfb)$ for all 
$i\in [m]$ and $b\in R_i$  require  $\OO\left(\left(\sum_{i=1}^{m}\vert R_i \vert \right) \left(\frac{k}{\varepsilon}\right)^2 \log \frac{1}{\varepsilon}\right)$ time. For each $i$, we use an array of length $k$ to store the $k$-tuple from  $R_i$ which gives the minimum of $f_i$ computed 
so far during the computation. Therefore the running time of the algorithm follows.


\paragraph*{Correctness.}
Let ${\mathscr E}$ be the event that for all $j\in \{1,\dots, k\}$, $S_j \subseteq P_j $.  Since $\vert P_j\vert \geq \vert X\vert \cdot \beta$ for all $j\in \{1,\dots, k\}$, we have that 
\begin{equation}
\label{obs:prE}
 \Pr({\mathscr E}) \geq \beta^{r\cdot k}. 
\end{equation}

From  now on we assume that the event ${\mathscr E}$ happened.  
Therefore we can think that each vector in $S_j $ is chosen uniformly at random from $P_j$ (with repetitions). 
Notice that the output $C=\{\bfc_1,\ldots,\bfc_{k}\}$ is a random variable. 
We define a random variable 
$
Y=\sum_{i=1}^{k}\Hdist(P_j,\{\bfc_j\}).
$

Now we prove that 
$\expectation{Y}{\mathscr E}\leq (1+\varepsilon)(1+\delta)V$.
We define functions $g_i$ for all $i\in [m]$, which denotes the cost of  each element in the relation $R_i$ 
with respect to the partition $P_1\uplus\ldots \uplus P_{k}$ of $Z$. Formally, 
for each $\bfb=(b_1,\ldots,b_{k})\in R_i$, we put
\begin{eqnarray}
g_i(\bfb)=\sum_{j\in I_b} z^{(i)}(P_j) + \sum_{j\in [k]\setminus I_b}d^{(i)}(P_j).  \label{eqngi}
\end{eqnarray}
Let $V_i=g_i(\bfc_1^*[i],\ldots,\bfc_{k}^*[i])$. Notice that $V=\sum_{i=1}^{m}V_i$. 
Let $Y_i=g_i(\bfc_1[i],\ldots,\bfc_{k}[i])$, $1\leq i \leq m$. By \eqref{eqngi} and the definition of $Y$, we have that $Y=\sum_{i=1}^{m}Y_i$. 
By the linearity of conditional expectation, we have that  $\expectation{Y}{\mathscr E}=\sum_{i=1}^{m}\expectation{Y_i}{\mathscr E}$. 
Thus, to prove that $\expectation{Y}{\mathscr E}\leq (1+\varepsilon)(1+\delta)V$, it is sufficient  to prove that 
  $\expectation{Y_i}{\mathscr E}\leq (1+\varepsilon)(1+\delta)V_i $ for every $i\in [m]$. 
\begin{claim}
\label{claim:yiexp}
For every  $i\in [m]$, $\expectation{Y_i}{{\mathscr E}}\leq (1+\varepsilon)(1+\delta)V_i$.
\end{claim}

\begin{proof}
Fix an index $i\in [m]$.  Let $z_j=z^{(i)}(P_j)$ and $d_j=d^{(i)}(P_j)$. Thus $z_j$ ($d_j$) is the number of vectors from $P_j$ whose $i$-th coordinate is $0$  ($1$). Let $n_j=\vert P_j\vert$ and $p_j=\frac{d_j}{n_j}$,   $j\in \{1,\dots, k\}$. 
Since we assume that  ${\mathscr E}$ happened, each vector from 
$S_j$ is equally likely to be any vector in $P_j$. That is, for 
every vector $v\in S_j$, we have that 
\begin{equation}
\label{eqn:pi1}
 \probability{v[i]=1}{{\mathscr E}}
 =\frac{d_j}{n_j}=p_j \qquad\mbox{ and } \qquad  \probability{v[i]=0}{{\mathscr E}}
 =\frac{z_j}{n_j}=1-p_j.  
\end{equation} 
Let $\bfp=(p_1,\ldots,p_{k})$.  We define  $k$-tuple   $\bfy=(n_1, \dots, n_k)$  
and  claim that for every $\bfb=(b_1,\ldots,b_{k})\in R_i$, $d^{\bfy}(\bfb,\bfp)=g_i(\bfb)$. Indeed,
\begin{eqnarray}
d^{\bfy}(\bfb,\bfp)&=&\sum_{j=1}^{k} n_j \vert b_j - p_j \vert \nonumber\\
&=&\sum_{j\in I_b} n_j (1 - p_j)+\sum_{j\in [k]\setminus I_b} n_j  \cdot p_j \nonumber\\
&=& \sum_{j\in I_b} z_j+\sum_{j\in [k]\setminus I_b} d_j  \nonumber\\
&=&g_i(\bfb). \label{eqn:wgi1}
\end{eqnarray}

%

For each set $S_j=\{v_1, \dots, v_r\}$, $1\leq j\leq k$   and $1\leq q\leq r$, we define random variable $L_{j}^{q}$ which is $1$ when $v_{q}[i]=1$ and $0$ otherwise.
Let us denote by  $X_{j}^{q}$ the random variable $L_{j}^{q}\mid {\mathscr E}$. 
By \eqref{eqn:pi1}, we have that $X_{j}^{q}\sim B(p_j)$ for all $1 \leq j \leq k$ and $1 \leq q \leq r$. 
Let $\bfQ = (Q_1, \ldots Q_k)$ be the $k$-tuple of random variables,  where $Q_j = \frac{1}{r}\sum_{q=1}^r X_j^{q}$.
From the definitions of $z^{(i)}(S_j)$, $d^{(i)}(S_j)$ and $X_j^q$, we have that 
\begin{eqnarray}
z^{(i)}(S_j)=\sum_{q=1}^r (1-X_{j}^{q})  \quad \mbox{ and } \quad 
d^{(i)}(S_j)=\sum_{q=1}^r X_{j}^{q}. \label{eqn:zd}
\end{eqnarray}

Let $\bfw=(w_1,\ldots,w_k)$. For   $\bfb\in R_i$, we 
  upper bound $d^\bfw(\bfb,\bfQ)$ in terms of $f_i(\bfb)$. 
\begin{eqnarray}
 d^\bfw(\bfb,\bfQ)&=&\sum_{j=1}^{k} w_j \vert b_i - Q_j \vert \nonumber\\
&\leq & \sum_{j\in I_b} w_j (1 - Q_j)+\sum_{j\in [k]\setminus I_b} w_j  \cdot Q_j\nonumber\\
&\leq &  \sum_{j\in I_b} w_j \cdot \left(1- \frac{1}{r}\sum_{q=1}^r X_{j}^{q}\right) + \sum_{j\in [k]\setminus I_b} w_j \cdot \left(\frac{1}{r}\sum_{q=1}^r X_{j}^{q}\right)\nonumber\\
&\leq & \frac{1}{ r} \left(\sum_{j\in I_b} w_j \cdot \left(\sum_{q=1}^r (1-X_{j}^{q})\right) + \sum_{j\in [k]\setminus I_b} w_j \cdot \left(\sum_{q=1}^r X_{j}^{q}\right)\right)\nonumber\\
&\leq & \frac{1}{r} \left( \sum_{j\in I_b} w_j \cdot z^{(i)}(S_j) + \sum_{j\in [k]\setminus I_b} w_j \cdot d^{(i)}(S_j)\right)\qquad\qquad\qquad \mbox{(by \eqref{eqn:zd})}\nonumber\\
&\leq& \frac{1}{r} f_i(\bfb). \qquad\qquad\qquad\qquad (\mbox{by \eqref{eqn:fi}}) \label{eqn:ficorrect}
\end{eqnarray}

Let  
\[
\bfrho=\argmin_{\bfx\in R_i} d^\bfw(\bfx,\bfQ)
.\]
By \eqref{eqn:ficorrect} and by the definition of the vector set $C=\{\bfc_1,\ldots,\bfc_{k}\}$, we have that $\bfrho=(\bfc_1[i],\ldots,\bfc_{k}[i])$. 
 
 We also define $k$-tuples 
 \[
 \bfu=\argmin_{\bfx\in R_i} d^\bfw(\bfx,\bfp) \text{ and } \bfu^*=\argmin_{\bfx\in R_i} d^\bfy(\bfx,\bfp)
.\]

Thus the minimum weighted by $\bfw$ distance  $d_{min} $ from $\bfp$ to a $k$-tuple from $R_i$ is equal to $d^\bfw( \bfu,\bfp)$.    Let $D$ be the random variable which is a minimum weighted by $\bfw$ distance from $\bfrho$ to $\bfp$. By Lemma~\ref{lem:samplingMain}, 
\begin{equation}
\label{eq:samplingMain}
\expectation{d^\bfw(\bfrho,\bfp)}{\mathscr E}=  \expect[D] \leq (1+\epsilon)d_{min}=  (1+\epsilon) d^\bfw( \bfu,\bfp) . \end{equation}

Since $(\bfc_1^*[i],\ldots,\bfc_{k}^*[i])\in R_i$, 
and because $d^{\bfy}(\bfb,\bfp)=g_i(\bfb)$  for each $\bfb\in R_i$ (by \eqref{eqn:wgi1}), we have that 
\begin{equation}
\label{eqn:dminapp2}
d^\bfy(\bfu^*,\bfp) \leq g_i(\bfc_1^*[i],\ldots,\bfc_{k}^*[k])= V_i. 
\end{equation}

Finally we upper bound $\expectation{Y_i}{\mathscr E}$. 
\begin{eqnarray*}
\expectation{Y_i}{\mathscr E}&=&\expectation{g_i(\bfrho)}{\mathscr E}\\
&=&\expectation{d^\bfy(\bfrho,\bfp)}{\mathscr E} \qquad\qquad\qquad\quad(\mbox{by \eqref{eqn:wgi1}})\\
&\leq&c\cdot \expectation{d^\bfw(\bfrho,\bfp)}{\mathscr E}\qquad\qquad\quad (\mbox{because } n_j\leq cw_j \mbox{ for all } j\in [k])\nonumber\\
&\leq&c\cdot (1+\varepsilon)\cdot d^\bfw( \bfu,\bfp) \qquad\qquad (\mbox{by  \eqref{eq:samplingMain})}\nonumber\\
&\leq&c\cdot (1+\varepsilon)\cdot d^\bfw(\bfu^*,\bfp) \qquad\qquad (\mbox{by the choice of $ \bfu$})\nonumber\\
&\leq& (1+\varepsilon)(1+\delta) d^{\bfy}(\bfu^*,\bfp)  \qquad\quad (\mbox{because }  cw_j \leq (1+\delta) n_j \mbox{ for all } j\in [k]) \\
&\leq& (1+\varepsilon)(1+\delta) V_i. \qquad\qquad\qquad (\mbox{by \eqref{eqn:dminapp2}})
 \end{eqnarray*}
 This completes the proof of the claim.
\end{proof}

By Claim~\ref{claim:yiexp}, the fact that $V=\sum_{i=1}^{m} V_i$ and by the linearity of expectation,  we have that 
\begin{eqnarray*}
\expectation{Y}{\mathscr E}\leq (1+\varepsilon)(1+\delta) V .\label{eqn:expYgivenE}
\end{eqnarray*}
Combined with   Markov's inequality (Proposition~\ref{prop:markov}), this implies that 

\[
\probability[1]{Y \geq (1+\varepsilon)^2(1+\delta) V}{  {\mathscr E}} \leq \frac{1}{1+\varepsilon}.
\]
Therefore 
\begin{equation}
\probability[1]{Y \leq (1+\varepsilon)^2(1+\delta) V}{{\mathscr E}}\geq \frac{\varepsilon}{1+\varepsilon}. \label{eqn:YEhighPr}
\end{equation}
Finally, 
\begin{eqnarray*}
\Pr(Y\leq (1+\varepsilon)^2(1+\delta) V) &\geq& \Pr\left ((Y \leq (1+\varepsilon)^2(1+\delta)V) \cap {\mathscr E}\right)\\
&=&\probability[1]{Y \leq (1+\varepsilon)^2(1+\delta) V}{{\mathscr E}}  \cdot \Pr({\mathscr E}) \\
&\geq & \frac{\varepsilon}{1+\varepsilon} \cdot \beta^{r\cdot k}. \qquad (\mbox{by \eqref{eqn:YEhighPr} and  ~\eqref{obs:prE}})\\
\end{eqnarray*}
This completes the 
proof of the lemma.  
\end{proof}



\section{Non-irreducible instances and extendable solutions}\label{sec:non-reduceb}

For the algorithm of {\sc $k$-means clustering} Kumar et al.~\cite{KumarSS10} used the notion of 
{\em irreducible} instances. We introduce a similar notion for \rclustering.  

\subsection{Non-irreducible instances}
\begin{definition}
Let $J=(X,k,{\cal R})$ be an instance of \rclustering, $j\in \{2,\ldots,k\}$ and $\alpha>0$. We say that $J$ is 
$(j,\alpha)$-irreducible if $\opt_{j-1}(J)\geq (1+\alpha)\opt_j(J)$. 
\end{definition}


The following property of irreducible instances plays an important role in our algorithm. 
\begin{lemma}
\label{lem:irr}
Let $J=(X,k,{\cal R})$ be an instance of \rclustering. For  every
 $0<\epsilon\leq 4$ and $0<\alpha\leq \frac{\epsilon}{8k}$, the following holds. Let 
 \[
\widehat{k}=\begin{cases}
1, \text{ if } J \text{ is  not  $(i,\alpha)$-irreducible for all } i\in\{2,\dots, k\},\\
\max\{i~|~J \text{ is  $(i,\alpha)$-irreducible}\},   \text{ otherwise}.
\end{cases}
 \]
Then $\opt_{\widehat{k}}(J)\leq (1+\frac{\epsilon}{4})\opt(J)$.
%
\end{lemma}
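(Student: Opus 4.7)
The plan is to argue by telescoping along the chain
$\opt_1(J) \geq \opt_2(J) \geq \cdots \geq \opt_k(J) = \opt(J)$,
using the maximality of $\widehat{k}$ to bound how fast the sequence can change between indices $\widehat{k}$ and $k$.

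By the definition of $\widehat{k}$, for every index $i \in \{\widehat{k}+1, \dots, k\}$ the instance $J$ fails to be $(i,\alpha)$-irreducible, which by the definition of irreducibility unpacks to
$\opt_{i-1}(J) < (1+\alpha)\,\opt_i(J)$.
Chaining these $k - \widehat{k}$ inequalities gives
$\opt_{\widehat{k}}(J) < (1+\alpha)^{k-\widehat{k}}\,\opt_k(J) \leq (1+\alpha)^{k}\,\opt(J).$
The case distinction in the definition of $\widehat{k}$ is essentially cosmetic: if no $(i,\alpha)$-irreducible index exists, then the chain runs all the way from $i = 2$ to $i = k$, yielding the exponent $k-1 \leq k$; if $\widehat{k} = k$, the statement is trivial; in all intermediate cases the exponent is at most $k$.

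It then suffices to prove the numerical estimate $(1+\alpha)^k \leq 1 + \epsilon/4$. Since $\alpha \leq \epsilon/(8k) \leq 1/(2k)$, I would use the bookkeeping bound $(1+\alpha)^k \leq e^{k\alpha} \leq e^{\epsilon/8}$, together with the elementary inequality $e^x \leq 1 + 2x$ valid on $[0,1]$ (applicable here since $\epsilon/8 \leq 1/2$ given $\epsilon \leq 4$). This yields $(1+\alpha)^k \leq 1 + \epsilon/4$, and combining with the chained bound above gives $\opt_{\widehat{k}}(J) \leq (1+\epsilon/4)\,\opt(J)$, as required.

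There is no real obstacle: the whole argument is a one-line telescoping followed by a standard calculus inequality. The only thing to be careful about is the interplay between the three case definitions of $\widehat{k}$ (the trivial cases $\widehat{k}=1$ and $\widehat{k}=k$ must be stated explicitly so that the telescoping chain is unambiguous), and to make sure that the chosen relation between $\alpha$, $k$, and $\epsilon$ (namely $\alpha \leq \epsilon/(8k)$) is tight enough that $e^{k\alpha} \leq 1 + \epsilon/4$ for the whole allowed range $0 < \epsilon \leq 4$.
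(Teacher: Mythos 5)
Your proposal is correct and uses the same telescoping argument as the paper: chain the failure of $(i,\alpha)$-irreducibility for $i>\widehat{k}$ to obtain $\opt_{\widehat{k}}(J)\leq(1+\alpha)^{k}\,\opt(J)$, and then show $(1+\alpha)^{k}\leq 1+\epsilon/4$. The only cosmetic difference is in that last numerical step --- the paper expands $(1+\alpha)^{k}$ binomially and compares to a geometric series, whereas you route through $(1+\alpha)^{k}\leq e^{k\alpha}\leq e^{\epsilon/8}$ and the convexity bound $e^{x}\leq 1+2x$ on $[0,1]$ --- but both are routine and give the identical conclusion.
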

\begin{proof}
By the choice of $\widehat{k}$, for every $\widehat{k}\leq i <k$, we have that   $\opt_{i}(J)\leq (1+\alpha)\opt_{i+1}(J)$. 
 Thus   $\opt_{\widehat{k}}(J)\leq (1+\alpha)^{{k}}\opt_{k}(J)=(1+\alpha)^{{k}}\opt(J)$. Since
\begin{eqnarray*}
 (1+\alpha)^{{k}}&\leq &(1+\frac{\epsilon}{8k})^{{k}}  =   \sum_{i=0}^{ k} \binom{ k}{i}\left(\frac{\epsilon}{8k}\right)^i \\
&\leq &  \sum_{i=0}^{ k} \left(\frac{\epsilon}{8}\right)^i  
 =   \left(1+\frac{\epsilon}{8}\sum_{i=0}^{k-1} \left(\frac{\epsilon}{8}\right)^{i}\right) \\
&\leq&  \left(1+\frac{\epsilon}{8}\sum_{i=0}^{k-1} \left(\frac{1}{2}\right)^{i}\right)  \leq   \left(1+\frac{\epsilon}{4}\right),    \qquad\qquad (\mbox{since }\epsilon\leq 4)  
\end{eqnarray*} 
we have that 
 $\opt_{\widehat{k}}(J)\leq  \left(1+\frac{\epsilon}{4}\right)\opt(J)$.
\end{proof}

Due to Proposition~\ref{prop:smallsolution} and Lemma~\ref{lem:irr},  to obtain 
an $(1+\epsilon)$-approximate solution to  \rclustering, it is sufficient to learn how to approximate irreducible instances. Indeed, let 
$J=(X,k,{\cal R}=\{R_1,\ldots,R_k\})$ be an instance to \rclustering and suppose that for $\epsilon>0$ and  $\alpha\leq \frac{\epsilon}{8k}$, instance 
$J $ is not $(k,\alpha)$-irreducible.  
Then   for the index $\widehat{k}$ defined in Lemma~\ref{lem:irr}, we have that 
$\widehat{k}<k$ and $\opt_{\widehat{k}}(J)\leq  \left(1+\frac{\epsilon}{4}\right)\opt(J)$.  The definition of $\opt_{\widehat{k}}(J)$, implies that 
\[\opt_{\widehat{k}}(J)=\min\{ \opt(X,\widehat{k},\proj{\RR}{I})~ |~ I\subseteq \{1,\dots, k\} \text{ and } |I|=\widehat{k}\}.\]

%
We can make a guess for the value of $\widehat{k}$ and then guess an index subset  $I\subset \{1,\dots, k\}$ of size $\widehat{k}$. For each of the $(k-1)\cdot 2^{k-1}$ guesses of  $\widehat{k}$ and $I$, we 
form  instance $J'=(X, \widehat{k} ,\proj{\RR}{I})$. We know that for at least one of our guesses, we will have an $(\widehat{k}, \alpha)$-irreducible instance  $J'$ with  $\opt(J')=\opt_{\widehat{k}}(J)$.
%
%
%
 By Lemma~\ref{lem:irr} and Proposition~\ref{prop:smallsolution}, any $(1+\epsilon/4)$-approximate solution to $J'$ is extendable in linear time to a  $(1+\epsilon/4)^2$-approximate solution (and hence a  $(1+\epsilon)$-approximate solution) to $J$. As a result if $J$ 
is not $(k,\alpha)$-irreducible, then  a $(1+\epsilon/4)$-approximate solution to $J'$ will 
bring us  to  a $(1+\epsilon)$-approximate solution   to $J$.  
Hence everything boils down to the approximation of $(k,\alpha)$-irreducible instances.

\subsection{Extendable solutions}

By Lemma~\ref{lem:samplelargecluster},  if there is solution such that  the sizes of all  corresponding clusters
are constant fractions of the number of the input vectors, then   sampling  produces a good approximate solution.
 However, there is no guarantee that such a favorable  condition will occur.   
 To overcome this we sample vectors 
for large clusters and then identify some vectors in the input which we can {\em safely} delete 
and make the next largest remaining cluster  a constant fraction of the rest of the vectors.    
Towards that we need the following definition.


\begin{definition}[$\delta$-extension of a solution]
Let $J=(X,k,{\RR})$ be an instance of \rclustering\ and $\delta\geq 0$. Let $B\subseteq X$ and $C_1\subseteq \{0,1\}^m$ be a set of $k_1$ vectors for some $k_1\leq k$. The pair $(C_1,B)$ is called {\em $\delta$-extendable for instance $J$} if there is a vector set $C_2\subseteq \{0,1\}^m$ of size $k-k_1$  such that  $C_1\cup C_2$  satisfies $\RR$  and $\Hdist(B,C_1)+\Hdist(X\setminus B, C_1\cup C_2)\leq (1+\delta)\opt(J)$.    
We say that $C_2$ is a {\em $\delta$-extension  of $(C_1,B)$}.
\end{definition}

In particular, if $(C_1,B)$ is $\delta$-extendable for $J$, then there is a  set $C\supseteq C_1$ 
such that $C$ is a  solution to $J$  with cost  at most $(1+\delta)\opt(J)$ even when $B$ is assigned to the clusters corresponding to the center vectors in $C_1$. This implies that after we find such a set $C_1$, 
it is safe to delete the vectors $B$ from the input set of vectors. If $C_2$ is a $\delta$-extension of $(C_1,B)$, then 
there exist index subset $I\subseteq\{1,\dots, k\}$ of size $\vert C_1\vert$  
such that $C_1$ satisfies $\proj{\RR}{I}$ and $C_2$ satisfies $\RR(I,C_1)$. 
%
The proof of the next observation follows directly from the definition of   $\delta$-extention.

\begin{observation}
\label{obs:largeext}
Let $J=(X,k,{\cal R})$ be an instance of \rclustering,  $\delta'\geq \delta\geq 0$, and  $(C_1,B)$ be a $\delta$-extendable pair for $J$. Then
 \begin{itemize}
 \item 
 $(C_1,B)$ is also $\delta'$-extendable for $J$, 
 \item  if $C_2$ is a $\delta$-extension of $(C_1,B)$, then $C_2$ is also a $\delta'$-extension of $(C_1,B)$, and 
\item for each $B'\subseteq B$, pair  $(C_1,B')$  is $\delta$-extendable for $J$. 
\end{itemize}
\end{observation}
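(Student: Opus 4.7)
The plan is to verify each of the three clauses directly from the definition of $\delta$-extension, using the same witness $C_2$ in every case, so no new construction is required.

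For the first two bullets, the only fact needed is monotonicity of the bound in $\delta$. Suppose $C_2$ is a $\delta$-extension of $(C_1,B)$, so that $C_1\cup C_2$ satisfies $\RR$ and $\Hdist(B,C_1)+\Hdist(X\setminus B, C_1\cup C_2)\leq (1+\delta)\opt(J)$. Since $\delta'\geq \delta$, we have $(1+\delta)\opt(J)\leq (1+\delta')\opt(J)$, so the same $C_2$ immediately witnesses that $(C_1,B)$ is $\delta'$-extendable and that $C_2$ is itself a $\delta'$-extension. Both bullets follow at once.

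For the third bullet, fix $B'\subseteq B$ and let $C_2$ be a $\delta$-extension of $(C_1,B)$. I would argue that the same $C_2$ is a $\delta$-extension of $(C_1,B')$. The partition $X = B' \uplus (B\setminus B') \uplus (X\setminus B)$ lets me write
\begin{align*}
\Hdist(B',C_1) &+ \Hdist(X\setminus B', C_1\cup C_2) \\
&= \Hdist(B',C_1) + \Hdist(B\setminus B', C_1\cup C_2) + \Hdist(X\setminus B, C_1\cup C_2).
\end{align*}
The key monotonicity step is that enlarging the target set cannot increase the Hamming distance to the nearest point, so $\hdist(\bfx, C_1\cup C_2)\leq \hdist(\bfx, C_1)$ for every $\bfx\in B\setminus B'$, giving $\Hdist(B\setminus B', C_1\cup C_2)\leq \Hdist(B\setminus B', C_1)$. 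Therefore the right-hand side is at most $\Hdist(B,C_1)+\Hdist(X\setminus B, C_1\cup C_2)\leq (1+\delta)\opt(J)$, which is exactly what is required.

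Honestly there is no real obstacle here; the statement is little more than a bookkeeping lemma that isolates two properties, namely monotonicity of $(1+\delta)\opt(J)$ in $\delta$ and monotonicity of $\Hdist(\cdot, \cdot)$ under enlarging the second argument. The main care needed is to keep track that the witness $C_2$ (and hence the companion index set $I\subseteq\{1,\ldots,k\}$ with $|I|=|C_1|$ implicit in the extendability condition) does not change across the three claims, so that the ``satisfies $\RR$'' part of the definition carries over for free.
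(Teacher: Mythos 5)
Your proof is correct and is exactly the direct verification the paper has in mind (the paper omits the proof, remarking only that the observation ``follows directly from the definition''). Reusing the same witness $C_2$ for all three bullets, invoking monotonicity of $(1+\delta)\opt(J)$ in $\delta$, and bounding $\Hdist(B\setminus B', C_1\cup C_2)\leq \Hdist(B\setminus B',C_1)$ for the third bullet is precisely the intended argument.
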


Let $(C_1,B)$ be a pair which is  $\delta$-extendable for $J$ and $C\supseteq C_1$ be such that 
$C\setminus C_1$ is a  $\delta$-extension of $(C_1,B)$.  While the set  $C\setminus C_1$ is not known to us and we do not know yet how to compute it, as we will see in the following lemma, we can proceed successfully even  if only the minimum Hamming distance $t$ between vectors in $C_1$ and $C\setminus C_1$ is known.
%
We show that if we know $t$, then 
we can find a set of vectors $B'\subseteq X\setminus B$ such that $B'$ is not only   {  safe} to delete 
but  the number of vectors in $X\setminus (B\cup B')$ which will be assigned to clusters corresponding to $C_1$ (in the solution $C$)
is at most a constant fraction of the   number of vectors in $X\setminus (B\cup B')$. In other words the number of vectors  which will be assigned to clusters corresponding to $C\setminus C_1$ will be at least a  constant fraction of the   number of vectors in $X\setminus (B\cup B')$. This information is of crucial importance; it yields that from the remaining set of vectors at least one of the clusters assigned to  $C\setminus C_1$ is large, and  thus could be found by sampling.

\begin{lemma}
\label{lem:safetoremove}
Let $J=(X,k,{\cal R} )$ be an instance of \rclustering\ and $\delta \geq 0$. Let  $(C_1,B)$, where $B\subseteq X$ and $C_1\subseteq \{0,1\}^m$, 
be a {$\delta$-extendable pair  for $J$}. Let $C_2$ be a $\delta$-extension of $(C_1,B)$   and $t=\min \{\hdist(\bfc,\bfc')\colon \bfc\in C_1, \bfc'\in C_2\}$. 
Let $(Z_1, Z_2)$ be a partition of $X\setminus B$  such that 
 $\Hdist(X\setminus B, C_1\cup C_2)=
\Hdist(Z_1, C_1) + \Hdist(Z_2,C_2)$. 
Let
 $B'=\bigcup_{\bfc\in C_1}\BB(\bfc,t/2)\cap (X\setminus B)$.    
Then 
the following conditions hold. 
\begin{itemize}
\item[$(i)$] $B'\subseteq Z_1$. Moreover,   $B'$ consists of the first $\vert B'\vert$ vectors of $X\setminus B$ in the ordering  
according to the non-decreasing distance $\hdist(\bfx,C_1)$ (where $\bfx\in X\setminus B$). 
\item[$(ii)$] $\Hdist(X\setminus (B\cup B'),C_1\cup C_2)=\Hdist(Z_1\setminus B',C_1)+\Hdist(Z_2,C_2)$. Moreover,   $(C_1,B\cup B')$ is $\delta$-extendable for $J$ and $C_2$ is a $\delta$-extension of $(C_1,B\cup B')$.   
\item [$(iii)$] If $J$ is $(k,5\delta')$-irreducible for some $\delta'\geq \delta$, then  $\vert Z_2\vert \geq \left(\frac{\delta'}{1+\delta'}\right) \vert X\setminus (B\cup B') \vert$.
If in addition,     $\vert B'\vert \leq \frac{\vert X\setminus B\vert}{2}$, then 
$\vert Z_2 \vert \geq  \left( \frac{\delta'}{2(1+\delta')}\right) \vert X \setminus B \vert.$
\end{itemize} 
\end{lemma}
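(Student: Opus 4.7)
I would apply the triangle inequality directly. For any $\bfx \in B'$ there is a $\bfc \in C_1$ with $\hdist(\bfx,\bfc) < t/2$, so for every $\bfc' \in C_2$,
\[
\hdist(\bfx,\bfc') \;\geq\; \hdist(\bfc,\bfc') - \hdist(\bfx,\bfc) \;>\; t - t/2 \;=\; t/2 \;>\; \hdist(\bfx,\bfc).
\]
Hence the center in $C_1 \cup C_2$ closest to $\bfx$ lies in $C_1$, giving $\bfx \in Z_1$. The ordering claim follows because every $\bfx \in B'$ has $\hdist(\bfx, C_1) < t/2$ while every $\bfx \in X \setminus (B \cup B')$ has $\hdist(\bfx, C_1) \geq t/2$, so sorting $X \setminus B$ non-decreasingly by $\hdist(\cdot, C_1)$ places exactly $B'$ first.

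\textbf{Part (ii).} Since $B' \subseteq Z_1$ by (i), every $\bfx \in B'$ satisfies $\hdist(\bfx, C_1 \cup C_2) = \hdist(\bfx, C_1)$. Partitioning $X \setminus (B \cup B') = (Z_1 \setminus B') \cup Z_2$ with the same ``optimal'' assignment gives the first equality. For extendability, regroup terms:
\[
\Hdist(B \cup B', C_1) + \Hdist(X \setminus (B \cup B'), C_1 \cup C_2) = \Hdist(B, C_1) + \Hdist(Z_1, C_1) + \Hdist(Z_2, C_2),
\]
which is exactly the $\delta$-extendability cost of $(C_1, B)$ with extension $C_2$, hence $\leq (1+\delta)\opt(J)$.

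\textbf{Part (iii) --- main engine.} The plan is to compare $C_1 \cup C_2$ against a $(k-1)$-solution obtained by dropping one $\bfc \in C_2$, and the hard part will be choosing $\bfc$ so the resulting bound on $|Z_2|$ scales as $1/t$ rather than $1/m$. Fix $\bfc \in C_2$ and let $\tilde C = C_1 \cup (C_2 \setminus \{\bfc\})$. This respects an appropriate projection of $\RR$, so $\Hdist(X, \tilde C) \geq \opt_{k-1}(J) \geq (1+5\delta')\opt(J)$ by $(k,5\delta')$-irreducibility. Reassigning the cluster $Z_{2,\bfc} := \{\bfx \in Z_2 : \bfc_\bfx = \bfc\}$ to its nearest point in $C_1$ via triangle inequality gives $\Hdist(X, \tilde C) \leq (1+\delta)\opt(J) + |Z_{2,\bfc}| \hdist(\bfc, C_1)$, so for every $\bfc \in C_2$,
\[
|Z_{2,\bfc}| \cdot \hdist(\bfc, C_1) \;\geq\; 4\delta' \opt(J).
\]

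\textbf{Part (iii) --- completion.} The critical observation is to apply this to $\hat \bfc \in C_2$ realizing $\hdist(\hat \bfc, C_1) = t$ (which exists by the definition of $t$): $|Z_2| \geq |Z_{2,\hat\bfc}| \geq 4\delta' \opt(J)/t$. Meanwhile, every $\bfx \in Z_1 \setminus B'$ satisfies $\hdist(\bfx, C_1) \geq t/2$, so
\[
|Z_1 \setminus B'| \cdot t/2 \;\leq\; \Hdist(Z_1, C_1) \;\leq\; (1+\delta)\opt(J),
\]
giving $|Z_1 \setminus B'| \leq 2(1+\delta)\opt(J)/t$. Dividing yields $|Z_2|/|Z_1 \setminus B'| \geq 2\delta'/(1+\delta)$, and a short algebraic manipulation (using $\delta \leq \delta'$ in the parameter regime of interest) translates this to $|Z_2|/|X \setminus (B \cup B')| \geq \delta'/(1+\delta')$. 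The ``if in addition'' conclusion is immediate from $|X \setminus (B \cup B')| = |X \setminus B| - |B'| \geq |X \setminus B|/2$.
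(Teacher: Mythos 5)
Your proof is correct and follows essentially the same route as the paper's: parts (i) and (ii) are identical, and part (iii) rests on the same two facts — every vector of $Z_1\setminus B'$ is at Hamming distance at least $t/2$ from $C_1$, and deleting the $C_2$-center realizing $t$ and reassigning its cluster to $C_1$ via the triangle inequality would contradict $(k,5\delta')$-irreducibility unless that cluster is large. The paper packages this as a proof by contradiction comparing $|Z_1\setminus B'|$ directly to the size of that cluster, whereas you derive the two explicit bounds in terms of $\opt(J)/t$ and take their ratio; the content is the same.
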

\begin{proof}
We start with  $(i)$. 
 Since $t=\min \{\hdist(\bfc,\bfc')\colon \bfc\in C_1, \bfc'\in C_2\}$, for any vector $\bfx$ in $\bigcup_{\bfc\in C_1}\BB(\bfc,t/2)$, the value 
 $\hdist(\bfx,C_2)$ is strictly greater  than $t/2$. Because $\Hdist(X\setminus B, C_1\cup C_2)=
\Hdist(Z_1, C_1) + \Hdist(Z_2,C_2)$, we conclude that $B'\subseteq Z_1$. 
 Now if we order the vectors  of $X\setminus B$ according to their   distances to $C_1$ (the smallest distance comes first and ties are broken arbitrarily), then the first $|B'|$ vectors in this ordering are within distance strictly less than  $t/2$ from $C_1$, while for any other vector $\bfx \not\in B'$, $\hdist(\bfx,C_1)\geq t/2$. This concludes the proof of  $(i)$.

  

Before proceeding with  $(ii)$ and  $(iii)$, we observe the following.
First,  since $C_2$ is a $\delta$-extension of $(C_1,B)$, we have that 
\begin{equation*}
\label{eqn:firstobs}
\Hdist(X,C_1\cup C_2)\leq \Hdist(B,C_1)+\Hdist(X\setminus B,C_1\cup C_2)\leq (1+\delta)\opt(J).
\end{equation*}
By the definition of sets $Z_1$ and $Z_2$, we have
that 
 \begin{eqnarray*}
\Hdist(B,C_1)+\Hdist(X\setminus B,C_1\cup C_2) &=& \Hdist(B,C_1)+  \Hdist( Z_1,C_1)+\Hdist(Z_2,C_2)\\ &=&\Hdist(B\cup Z_1,C_1)+\Hdist(Z_2,C_2).
\end{eqnarray*}

\begin{equation}
\label{eqn:firstobs}
\mbox{Thus, }\qquad\Hdist(X,C_1\cup C_2)\leq \Hdist(B\cup Z_1,C_1)+\Hdist(Z_2,C_2)\leq (1+\delta)\opt(J).
\end{equation}

We need one more observation and its proof follows directly from the definition of sets $Z_1$ and $Z_2$.
 \begin{observation}
\label{obs:sameZ2}
For every $\bfx\in Z_1$, $\hdist(\bfx, C_1\cup C_2)=\hdist(\bfx, C_1)$ 
and for every $\bfy\in Z_2$, $\hdist(\bfy, C_1\cup C_2)=\hdist(\bfy, C_2)$. 
\end{observation}

Now we prove condition $(ii)$. 
 First, by Observation~\ref{obs:sameZ2} and the fact that $B'\subseteq Z_1$, 
we have that 
\begin{equation}
\label{eqn:condition2sec}
\Hdist(X\setminus (B\cup B'),C_1\cup C_2)=\Hdist(Z_1\setminus B',C_1)+\Hdist(Z_2,C_2).
\end{equation}
To prove that $(C_1,B\cup B')$ is $\delta$-extendable for $J$ and that $C_2$ is a $\delta$-extension of $(C_1,B\cup B')$, we note that
\begin{eqnarray*}
\Hdist(B\cup B',C_1) &+&\Hdist(X\setminus (B\cup B'),C_1\cup C_2)\\ 
 &=& \Hdist(B\cup B',C_1)+\Hdist(Z_1\setminus B',C_1)+ \Hdist(Z_2,C_2) \qquad\quad (\mbox{by }\eqref{eqn:condition2sec}) \\
&\leq&(1+\delta) \opt(J).  \qquad\qquad\qquad\qquad\qquad\qquad(\mbox{by }\eqref{eqn:firstobs} 
\mbox{ and because }B'\subseteq Z_1) 
\end{eqnarray*} 


Now we prove condition $(iii)$ of the lemma. 
Let $C_2=\{\bfc_1,\ldots,\bfc_{\ell}\}$, where $\ell=k-\vert C_1\vert$. Let   $Y_{1}\uplus\cdots \uplus Y_{\ell}$ 
be a  partition of $Z_2$ such that 
\begin{equation*}
\Hdist(Z_2,C_2)=\sum_{j=1}^{\ell} \sum_{\bfy\in Y_j} \hdist(\bfc_j,\bfy).
\end{equation*}
Let vector   
$\bfc\in C_1$ and index $r\leq \ell$ be such  that $t=\hdist(\bfc,\bfc_r)$. 

\begin{claim}
\label{claim:yrbound00}
If $J$ is $(k,5\delta')$-irreducible for some $\delta'\geq \delta$, then $\vert Z_1\setminus B' \vert \leq \frac{1}{\delta'}\cdot \vert Y_r \vert$.
\end{claim}
\begin{proof}
For the sake of contradiction assume that $\vert Z_1\setminus B' \vert > \frac{1}{\delta'}\cdot \vert Y_r \vert$. 
This implies that  $\Hdist(Z_1\setminus B',C_1)\geq \frac{t}{2\delta'} \vert Y_r\vert$. Thus  
\begin{equation}
\label{eqn:change000}
t \vert Y_r\vert 
  \leq 2\delta' \Hdist(Z_1\setminus B',C_1) \leq 2\delta' \Hdist(Z_1,C_1).
\end{equation}
Now we upper bound $\Hdist(X,C_1\cup (C_2\setminus \{c_r\}))-\Hdist(X,C_1\cup C_2)$, by considering the the change in cost 
when we reassign the vectors in $Y_r$ to the cluster with center $\bfc$. 
\begin{eqnarray}
\Hdist(X,C_1\cup (C_2\setminus \{\bfc_r\}))&-&\Hdist(X,C_1\cup C_2)\nonumber\\
&\leq& 
\sum_{\bfy\in Y_r} (\hdist(\bfy,\bfc)- \hdist(\bfy,\bfc_{r}))\nonumber\\
&\leq& \sum_{\bfy\in Y_r} \hdist(\bfc,\bfc_{r}) \quad (\mbox{by the triangle inequality})\nonumber\\
 &=&  \vert Y_r\vert\cdot t  
 \leq  2\delta' \cdot \Hdist(Z_1,C_1) \qquad\qquad\qquad (\mbox{by }\eqref{eqn:change000}) \label{eqn:change111}
\end{eqnarray}
   
But then
\begin{eqnarray*}\opt_{k-1}(J) & \leq &
\Hdist(X,C_1\cup (C_2\setminus \{\bfc_r\}))\\ &\leq& \Hdist(X,C_1\cup C_2)+ 2\delta' \Hdist(Z_1,C_1) \qquad(\mbox{by }\eqref{eqn:change111})\\
&\leq& (1+\delta)\cdot \opt(J)+2\delta' (1+\delta)\cdot \opt(J) \quad(\mbox{by }\eqref{eqn:firstobs})\\
&\leq& (1+\delta+4\delta')\cdot \opt(J)\qquad (\mbox{since }\delta\leq 1)\\
&\leq& (1+5\delta')\cdot  \opt(J). \qquad (\mbox{since }\delta\leq \delta')
\end{eqnarray*}
This contradicts the assumption that $J$ is $(k,5\delta')$-irreducible.  
\end{proof}
Since $Y_r\subseteq Z_2$,  by Claim~\ref{claim:yrbound00}, we have that $\vert Z_1\setminus B' \vert \leq \frac{1}{\delta'}\cdot \vert Z_2 \vert$. 
Sets $Z_1$ and $Z_2$ form a partition of $X \setminus  B$ and because $B'\subseteq Z_1$, we have that $X\setminus (B\cup B')=(Z_1\setminus B') \cup Z_2$.
This implies that $|X\setminus (B\cup B')|\leq |Z_2| (1+1/\delta')$, and hence
\[
\vert Z_2\vert \geq \frac{\delta'}{1+\delta'} \cdot \vert X \setminus (B\cup B')\vert .
\] 
Finally, we prove the last condition in $(iii)$.  If  $\vert B'\vert \leq \frac{\vert X\setminus B\vert}{2}$,  then
\begin{eqnarray*}
\vert  Z_2 \vert &\geq & \frac{\delta'}{1+\delta'} \cdot \vert X \setminus (B\cup B')  \vert  \\
&\geq &\frac{\delta'}{1+\delta'}   \cdot \left(\vert X \setminus B\vert- \vert B' \vert \right)\\
&\geq &  \frac{\delta'}{2(1+\delta')}  \cdot \vert X \setminus B\vert.
\end{eqnarray*}
This completes the proof of the lemma.
\end{proof}

Due to Lemma~\ref{lem:safetoremove}, once we have a partial solution $C_1$, we can identify a set $B'$ of vectors 
such that the number of vectors in the largest cluster corresponding to $C\setminus C_1$ (here $C$ is a {\em good} solution which contains $C_1$) is at least a constant fraction of the remaining vectors. This allows us to further use Lemma~\ref{lem:samplelargecluster} as formally 
explained in Lemma~\ref{lem:findmore}. To make the statements in Lemma~\ref{lem:findmore} easier, we make use of   one of the {\em best} 
$\delta$-extension of $(C_1,B)$ as derived in the following lemma. 

\begin{definition}[Good $\delta$-extension]
\label{def:goodetn}
For   instance $J=(X,k,{\cal R}=\{R_1,\ldots,R_m\})$ of \rclustering, $\delta \geq 0$, $B\subseteq X$ and $C_1\subseteq \{0,1\}^m$ of size $k'$, 
a $\delta$-extension $C_2$ of $(C_1,B)$ is a {\em good $\delta$-extension of $(C_1,B)$} if there is a partition $Z_1\uplus Z_2=X\setminus B$ and   index set  $I \subseteq \{1,\dots, k\}$ of size $k'$  such that 

\begin{itemize}
\item 
 $\Hdist(X\setminus B,C_1\cup C_2)=\Hdist(Z_1,C_1)+\Hdist(Z_2,C_2)$, 
\item 
$C_1$ satisfies  $\proj{\RR}{I}$, and 
\item $C_2$ is an optimal  solution to $J'=(Z_2, k-k',\RR'=\RR(I,C_1))$. \end{itemize}
We will refer to $J'$ as to  the $C_2$-optimal  reduced  instance. 


\end{definition}

\begin{lemma}
\label{lem:bestC2}
Let $J=(X,k,{\cal R}=\{R_1,\ldots,R_m\})$ be an instance of \rclustering\ and $\delta \geq 0$. Let $B\subseteq X$ and $C_1\subseteq \{0,1\}^m$ be a set of $k'$ vectors for some $k'\leq k$ such  that $(C_1,B)$ is {$\delta$-extendable for $J$}.
Then there is a good $\delta$-extension $C_2$ of $(C_1,B)$.
\end{lemma}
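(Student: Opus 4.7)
My plan is to start from an arbitrary $\delta$-extension guaranteed by extendability and then ``improve'' it by replacing it with the minimum-cost extension compatible with the same labeling. Concretely, since $(C_1,B)$ is $\delta$-extendable, there exists a set $C_2'\subseteq\{0,1\}^m$ of size $k-k'$ such that $C_1\cup C_2'$ satisfies $\RR$ and $\Hdist(B,C_1)+\Hdist(X\setminus B,C_1\cup C_2')\leq(1+\delta)\opt(J)$. Unrolling the definition of a set satisfying $\RR$, there is an index set $I\subseteq\{1,\dots,k\}$ of size $k'$ (determined by where the vectors of $C_1$ sit inside the ordered tuple forming $C_1\cup C_2'$) such that $C_1$ satisfies $\proj{\RR}{I}$ and $C_2'$ satisfies $\RR(I,C_1)$. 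I will keep this $I$ fixed for the rest of the argument.

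Let $\mathcal{E}=\{C_2''\subseteq\{0,1\}^m:|C_2''|=k-k' \text{ and } C_2'' \text{ satisfies } \RR(I,C_1)\}$. This set is non-empty (it contains $C_2'$) and finite, so I can choose $C_2\in\mathcal{E}$ that minimizes $\Hdist(X\setminus B,C_1\cup C_2'')$ over $C_2''\in\mathcal{E}$. Define the partition $(Z_1,Z_2)$ of $X\setminus B$ by sending every vector to the side (between $C_1$ and $C_2$) containing its closest center, breaking ties arbitrarily. By construction this partition certifies $\Hdist(X\setminus B,C_1\cup C_2)=\Hdist(Z_1,C_1)+\Hdist(Z_2,C_2)$. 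Since $C_2'\in\mathcal{E}$, the minimality of $C_2$ gives $\Hdist(X\setminus B,C_1\cup C_2)\leq\Hdist(X\setminus B,C_1\cup C_2')$, so the pair $(C_1,B)$ is still $\delta$-extended by $C_2$, i.e., $C_2$ is indeed a $\delta$-extension. Together with the fact that $C_1$ satisfies $\proj{\RR}{I}$, the first two bullets of the definition of a good $\delta$-extension are in place.

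The one nontrivial condition to verify is that $C_2$ is an optimal solution to the reduced instance $J'=(Z_2,k-k',\RR(I,C_1))$. I will prove this by contradiction: if some $C_2^*\in\mathcal{E}$ satisfied $\Hdist(Z_2,C_2^*)<\Hdist(Z_2,C_2)$, then, using that the partition $(Z_1,Z_2)$ need not be the natural partition for $C_1\cup C_2^*$ but still gives an upper bound on $\Hdist(X\setminus B,C_1\cup C_2^*)$, I get
\[
\Hdist(X\setminus B,C_1\cup C_2^*)\leq \Hdist(Z_1,C_1)+\Hdist(Z_2,C_2^*)<\Hdist(Z_1,C_1)+\Hdist(Z_2,C_2)=\Hdist(X\setminus B,C_1\cup C_2),
\]
contradicting the choice of $C_2$ as a minimizer over $\mathcal{E}$. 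Hence $C_2$ is optimal for $J'$, which completes the verification that $C_2$ is a good $\delta$-extension of $(C_1,B)$.

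The only subtlety, and the step that looks most like an obstacle at first glance, is the apparent circularity in the definition (the partition depends on $C_2$, which is supposed to be optimal on $Z_2$). This is resolved by noting that the natural ``closest-center'' partition minimizes the cost among all partitions for a fixed $C_1\cup C_2''$, so minimizing the overall cost over $\mathcal{E}$ automatically forces $C_2$ to be optimal on its own cluster $Z_2$ within $\RR(I,C_1)$.
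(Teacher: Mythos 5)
Your proof is correct and follows essentially the same approach as the paper: pick a $\delta$-extension that minimizes $\Hdist(X\setminus B, C_1\cup \cdot)$, take the induced closest-center partition, and derive a contradiction if the chosen extension were not optimal on the reduced instance. The only cosmetic difference is that you fix the index set $I$ up front and minimize over the class $\mathcal{E}$ of all sets satisfying $\RR(I,C_1)$, whereas the paper minimizes over all $\delta$-extensions (and extracts $I$ afterward from the minimizer); both are valid and yield the same conclusion.
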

\begin{proof}
Among all  $\delta$-extensions of $(C_1,B)$, let $C_2$ be a $\delta$-extension of $(C_1,B)$ such that 
$\Hdist(X\setminus B,C_1\cup C_2)$ is minimized. We prove that  $C_2$ is the required $\delta$-extension of $(C_1,B)$. 

Let 
\[\eta=\Hdist(X\setminus B,C_1\cup C_2).\] 
Since $C_2$ is a $\delta$-extension of $(C_1,B)$, there is a set  
$I\subseteq \{1,\dots, k\}$ of size $k'$ 
 and a partition $Z_1\uplus Z_2$ of $X\setminus B$ 
such that $C_1$ and $C_2$ satisfy families of relations $\proj{\RR}{I}$ and $\RR(I,C_1)$ correspondingly, and 
%
\[\eta =\Hdist(Z_1,C_1)+\Hdist(Z_2,C_2)\leq (1+\delta)\opt(J).\] 
 By the choice of $C_2$,  for every  $\delta$-extension $C_2'$ of $(C_1,B)$, we have that $\Hdist(X\setminus B,C_1\cup C_2')\geq \eta$. Assume, for the sake of contradiction,  that $C_2$ is not an optimal  solution to $(Z_2,k-k',\RR(I,C_1))$. Let $C_2^*$ be an optimal  solution to $(Z_2,k-k',\RR(I,C_1))$. Then   $\Hdist(Z_2,C_2^*)<\Hdist(Z_2,C_2)$. Moreover,  $C_1\cup C_2^*$ 
satisfies $\RR$ and hence is a solution to $J$. Since $\Hdist(Z_2,C_2^*)<\Hdist(Z_2,C_2)$, we conclude that 
\[\Hdist(X\setminus B,C_1\cup C_2^*)\leq \Hdist(Z_1,C_1)+\Hdist(Z_2,C_2^*)<\Hdist(Z_1,C_1)+\Hdist(Z_2,C_2)=\eta.\] 
This contradicts the choice of $C_2$, which in turn, completes the proof of the lemma.  
\end{proof}

In order to state the next lemma, we need one more definition.
\begin{definition}[Set of $\kappa$-heavy clusters]
\label{def:heavyclusters}
Let $\kappa>0$, $X$ be a vector set and $X_1,\ldots,X_{\ell}$ be a clustering of $X$. We say that a subset of clusters ${\cal X}\subseteq \{X_1,\ldots,X_{\ell}\}$  is the set of \emph{$\kappa$-heavy clusters} if for every $Y \in \{\cal X\}$ and $Z \not\in {\cal X}$, 
 $\vert Y \vert > \kappa\vert Z\vert$. 
\end{definition}

The following lemma says that if we have an irreducible instance $J$ with  $\delta$-extendable pair $(B,C_1)$, 
then  it is possible to construct a larger extendable pair   by adding to $C_1$ a set of ``approximate'' centers of heavy clusters from optimal clustering of instance  $J'$ obtained from $J$ by ``subtracting'' $(C_1, B)$.

\begin{lemma}
\label{lem:findmore}
Let $\delta\geq 0$ and $0\leq \alpha \leq 1$  and let $J=(X,k,{\cal R}=\{R_1,\ldots,R_m\})$ be a $(k,5\delta')$-irreducible instance of \rclustering for some $\delta'\geq \delta$. 
Let pair $(C_1,B)$,  $B\subseteq X$, $C_1\subseteq \{0,1\}^m$, $\vert C_1\vert <k$, be  $\delta$-extendable for $J$. Let $C_2=\{\bfc_1^*,\ldots,\bfc_\ell^*\}$ be a good $\delta$-extension of $(C_1,B)$ and $J'=(Z_2,\ell=k-\vert C_1\vert,\RR'=\RR(I,C_1))$ be the corresponding 
$C_2$-optimal  reduced  instance.
%
%
(The existence of such $C_2$   is guaranteed by Lemma~\ref{lem:bestC2}.) 
Let also  $X_1,\ldots,X_{\ell}$ be the set of clusters corresponding  to  the solution $C_2$ of $J'$ and let $I'\subseteq\{1,\dots, \ell\}$, $|I'|=k'$,  be the set of indices of the 
 $\frac{k}{\alpha}$-heavy clusters from this set.

 Then for every solution 
$ C_2'=\{\bfc_i\colon i\in I'\}$ to 
$J_1'=(\bigcup_{i\in I'}X_i ,k',\proj{\RR'}{I'})$ satisfying condition  
\[\sum_{j\in I'}\Hdist(X_j,\{\bfc_j\})\leq (1+{\alpha})\cdot \sum_{j\in I'}\Hdist(X_j,\{\bfc_j^*\}),\]   the pair  $(C_1\cup C_2',B)$ is $(5\delta+4\alpha)$-extendable for $J$. 
\end{lemma}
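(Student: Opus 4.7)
My plan is to exhibit an explicit extension $C_3$ of the new partial solution $C_1 \cup C_2'$ and to bound the resulting cost. The construction itself is short; the bulk of the work lies in bounding the cost, where the approximation hypothesis on $C_2'$, the $\frac{k}{\alpha}$-heavy condition, and $(k,5\delta')$-irreducibility all enter.

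First, I would construct $C_3$. Since $C_2'$ satisfies $\proj{\RR'}{I'}$, it respects $\RR'$. Applying Proposition~\ref{prop:smallsolution} to the reduced instance $J' = (Z_2, \ell, \RR')$ with the partial set $C_2'$ yields an extension $\widetilde{C}_2 \supseteq C_2'$ satisfying $\RR'$. Define $C_3 := \widetilde{C}_2 \setminus C_2'$. Then $C_1 \cup C_2' \cup C_3 = C_1 \cup \widetilde{C}_2$ is a $k$-vector set satisfying $\RR$, giving a valid candidate extension.

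Second, I would reduce the target cost bound to a bound on $\Hdist(Z_2, \widetilde{C}_2)$. Because distances can only decrease when more centers are added,
\[
\Hdist(B, C_1 \cup C_2') + \Hdist(X \setminus B, C_1 \cup \widetilde{C}_2) \le \Hdist(B, C_1) + \Hdist(Z_1, C_1) + \Hdist(Z_2, \widetilde{C}_2),
\]
where $(Z_1, Z_2)$ is the partition of $X \setminus B$ coming from the good $\delta$-extension $C_2$. Using the hypothesis that $(C_1, B)$ is $\delta$-extendable via $C_2$, namely $\Hdist(B, C_1) + \Hdist(Z_1, C_1) + \Hdist(Z_2, C_2) \le (1+\delta)\opt(J)$, it suffices to prove $\Hdist(Z_2, \widetilde{C}_2) - \Hdist(Z_2, C_2) \le (4\delta + 4\alpha)\opt(J)$.

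Third, I would choose $\widetilde{C}_2$ coordinate-by-coordinate to hug the ideal ``target'' $\widetilde{C}_2^{\star} := C_2' \cup \{\bfc_j^* : j \notin I'\}$. By the approximation hypothesis on $C_2'$,
\[
\Hdist(Z_2, \widetilde{C}_2^{\star}) \le \sum_{j \in I'}\Hdist(X_j, \{\bfc_j\}) + \sum_{j \notin I'}\Hdist(X_j, \{\bfc_j^*\}) \le (1+\alpha)\Hdist(Z_2, C_2).
\]
However $\widetilde{C}_2^{\star}$ need not satisfy $\RR'$ at every coordinate, so the valid $\widetilde{C}_2$ must differ from the target at some positions, incurring a ``repair'' cost. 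To bound this repair cost I would use (i)~the $\frac{k}{\alpha}$-heavy condition, which yields $|Z_2^{\bar{I'}}| < \alpha \cdot |Z_2^{I'}|$, so single-bit changes at non-heavy indices are cheap per coordinate; and (ii)~$(k, 5\delta')$-irreducibility, which, in the spirit of Claim~\ref{claim:yrbound00}, caps how far non-heavy cluster centers $\bfc_j^*$ can be ``pulled'' towards heavy ones before the total cost exceeds $(1 + 5\delta')\opt(J)$, bounding the aggregate repair cost by $O(\delta')\opt(J)$.

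The main obstacle is the third step: translating the coordinate-wise relational mismatch between $\widetilde{C}_2^{\star}$ and the valid $\widetilde{C}_2$ into a quantitative cost bound that respects the $(4\delta + 4\alpha)\opt(J)$ budget. Once this bound is in hand, the remaining arithmetic, using $\delta \le \delta'$ and $0 \le \alpha \le 1$, collapses the constants precisely to $(1 + 5\delta + 4\alpha)\opt(J)$, matching the claimed $(5\delta + 4\alpha)$-extendability.
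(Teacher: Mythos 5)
Your steps 1 and 2 are fine, and your choice of ``target'' $\widetilde{C}_2^\star = C_2' \cup \{\bfc_j^* : j\notin I'\}$ with the bound $\Hdist(Z_2,\widetilde{C}_2^\star)\leq (1+\alpha)\Hdist(Z_2,C_2)$ matches the structure of the paper's argument. But step 3 --- the crux --- is missing, and the tool you propose to close it is the wrong one. The paper's proof of this lemma does \emph{not} use $(k,5\delta')$-irreducibility anywhere; that hypothesis is a carryover from the surrounding lemmas and is simply not invoked here. The entire ``repair'' budget is paid for by the $\frac{k}{\alpha}$-heavy condition alone. An irreducibility-style bound ``in the spirit of Claim~\ref{claim:yrbound00}'' would give you an additive term of order $\delta'\cdot\opt(J)$, which is useless here: $\delta'\geq\delta$ (and in the intended application $\delta' = (5^k-1)\alpha \gg \alpha$), so such a term blows straight past the $(4\delta+4\alpha)\opt(J)$ budget you correctly identified in step 2. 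The repair cost must scale with $\alpha$, not $\delta'$.

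Here is what the paper actually does at the point where your proposal stalls. Set $W_1=\bigcup_{j\in I'}X_j$, $W_2=Z_2\setminus W_1$. Let $C_3=\{\bfc_{k'+1},\dots,\bfc_\ell\}$ be an \emph{optimal} solution to $J_2'=(W_2,\ell-k',\RR'(I',C_2'))$, and then repair it coordinate-by-coordinate: define $C_3'$ by using $(\bfc_{k'+1}^*[i],\dots,\bfc_\ell^*[i])$ at coordinate $i$ whenever $(\bfc_1[i],\dots,\bfc_{k'}[i])=(\bfc_1^*[i],\dots,\bfc_{k'}^*[i])$, and $(\bfc_{k'+1}[i],\dots,\bfc_\ell[i])$ otherwise. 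One checks $C_3'$ satisfies $\RR'(I',C_2')$, so $C_1\cup C_2'\cup C_3'$ satisfies $\RR$. The key estimate is then that for each non-heavy index $j$, the coordinates $r$ where $\bfc_j'[r]\neq\bfc_j^*[r]$ must be coordinates where $C_2'$ disagrees with $C_2$ at some heavy index $j_r$; at such a coordinate, the vectors of $X_{j_r}$ are split by the two disagreeing bit-values, and $|X_j|\leq\frac{\alpha}{k}|X_{j_r}|=\frac{\alpha}{k}(|A[r]|+|B[r]|)$ with $A[r]$, $B[r]$ the two halves. Summing and using the hypothesis $\sum_{j\in I'}\Hdist(X_j,\{\bfc_j\})\leq(1+\alpha)\Delta_1$ gives the repair bound $\sum_{r:\bfc_j'[r]\neq\bfc_j^*[r]}|X_j|\leq\frac{3\alpha}{k}\Delta_1$ per non-heavy $j$, hence $\Hdist(W_2,C_3')\leq\Delta_2+3\alpha\Delta_1$. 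Combined with $\Hdist(W_1,C_2')\leq(1+\alpha)\Delta_1$ this gives the needed $\Hdist(W_1,C_2')+\Hdist(W_2,C_3')\leq(1+4\alpha)\Hdist(Z_2,C_2)$, and the arithmetic you sketched for the final bound then goes through using only $\alpha\leq 1$ and $\delta\leq\delta'$. In short: your plan is missing the combinatorial charging argument that converts the heavy-cluster size gap into a coordinate-wise cost bound, and you should drop the appeal to irreducibility entirely.
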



\begin{proof}
Because $C_2$ is a   $\delta$-extension of $(C_1,B)$,
we have that 
\[\Hdist(B,C_1)+\Hdist(X\setminus B, C_1\cup C_2)\leq (1+\delta)\opt(J).\] 
The assumption that $C_2$ is a good $\delta$-extension of $(C_1,B)$, yields that 
 there are 
 $Z_1\uplus Z_2=X\setminus B$ and    $I \in \binom{[k]}{\vert C_1\vert}$  such that 
\begin{itemize}
\item 
 $\Hdist(X\setminus B,C_1\cup C_2)=\Hdist(Z_1,C_1)+\Hdist(Z_2,C_2)$, 
\item 
$C_1$ satisfies  $\proj{\RR}{I}$, and 
\item $C_2$ is an optimal  solution to $J'=(Z_2,\ell=k-\vert C_1\vert,\RR'=\RR(I,C_1))$. \end{itemize}

Hence  
\begin{eqnarray}
\Hdist(X,C_1\cup C_2)\leq \Hdist(B\cup Z_1,C_1)+\Hdist(Z_2,C_2)\leq (1+\delta)\opt(J). \label{eqn:firstobs1111}
\end{eqnarray}

For the ease of presentation we assume that $I'=\{1,\dots, k'\}$. 
Let $W_1=\bigcup_{j=I'} X_j$ and $W_2=Z_2\setminus W_1$. 
 Let $C_3=\{\bfc_{k'+1},\ldots,\bfc_{\ell}\}$ be an optimal  solution to $J_2'=(W_2, \ell-k', \RR'(I',C_2'))$. 
From the solution $C_3$ of $J_2'$, we define a solution $C_3'=\{\bfc'_{k'+1},\ldots,\bfc'_{\ell}\}$ to $J_2'$ as follows. 
For each  $i\in [m]$, we set
\[
 (\bfc'_{k'+1}[i],\ldots,\bfc'_{\ell}[i])   = 
\begin{cases}
    (\bfc^*_{k'+1}[i],\ldots,\bfc^*_{\ell}[i]), & \text{if }(\bfc_{1}[i],\ldots,\bfc_{k'}[i])=(\bfc^*_{1}[i],\ldots,\bfc^*_{k'}[i]),\\
    (\bfc_{k'+1}[i],\ldots,\bfc_{\ell}[i]),              & \text{otherwise}.
\end{cases}
\]

\begin{observation}
$C_3'$ is a solution to $J_2'$. 
\end{observation}
\begin{proof}
From the definition of $C_3'$, we have that for any $i\in [m]$, $(\bfc'_{k'+1}[i],\ldots,\bfc'_{\ell}[i]) \in \RR'(I',C_2')$. 
Hence $C_3'$ is a solution to $J_2$. 
\end{proof}

Let us remind that we define $\RR'=\RR(I,C_1)$. Because 
  $C_1\in \proj{\RR}{I}$, $C_2'\in \proj{\RR'}{I'}$ and $C_3'\in \RR'(I',C_2')$, we have that  
$C_1\cup C_2'\cup C_3'$ is a solution to $J$. We will prove that 
$C_3'$ is indeed a $(5\delta+4\alpha)$-extension of $C_1\cup C_2'$.  
Towards that 
we 
 define

\begin{eqnarray}
\Delta_1&=&\sum_{j\in I'}\sum_{\bfx\in X_j}\hdist(\bfx,\bfc^*_j)=\sum_{j\in I'} \Hdist(X_j,\{\bfc_j^*\})  \label{eqn0O1}, 
 \end{eqnarray}
 and
\begin{eqnarray}
\Delta_2&=&\sum_{j\in [\ell]\setminus I'}\sum_{\bfx\in X_j}\hdist(\bfx,\bfc^*_j)=\sum_{j\in [\ell]\setminus I'} \Hdist(X_j,\{\bfc_j^*\})\label{eqn0O2}
\end{eqnarray}
By  counting the number of mismatches at each of the coordinates of vectors in $W_1$ with its corresponding center in $\{\bfc_1^*,\ldots,\bfc_{k'}^*\}$,  
we get that 
\begin{equation}
\label{eqn:rowwise1}
\sum_{r=1}^{m}\sum_{j\in I'} \sum_{\substack{\bfx\in X_j : \bfx[r]\neq \bfc^*_j[r]}}1=\sum_{j\in I'} \Hdist(X_j,\{\bfc_j^*\})=\Delta_1,
\end{equation}
where the last equality follows from \eqref{eqn0O1}. 
%
%
Since $C$ is an optimal  solution to $J'$ with corresponding clusters $X_1,\ldots, X_{\ell}$, we have that 
$\opt(J')=\Delta_1+\Delta_2$. 
Combining \eqref{eqn0O1} with 
the  assumption 
$\sum_{j\in I'}\Hdist(X_j,\{\bfc_j\})\leq (1+{\alpha})(\sum_{j\in I'}\Hdist(X_j,\{\bfc_j^*\}))$,  
we have that 
\begin{eqnarray}
\Hdist(W_1, C_2') \leq \sum_{j\in I'}\Hdist(X_j,\{\bfc_j\}) 
\leq (1+{\alpha})\Delta_1. \label{eqn:0j1}
\end{eqnarray}  

By arguments similar to the reasoning for \eqref{eqn:rowwise1} and  by \eqref{eqn:0j1},  we obtain the following. 

\begin{equation}
\label{eqn:rowwise2}
\sum_{r=1}^{m}\sum_{j\in I'} \sum_{\substack{\bfx\in X_j:\bfx[r]\neq \bfc_j[r]}}1=\sum_{j\in I'} \Hdist(X_j,\{\bfc_j\})=(1+\alpha)\Delta_1. 
\end{equation}

We claim that
\begin{claim}
\label{claim:secpart}
$\Hdist(W_2,C_3')\leq \Delta_2+(3\alpha\cdot \Delta_1)$
\end{claim}

Before proving the claim, let us show how it  concludes the proof of the lemma. 
By \eqref{eqn:0j1} and Claim~\ref{claim:secpart}, we have that 
\begin{eqnarray}
\Hdist(W_1,C_2') +\Hdist(W_2,C_3') &\leq&  (1+\alpha) \Delta_1 + \Delta_2+(3\alpha\cdot \Delta_1)\nonumber \\
&\leq& (1+4\alpha) (\Delta_1+\Delta_2)\nonumber\\
&\leq& (1+4\alpha) \Hdist(Z_2,C_2), 
\label{eqn:cond01}
\end{eqnarray}
where 
the last inequality follows from the fact that $C_2=\{\bfc_1^*,\ldots,\bfc_{\ell}^*\}$ is a solution to $J'$ 
with corresponding clusters $X_1,\ldots,X_{\ell}$ and equations \eqref{eqn0O1} and \eqref{eqn0O2}.

Now we prove that $C_3'$ is a $(5\delta+4\alpha)$-extension of $(C_1\cup C_2',B)$ for $J$. 
Towards that we upper bound $\eta=\Hdist(B,C_1\cup C_2')+\Hdist(X\setminus B,C_1\cup C_2'\cup C_3')$. 
\begin{eqnarray*}
\eta&\leq& \Hdist(B,C_1)+\Hdist(Z_1,C_1)+ \Hdist(W_1, C_2')+\Hdist(W_2, C_3')\\
&\leq&\Hdist(B,C_1)+\Hdist(Z_1,C_1)+ (1+4\alpha) \Hdist(Z_2,C_2) \qquad\qquad(\mbox{by }\eqref{eqn:cond01})\\
&\leq& (1+\delta) \opt(J)+(4\alpha) \Hdist(Z_2,C_2) \qquad\qquad\qquad\qquad\qquad(\mbox{by }\eqref{eqn:firstobs1111})\\
&\leq& (1+\delta) \opt(J)+(4\alpha) (1+\delta)\opt(J)\qquad\qquad\qquad(\mbox{by }\eqref{eqn:firstobs1111})\\
&\leq& (1+5\delta+4\alpha)\opt(J) \qquad\qquad\qquad\qquad\qquad\qquad (\mbox{since }\alpha\leq 1)
\end{eqnarray*}
This, subject to the proof of Claim~\ref{claim:secpart},  completes the proof of the lemma.

\begin{proof}[Proof of Claim~\ref{claim:secpart}]
We need one more technical claim.
\begin{claim}
\label{claim:aux1}
For any $j\in [\ell]\setminus I'$,  \[\sum_{\substack{r\in [m]:\bfc'_j[r]\neq \bfc_j^*[r]}} \vert X_j\vert\leq \frac{3\alpha}{k}\cdot \Delta_1.\]
\end{claim}
\begin{proof}
Fix $j\in [\ell]\setminus I'$. 
For any $r\in [m]$
 such that $\bfc'_j[r]\neq \bfc^*_j[r]$, by the definition of $(\bfc'_{k'+1}[r],\ldots,\bfc'_{k}[r])$, 
we have that $(\bfc_1[r],\ldots,\bfc_{k'}[r])\neq (\bfc^*_1[r],\ldots,\bfc^*_{k'}[r])$. That is, for any 
$r\in [m]$ such that $\bfc'_j[r]\neq \bfc^*_j[r]$, there is $j_r\in I'$ such that $\bfc_{j_r}[r]\neq \bfc^*_{j_r}[r]$. 
Define a map $f$ from $I_j=\{r\in [m]\colon \bfc'_j[r]\neq \bfc^*_j[r]\}$ to $I'$ as follows: 
$f(r)=j_r$, where $j_r$ is an arbitrary index in $I'$ such that $\bfc_{j_r}[r]\neq \bfc^*_{j_r}[r]$. 
We define sets 
\[
A[r]=\{\bfx\in X_{f(r)}~|~\bfx[r]\neq \bfc_{f(r)}[r]\},
\]
and 
\[
B[r]=\{ \bfx\in X_{f(r)}~|~\bfx[r]\neq \bfc^*_{f(r)}[r]\}.
\]

Since $\vert X_j\vert <
\frac{\alpha}{k} \vert X_{f(r)}\vert$ for every $r\in I_j$ (by the assumption that clusters with indices from $I'$ are $\frac{k}{\alpha} $-heavy), we have that 
\[
\vert X_j\vert \leq \frac{\alpha}{k} \vert X_{f(r)}\vert = \frac{\alpha}{k} \left(\sum_{\substack{ \bfx\in A[r]}} 1 + \sum_{\substack{ \bfx\in B[r]}} 1\right).
\]
Therefore, 
\begin{eqnarray*}
\sum_{\substack{r\in [m] :\bfc'_j[r]\neq \bfc_j^*[r]}} \vert X_j\vert &\leq& \sum_{\substack{r\in [m] :\bfc'_j[r]\neq \bfc_j^*[r]}} \frac{\alpha}{k} \left(\sum_{\substack{ \bfx\in A[r]}} 1 + \sum_{\substack{ \bfx\in B[r]}} 1\right)
\end{eqnarray*}
\begin{eqnarray*}
&\leq& \frac{\alpha}{k} \left( \left(\sum_{\substack{r\in [m] :\bfc'_j[r]\neq \bfc_j^*[r]}}  \sum_{\substack{ \bfx\in A[r]}} 1\right) + \left( \sum_{\substack{r\in [m] :\bfc'_j[r]\neq \bfc_j^*[r]}} \sum_{\substack{ \bfx\in B[r]}} 1\right)\right)\\
&\leq& \frac{\alpha}{k} \left( \left(\sum_{\substack{r\in [m] :\bfc'_j[r]\neq \bfc_j^*[r]}} \, \sum_{\substack{ \bfx\in A[r]}} 1\right) + \Delta_1\right) \qquad(\mbox{by }\eqref{eqn:rowwise1})\\
&\leq& \frac{\alpha}{k} \left( (1+\alpha)\Delta_1 + \Delta_1\right)\qquad\qquad\qquad\qquad(\mbox{by }\eqref{eqn:rowwise2})\\
&\leq &\frac{3\alpha}{k}\Delta_1. \qquad\qquad\qquad(\mbox{since }\alpha\leq 1)
\end{eqnarray*}
\end{proof}

Now we are ready to proceed with the proof  of Claim~\ref{claim:secpart}. We  bound

\begin{eqnarray*}
\Hdist(W_2,C_3') 
&\leq& \sum_{j\in [\ell]\setminus I'}\sum_{\bfx\in X_j}\hdist(\bfx,\bfc'_j)\\
&=&\sum_{j\in [\ell]\setminus I'}\sum_{\bfx\in X_j}\sum_{r=1}^{m}\vert \bfx[r]-\bfc'_j[r]\vert\\
&=&\sum_{j\in [\ell]\setminus I'}\sum_{\bfx\in X_j}\left( \sum_{\substack{r\in [m] :\bfc'_j[r]=\bfc_j^*[r]}}\vert \bfx[r]-\bfc'_j[r]\vert+\sum_{\substack{r\in [m] :\bfc'_j[r]\neq \bfc_j^*[r]}}\vert \bfx[r]-\bfc'_j[r]\vert\right)\\
&\leq&\sum_{j\in [\ell]\setminus I'}\sum_{\bfx\in X_j}\left(\hdist(\bfx,\bfc^*_j)+\sum_{\substack{r\in [m] :\bfc'_j[r]\neq \bfc_j^*[r]}}\vert \bfx[r]-\bfc'_j[r]\vert\right)\\
&=& \Delta_2+\sum_{j\in [\ell]\setminus I'}\sum_{\bfx\in X_j}\sum_{\substack{r\in [m] :\bfc'_j[r]\neq \bfc_j^*[r]}}\vert \bfx[r]-\bfc'_j[r]\vert \qquad (\mbox{by \eqref{eqn0O2}})\\
&=& \Delta_2+\sum_{j\in [\ell]\setminus I'}\sum_{\substack{r\in [m] :\bfc'_j[r]\neq \bfc_j^*[r]}}\sum_{\bfx\in X_j} \vert \bfx[r]-\bfc'_j[r]\vert \\
&\leq & \Delta_2+\sum_{j\in [\ell]\setminus I'}\sum_{\substack{r\in [m] :\bfc'_j[r]\neq \bfc_j^*[r]}} \vert X_j\vert  \\
&\leq & \Delta_2+\sum_{j\in [\ell]\setminus I'} \frac{3\alpha}{k}\cdot \Delta_1 \qquad\qquad\qquad\qquad\qquad\qquad(\mbox{by Claim}~\ref{claim:aux1})\\
&\leq & \Delta_2+({3\alpha}\cdot \Delta_1).
\end{eqnarray*}
The last inequality completes the proof of Claim~\ref{claim:secpart}, which in turn completes the proof of the lemma.
\end{proof}
\end{proof}

\section{Putting all together:  Proof of Theorem~\ref{thm:mainrclustering}}

As we already mentioned in Section~\ref{sec:non-reduceb}, the most essential part of the proof of Theorem~\ref{thm:mainrclustering} is to approximate irreducible instances. We start from the approximation algorithm for irreducible instances and then use this algorithm to prove Theorem~\ref{thm:mainrclustering}.

\subsection{Approximating irreducible instances}

Now we have sufficient ingredients to design an algorithm for outputting a $(1+\varepsilon)$-approximate solution under the assumption that the given instance is $(k,\frac{\varepsilon}{8k^*})$-irreducible for some $k^*\geq k$.  
More precisely, in this section we prove the following theorem.

\begin{theorem}
\label{thm:iterm}
There is an algorithm with the following specifications. 
\begin{itemize}
\item The input of the algorithm is an instance $J=(X,k,{\cal R}=\{R_1,\ldots,R_m\})$ of \rclustering, $\varepsilon>0$, and $k^*\geq k$
such that $J$ is $(k,\frac{\varepsilon}{8k^*})$-irreducible. 
\item The output of the algorithm is a solution $C$ to $J$ such that $\Hdist(X,C)\leq (1+\frac{\varepsilon}{40k^*})\opt(J)$ with probability at least 
\[p(k^*,k,\varepsilon)= 
\left( \frac{\varepsilon}{1+\varepsilon} \right)^k \cdot \left( \frac{\varepsilon^k }{\left(40k\cdot k^*\cdot (5^k-1)\right)^{k-1}\cdot 2k\cdot (40k^*+\varepsilon)} \right)^{\frac{c'\cdot k^2}{\varepsilon^2} \log \frac{1}{\varepsilon}}.
\]
\item The running time of the algorithm is  $ 2^{\OO(k^2+k\log(k^*))}\cdot n\cdot m \cdot \left(\frac{1}{\varepsilon} \right)^{\OO(k)}$.
\end{itemize}
\end{theorem}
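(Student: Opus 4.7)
The plan is to design a recursive algorithm that maintains an \emph{extendable partial solution}, namely a pair $(C_1,B)$ with $C_1\subseteq \{0,1\}^m$ and $B\subseteq X$ such that $(C_1,B)$ is $\delta$-extendable for $J$ (initially $C_1=B=\emptyset$ and $\delta=0$). At each call, two branches are spawned in parallel. \emph{Branch A (safe assignment).} Let $n'=|X\setminus B|$; sort $X\setminus B$ in non-decreasing distance to $C_1$, take $B'$ to be the closest $\lceil n'/2\rceil$ vectors, and recurse on $(C_1,B\cup B')$. \emph{Branch B (sampling new centers).} For each $k'\in\{1,\dots,k-|C_1|\}$, each index set $I'\subseteq\{1,\dots,k\}$ of size $k'$ (telling where the new centers slot into the relations already partly satisfied by $C_1$), and each tuple $(w_1,\dots,w_{k'})$ drawn from a geometric grid with ratio $1+\delta_0$ in the range $[\alpha n'/k,\, n']$, invoke Lemma~\ref{lem:samplelargecluster} on the reduced instance $(X,k',\RR(I_{\mathrm{prev}}\cup I', C_1))$ to obtain a candidate set $C_2'$, and recurse on $(C_1\cup C_2',B)$. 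At leaves with $|C_1|<k$, extend to a full solution via Proposition~\ref{prop:smallsolution}. Return the best cost solution encountered.

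\textbf{Correctness.} Suppose inductively that $(C_1,B)$ is $\delta$-extendable with $\delta\leq\delta'=\varepsilon/(40k^*)$. By Lemma~\ref{lem:bestC2} there is a good $\delta$-extension $C_2$ with associated partition $Z_1\uplus Z_2=X\setminus B$, threshold $t=\min\{\hdist(c,c'):c\in C_1,c'\in C_2\}$, and ``safe'' set $\hat B=\bigcup_{c\in C_1}\BB(c,t/2)\cap(X\setminus B)$. If $|\hat B|\geq n'/2$, Lemma~\ref{lem:safetoremove}(i) says $\hat B$ is a prefix of the distance-sorted order, so Branch A's $B'$ is contained in $\hat B\subseteq Z_1$; then Lemma~\ref{lem:safetoremove}(ii) preserves $\delta$-extendability, and the unassigned size halves. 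Otherwise, Lemma~\ref{lem:safetoremove}(iii) yields $|Z_2|\geq \frac{\delta'}{2(1+\delta')}n'$. Now the $(k/\alpha)$-heaviest clusters of the $C_2$-optimal reduced instance each occupy at least $\frac{\alpha}{k}|Z_2|=\Omega\bigl(\frac{\alpha\delta'}{k(1+\delta')}\bigr)n'=:\beta n'$ vectors, so the correct guess of $(k',I',w_1,\dots,w_{k'})$ triggers Lemma~\ref{lem:samplelargecluster} on those heavy clusters with success probability at least $\frac{\varepsilon}{1+\varepsilon}\beta^{rk}$. Conditioned on success, the premise of Lemma~\ref{lem:findmore} is met with the additive-loss parameter $\alpha$, so the returned $(C_1\cup C_2',B)$ is $(5\delta+4\alpha)$-extendable. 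Choosing $\alpha=\varepsilon/(40k^*(5^k-1))$ and $\delta_0=\Theta(\alpha)$, the recurrence $\delta_{i+1}=5\delta_i+4\alpha$ iterated at most $k$ times stays below $\delta'$, so the final full solution costs at most $(1+\varepsilon/(40k^*))\opt(J)$, as required.

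\textbf{Running time and probability.} Along any root-to-leaf path, Branch B is entered at most $k$ times (each entry augments $|C_1|$ by at least $1$), and each time it succeeds with the stated probability, so the total success probability is the claimed product $p(k^*,k,\varepsilon)$. For running time, the per-call work is dominated by $O(nm)$ (sorting by distance to $C_1$ in Branch A and the cost of Lemma~\ref{lem:samplelargecluster} in Branch B). The number of Branch B subcalls per node is $k\cdot \binom{k}{k'}\cdot\bigl(\log_{1+\delta_0}(k/\alpha)\bigr)^{k'}=2^{O(k+k\log k^*)}(1/\varepsilon)^{O(k)}$. With $T(n,k)\leq T(n/2,k)+c\cdot T(n,k-1)+c'nm$ and $T(n,0)=T(0,k)=1$, a double induction (first on $n$ via geometric summation of the $c'n/2^i\cdot m$ terms, then on $k$) gives $T(n,k)=O(c^k\cdot nm)=2^{O(k^2+k\log k^*)}\cdot(1/\varepsilon)^{O(k)}\cdot nm$.

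\textbf{Main obstacle.} The delicate point is the controlled growth of the extendability parameter across the $\leq k$ applications of Lemma~\ref{lem:findmore}: the recursion $\delta\mapsto 5\delta+4\alpha$ forces $\alpha$ to be exponentially small in $k$ (a $(5^k-1)$-factor), which in turn dictates the density of the geometric weight grid and the number of heavy-cluster guesses. Threading these parameter choices so that (a) the approximation factor stays below $1+\varepsilon/(40k^*)$, (b) the heaviness threshold $\beta$ remains large enough for Lemma~\ref{lem:samplelargecluster} to be useful, and (c) the total guess multiplicity stays inside the $2^{O(k^2+k\log k^*)}(1/\varepsilon)^{O(k)}$ budget is the main bookkeeping challenge; everything else is an orchestration of the preceding lemmas.
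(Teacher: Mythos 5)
Your proposal follows essentially the same route as the paper: the same two-way branching (assign the closest half of the unassigned vectors versus guess $k'$, $I'$ and a geometric grid of weights and sample new centers via Lemma~\ref{lem:samplelargecluster}), the same $\delta$-extendability invariant driven by Lemmas~\ref{lem:bestC2}, \ref{lem:safetoremove} and \ref{lem:findmore}, the same parameter cascade $\delta\mapsto 5\delta+4\alpha$ with $\alpha=\delta'/(5^k-1)$, and the same running-time recurrence. However, there is one concrete step that fails as written: the claim that each of the $\frac{k}{\alpha}$-heavy clusters of the reduced instance occupies at least $\frac{\alpha}{k}\vert Z_2\vert$ vectors. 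Definition~\ref{def:heavyclusters} only forces every heavy cluster to exceed $\frac{k}{\alpha}$ times every \emph{non}-heavy cluster; within the heavy set the sizes may decay geometrically. For instance, three clusters of sizes $N$, $\frac{2\alpha}{k}N$, $\frac{4\alpha^2}{k^2}N$ have consecutive ratios below $\frac{k}{\alpha}$, so all three belong to the heavy prefix, yet the smallest has only $\Theta\bigl(\frac{\alpha^2}{k^2}\bigr)\vert Z_2\vert$ vectors, far below your threshold. With your value of $\beta$ the size hypothesis of Lemma~\ref{lem:samplelargecluster} can simply be violated, and your weight grid restricted to $[\alpha n'/k,\,n']$ can miss the true cluster sizes, so the correct guess never occurs.

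The repair is the chaining argument the paper uses: take $k'$ to be the smallest index with $\vert X_{k'}\vert>\frac{k}{\alpha}\vert X_{k'+1}\vert$, so that $\vert X_{k'}\vert\geq\left(\frac{\alpha}{k}\right)^{k'-1}\vert X_1\vert\geq\left(\frac{\alpha}{k}\right)^{k-1}\frac{\vert Z_2\vert}{k}$, and accordingly set $\beta=\left(\frac{\alpha}{k}\right)^{k-1}\frac{\delta'}{2k(1+\delta')}$ while letting the weight grid cover ratios up to $\left(\frac{k}{\alpha}\right)^{k'-1}$ relative to the unknown $c=\vert X_{k'}\vert$. This is not cosmetic: the $(k-1)$-st power in $\beta$ is exactly what produces the factor $\bigl(40k\cdot k^*\cdot(5^k-1)\bigr)^{k-1}$ inside the exponentiated term of the stated probability $p(k^*,k,\varepsilon)$, so with your larger $\beta$ the probability bound you assert does not match the quantity your argument actually certifies. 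The extra $O(k\log\frac{k}{\alpha})$ grid points still fit in the $2^{\OO(k^2+k\log k^*)}(1/\varepsilon)^{\OO(k)}$ budget, so once $\beta$ is corrected the rest of your argument goes through.
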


First we provide an overview of the algorithm informally without mentioning actual values of the parameters. Then we reason about how to set different parameters which will lead to the required algorithm. 
And then we give a formal proof of   Theorem~\ref{thm:iterm}.

On  a high level our algorithm works as follows. Let $J=(X,k,{\cal R})$ be the input instance. We consider a tuple $(C',S,\delta)$ to be a partial solution where $S\subseteq X$ and $C'\subseteq \{0,1\}^m, \vert C'\vert \leq k$ and $\delta\geq 0$. 
We say that a partial solution $(C',S,\delta)$ is a {\em good} partial solution, if $(C',X\setminus S)$ is $\delta$-extendable for $J$.  That is $C'$ is a set of cluster centers.
Moreover $C'$ can be extended to a $(1+\delta)$-approximate solution with $X\setminus S$ being assigned to clusters corresponding to $C'$. Initially we set our set of partial solution to be ${\cal P}=\{(\emptyset,X,0)\}$. Clearly $(\emptyset,X,0)$ is a good partial solution.  At each iteration we maintain an invariant that ${\cal P}$ contains a  good partial solution with high probability.
At any step if there is a  partial solution $(C_1,S,\delta)\in\PP$ such that $\vert C_1\vert<k$ and $S\neq \emptyset$ we do the following. First we delete $(C_1,S,\delta)$ from $\PP$ and then we add following partial solutions to $\PP$. Let $\bfv_1,\ldots,\bfv_{\vert S\vert}$ be the vectors in $S$ ordered according to the non-decreasing order of $\hdist(\bfv_i,C_1)$.

\begin{itemize}
\item[$(i)$] We add $(C_1,S',\delta)$, where $S'$ is the last $\lceil \frac{\vert S\vert}{2} \rceil$ vectors in the order $\bfv_1,\ldots,\bfv_{\vert S\vert}$.
\item[$(ii)$] For any choice of $k'\in \{1,\ldots,k-\vert C_1\vert \}$ we  extend the set $C_1$  assuming $(C_1,X\setminus S)$ is $\delta$-extendable (see the below paragraph for a brief explanation).
\end{itemize}  

Assume that $C_2$ is a good $\delta$-extension of $(C_1,X\setminus S)$ (see Definition~\ref{def:goodetn}). 
Since $C_2$ is a good $\delta$-extension of $(C_1,X\setminus S)$, there exist $I\in \binom{[k]}{\vert C_1\vert}$  and a partition $Z_1\uplus Z_2$ of $S$ such that $C_2$ is an optimum solution to $J'=(Z_2,\ell=k-\vert C_1\vert,\RR(I,C_1))$ 
and $\Hdist(S,C_1\cup C_2)=\Hdist(Z_1,C_1)+\Hdist(Z_2,C_2)$ (see Lemma~\ref{lem:bestC2}). 
Let $B'$ be the set defined in Lemma~\ref{lem:safetoremove}. If $\vert B'\vert \geq \lfloor \frac{\vert S \vert}{2}\rfloor$, then by Lemma~\ref{lem:safetoremove}$(ii)$ and Observation~\ref{obs:largeext}, $(C_1,S',\delta)$ is $\delta$-extendable for $J$ (this is covered in 
item $(i)$ above). Suppose the size of $B'$ is at most $\lfloor\frac{\vert S\vert}{2}\rfloor$.  
Then, by Lemma~\ref{lem:safetoremove}$(iii)$, we know that $\vert Z_2\vert$ is a constant fraction of  $\vert S \vert$. 
Therefore  
the size of the largest cluster among the clusters correspond to $C_2$ in the instance $J'$ is also a constant fraction of $\vert S\vert$. 
Let $X_1,\ldots,X_{\ell}$ be the clusters corresponding to the solution $C_2$, where $\ell=\vert C_2\vert$. 
For the ease of presentation assume that $\vert X_1\vert \geq \vert X_2\vert \ldots \geq \vert X_{\ell}\vert$. 
Let $k'$ be the smallest integer such that $\vert X_{k'}\vert$ 
is ``much larger than''  $\vert X_{k'+1}\vert$. 
In other words, for any $i\in \{1,\ldots,k'-1\}$, the size of $X_i$ 
is at most a constant times $\vert X_{i+1}\vert$. 
This implies that the size of $X_j$ is a constant fraction of $\vert S\vert$
for any $j\in \{1,\ldots,k'\}$. Therefore we use  Lemma~\ref{lem:samplelargecluster} 
to compute an approximate set of centers $C'$ for the clusters $X_1,\ldots,X_j$. 
Then we add $(C_1\cup C',S,5\delta+4\alpha)$ to ${\cal P}$ (this case is covered in item $(ii)$ above). 
The explanation for setting the parameter $\alpha$ 
and how to estimate weights $w_i$ used in Lemma~\ref{lem:samplelargecluster}
is given in the next paragraph.   
%
At the end, for any tuple $(C,S,\delta)\in \PP$, either $\vert C\vert =k$ or $S=\emptyset$. 
Then we output the best solution among all the tuples in $\PP$. The correctness of the algorithm will follow from the  invariant that $\PP$ contains a good partial solution with high probability at each iteration.

Now we explain how to set different parameters for the algorithm, which are used in Lemmas~\ref{lem:samplelargecluster}, \ref{lem:safetoremove}, and  \ref{lem:findmore}. We have an assumption that $J$ is $(k,\frac{\varepsilon}{8k^*})$-irreducible. In Lemmas~\ref{lem:safetoremove} and \ref{lem:findmore}, we want $J$ to be   $(k,5\delta')$-irreducible. So we set $\delta'=\frac{\varepsilon}{40k^*}$.  Initially we set $\PP=\{(\emptyset,X,0)\}$. At each iteration we extend a partial solution $(C_1,S,\delta_1)$ either by deleting half of vectors from $S$ or by computing a set of center vectors using  Lemma~\ref{lem:samplelargecluster} and the correctness of the step (assuming 
$(C_1,X\setminus S)$ is $\delta_1$-extendable) follows from  Lemma~\ref{lem:findmore}. 
For the initial application of Lemma~\ref{lem:findmore} (i.e., where $(C_1,S,\delta_1)=(\emptyset,X,0)$ ), we have that $\delta_1=0$. 
Then after each application  of Lemma~\ref{lem:findmore} we get a partial solution which we expect to be $(5\delta_1+4\alpha)$-extendable (we fix $\alpha$ later). The number of times Lemma~\ref{lem:findmore} is applied to get a particular solution in ${\cal P}$ (at the end of the algorithm) is at most $k$. This implies that at the end some solution in ${\cal P}$ is $\gamma(k)$-extendable (by Observation~\ref{obs:largeext}), where $\gamma(k)$ can be obtained from the following recurrence relation. 
\begin{eqnarray*}
 &&\gamma(0) =0,    \mbox{ and }\\
 &&\gamma(k') =5\gamma(k'-1)+4\alpha,   \mbox{ for } k'\in \{1,2,\ldots,k\}.
\end{eqnarray*}
The above recurrence relation solves to $\gamma(k')=(5^{k'}-1)\alpha$ for any $k'\in \{0,1,\ldots,k\}$.  Moreover, by Lemma~\ref{lem:findmore}, we need 
$\gamma(k')$ to be at most $\delta'$ for all $k'\in \{0,1,\ldots,k\}$. Thus we set $\alpha=\frac{\delta'}{(5^k-1)}$. 
From Lemma~\ref{lem:samplelargecluster}, we expect 
a $(1+\alpha)$-approximate solution for a restricted instance derived from $J$. 
So we use $\varepsilon=\frac{\alpha}{7}$ and $\delta=\frac{\alpha}{7}$ in the application 
of Lemma~\ref{lem:samplelargecluster}. To apply Lemma~\ref{lem:samplelargecluster}, we also need  the value for $\beta$ and promised bounds on the  sizes of the clusters for which we are seeking center vectors. Towards that we give a detailed explanation of how to use 
 Lemma~\ref{lem:samplelargecluster}. Notice that   $(C_1,S,\delta_1)$ is a partial solution already computed and we assumed 
 that $(C_1,X\setminus S)$ is $\delta_1$-extendable for $J$. Let $C_2$ is a good $\delta$-extension of $(C_1,X\setminus S)$.  
 Since $C_2$ is a good $\delta$-extension of $(C_1,X\setminus S)$, there exist $I\in \binom{[k]}{\vert C_1\vert}$  
and a partition $Z_1\uplus Z_2$ of $S$ such that $C_2$ is an optimum solution to $J'=(Z_2,\ell=k-\vert C_1\vert,\RR(I,C_1))$ 
and $\Hdist(S,C_1\cup C_2)=\Hdist(Z_1,C_1)+\Hdist(Z_2,C_2)$ (see Lemma~\ref{lem:bestC2}). 
Let $B'$ be the set defined in Lemma~\ref{lem:safetoremove}. 
 The application of Lemma~\ref{lem:samplelargecluster} is important for the partial solution 
 $(C_1,S,\delta_1)$ only when the size of $B'$ is at most $\frac{\vert S\vert}{2}$. 
Then, by Lemma~\ref{lem:safetoremove}$(iii)$, we know that $\vert Z_2\vert \geq \left(\frac{\delta'}{2(1+\delta')}\right)\vert S \vert$. 
Therefore, the largest cluster corresponding to the solution $C_2$ of $J'$ is at least $\left(\frac{\delta'}{2k(1+\delta')}\right)\vert S\vert$. 
Let $X_1,\ldots,X_{\ell}$ be the clusters corresponding $C_2$, where $\ell=\vert C_2\vert$. For the ease of presentation assume that $\vert X_1\vert \geq \vert X_2\vert \ldots \geq \vert X_{\ell}\vert$. Let $k'$ be the smallest integer such that $\vert X_{k'}\vert > \frac{k}{\alpha} \vert X_{k'+1}\vert$. Therefore we have that $\vert X_1\vert \leq \frac{k}{\alpha}\vert X_2\vert \leq \ldots \leq \left(\frac{k}{\alpha}\right)^{k'-1}\vert X_{k'}\vert$ (Here the set $[k']$ plays the role of $I'$ in Lemma~\ref{lem:findmore}, i.e., $[k']$ is a set of indices of $\frac{k}{\alpha}$-heavy clusters). 
This implies that the size of $X_j$ is at least $\left(\frac{\alpha}{k}\right)^{k'-1}\vert X_1\vert\geq \left(\frac{\alpha}{k}\right)^{k-1}\frac{\delta'}{2k(1+\delta')} \vert S\vert$ (because $\frac{\alpha}{k}\leq 1$) for any $j\in [k']$. So we set  $\beta=\left(\frac{\alpha}{k}\right)^{k-1}\frac{\delta'}{k(2+\delta')}$. We set the value $c$ in Lemma~\ref{lem:samplelargecluster} to be $\vert X_{k'}\vert$. Notice that we do not have to know the value of $c$ explicitly, but instead we need to know the weights $w_i$s within a  promised bound.  From the value of $c$ it is clear that  for any $j\in [k']$, $\frac{\vert X_j\vert}{c}\leq \left(\frac{k}{\alpha}\right)^{k'-j}\leq \left(\frac{k}{\alpha}\right)^{k'-1}$. 
Let $h=\left(\frac{k}{\alpha}\right)^{k'-1}$. Then there exists $w_j\in \{(1+\delta)^0,(1+\delta)^1\ldots,(1+\delta)^{\log_{1+\delta} h}\}$ such that $\frac{\vert X_j\vert}{c}\leq w_j \leq \frac{(1+\delta)\vert X_j\vert}{c}$. Thus we apply  Lemma~\ref{lem:samplelargecluster} 
for all possible values $w_1,\ldots,w_{k'}\in \{(1+\delta)^0,(1+\delta)^1\ldots,(1+\delta)^{\log_{1+\delta} h}\}$ and extend 
the set $C_1$ in each case. 

Now we are ready to give the formal proof of the theorem. 
%
%

\begin{algorithm}[t]
\KwIn{An instance $J=(X,k,{\cal R}=\{R_1,\ldots,R_m\})$ of \rclustering\ and $\varepsilon>0$.}
\KwOut{A solution $C\subseteq \{0,1\}^m$ for $J$.}
$\PP \gets\{(\emptyset,X,0)\}$  \\
$\delta' \gets\frac{\varepsilon}{40k^*},\; \alpha\gets \frac{\delta'}{(5^k-1)}$, 
$\beta \gets \left(\frac{\alpha}{k}\right)^{k-1}\cdot \frac{\delta' }{2k(1+\delta')}$, $\varepsilon\gets \frac{\alpha}{7}$, and $\delta\gets \frac{\alpha}{7}$. \\
\For{$(C_1,S,\delta_1) \in \PP$ such that $\vert C_1\vert <k$ and $S\neq \emptyset$ \label{iteration} }{
$\PP\gets \PP\setminus \{(C_1,S,\delta_1)\}$ \label{line:addition0} \\
Let $\pi$ be a linear order of $S$ according to the non-decreasing distance $\hdist(\bfv,C_1)$, where $\bfv\in S$.\label{step:sort} \\
Let $S'$ be the last $\lceil \frac{\vert S \vert}{2}\rceil$ vectors in the order $\pi$. \label{line:Sprime} \\
$\PP\gets \PP\cup \{(C_1,S',\delta_1)\}$ \label{line:addition}\\
Guess  $I\in \binom{[k]}{\vert C_1\vert}$ such that $(i)$ $<C_1, \proj{\RR}{I}>$ (for a proper ordering of the vectors in $C_1$),  $(ii)$ there is a good $\delta_1$-extension  $C_2$ of $(C_1,X\setminus S)$ and  $(iii)$ $<C_2, \RR(I,C_1)>$.\label{step:guesI}\\ 
$\RR' \gets \RR(I,C_1)$.\\
$\ell\gets k-\vert C_1\vert$\\
\For{$k'\in \{1,\ldots, k-\vert C_1\vert \}$ and $I'\in \binom{[\ell]}{k'}$ \label{step:linekprime}}{
$h\gets \left(\frac{k}{\alpha}\right)^{k'-1}$ \\
\For{ $w_1,\ldots,w_{k'} \in \{(1+\delta)^0,(1+\delta)^1\ldots,(1+\delta)^{\log_{1+\delta} h}\} $ \label{step:linekprime1}}{
$C'\gets$ the output of Algorithm $\A$ from  Lemma~\ref{lem:samplelargecluster} on input $S,k',\proj{\RR'}{I'}$, $\beta$, $\delta$, ${\varepsilon}$,  and $w_1,\ldots,w_{k'}$ \label{line:Cprime} \\
$\PP\gets \PP\cup \{(C_1\cup C',S,5\delta_1+4\alpha)\}$ \label{line:addition}
}
}
}
Let $C$ be such that there is $(C,S^*,\delta^*)\in \PP$ for some $S^*\subseteq X, \delta^*\geq 0$ and $\Hdist(X,C)=\min \{\Hdist(X,C')\colon \mbox{there is $S'\subseteq X,\delta_1\geq 0$ such that }(C',S',\delta_1)\in \PP\}$.\\
\KwRet a solution $D\supseteq C$ using Proposition~\ref{prop:smallsolution}.
\caption{Algorithm for \rclustering\ assuming the input instance is $(k,\frac{\varepsilon}{8k^*})$-irreducible, where $k$ is the number of vectors allowed in the solution, $\varepsilon$ is the approximation factor and $k^*\geq k$.}
 \label{alg:imp}
\end{algorithm}

\begin{proof}[Proof of Theorem~\ref{thm:iterm}]
The pseudocode  of our algorithm is given in Algorithm~\ref{alg:imp}. 
First, we define an {\em iteration} of the algorithm to be the execution of one step of the {\bf for loop} at Line~\ref{iteration}. That is at the beginning of an iteration one partial solution will be deleted from $\PP$ and later during the execution of the iteration many partial solutions  will  be added to $\PP$ (see Lines~\ref{line:addition0} and \ref{line:addition}). 
Next, we prove that the algorithm terminates.  Notice that when there is no partial solution $(C_1,S,\delta_1)\in \PP$ with $\vert C_1\vert < k$ and $S\neq \emptyset$, then the algorithm terminates. When there is a partial solution $(C_1,S,\delta_1)\in \PP$ with $\vert C_1\vert < k$ and $S\neq \emptyset$, then there is an iteration of the algorithm (Line~\ref{iteration}) where in the {\bf for} loop we consider $(C_1,S,\delta_1)$, delete $(C_1,S,\delta_1)$ from $\PP$ and add many partial solutions to $\PP$. For each such partial solution $(C'_1,S',\delta_2)$, either $\vert C_1'\vert > \vert C_1\vert$ or  $\vert S'\vert < \vert S \vert$. This implies that Algorithm~\ref{alg:imp} will terminate.

Now we prove the correctness of the algorithm. That is we prove that   Algorithm~\ref{alg:imp} will return a 
$(1+\frac{\varepsilon}{40k^*})$-approximate solution with probability at least  $p(k^*,k,\varepsilon)$.

For convenience we let the $0{th}$ iteration to be the initial assignments before any of the execution of Line~\ref{iteration}.  That is at the end of the $0th$ iteration we have that $\PP=\{(\emptyset,X,0)\}$. When we apply in line 14 
Algorithm $\A$ from  Lemma~\ref{lem:samplelargecluster} on input $S,k',\proj{\RR'}{I'}$, $\beta$, $\delta$, ${\varepsilon}$,  and $w_1,\ldots,w_{k'}$,  we know that  by Lemma~\ref{lem:samplelargecluster},  it outputs a solution with probability at least
$\frac{\varepsilon\cdot \beta^{r\cdot k}}{1+\varepsilon}$, where $r= \Theta(\frac{k}{\varepsilon^2} \log \frac{1}{\varepsilon})$.
 Let $c'$ be a constant such that   $r=c'\cdot \frac{k}{\varepsilon^2} \log \frac{1}{\varepsilon}$.
\begin{quote}
{\bf Correctness Invariant:}
At the end of every iteration of the algorithm, there is a partial solution $(C,S,\delta_1)\in \PP$ such that $(C,X\setminus S)$ is $\delta_1$-extendable 
for $J$ with probability at least 
$\left(\frac{\varepsilon \beta^r}{1+\varepsilon}\right)^q$
where $q=\vert C\vert$. 
\end{quote}

We need the following observation to prove the correctness invariant. 

\begin{observation}
\label{obs:deltaprime}
At any step of  the algorithm,  for any partial solution $(C,S,\delta_1)\in \PP$, 
$\delta_1\leq \delta'$. 
\end{observation}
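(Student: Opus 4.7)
My plan is to prove the stronger statement that for every $(C, S, \delta_1) \in \PP$ we have $\delta_1 \leq \gamma(|C|)$, where $\gamma \colon \{0, 1, \dots, k\} \to \mathbb{R}_{\geq 0}$ is defined by $\gamma(0) = 0$ and $\gamma(j) = 5\gamma(j-1) + 4\alpha$ for $j \geq 1$. Unrolling the recurrence yields $\gamma(j) = (5^j - 1)\alpha$, which is non-decreasing in $j$, and by the choice $\alpha = \delta'/(5^k - 1)$ made in the algorithm we obtain $\gamma(k) = \delta'$. Since any tuple in $\PP$ satisfies $|C| \leq k$, the statement $\delta_1 \leq \gamma(|C|)$ will immediately give $\delta_1 \leq \gamma(k) = \delta'$, which is the desired bound.

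The proof proceeds by induction on the iteration number of the outer \textbf{for} loop. For the base case, at initialization $\PP = \{(\emptyset, X, 0)\}$ and $0 = \gamma(0)$, so the claim holds. For the inductive step, suppose the invariant holds at the start of some iteration processing the partial solution $(C_1, S, \delta_1)$, so that in particular $\delta_1 \leq \gamma(|C_1|)$. Every tuple inserted into $\PP$ in this iteration has one of two forms. The first form is $(C_1, S', \delta_1)$, produced by the ``peel off the far half of the vectors'' branch: here $|C_1|$ is unchanged and the $\delta_1$ component is inherited verbatim, so the invariant holds by the inductive hypothesis. The second form is $(C_1 \cup C', S, 5\delta_1 + 4\alpha)$, produced in the inner loop, where $C'$ is the output of algorithm $\A$ for some $k' \geq 1$. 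In this case $|C_1 \cup C'| \geq |C_1| + 1$, and by the inductive hypothesis together with monotonicity of $\gamma$,
\[
5\delta_1 + 4\alpha \;\leq\; 5\gamma(|C_1|) + 4\alpha \;=\; \gamma(|C_1| + 1) \;\leq\; \gamma(|C_1 \cup C'|),
\]
so the invariant is preserved.

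The only subtlety to keep in mind is that the inner loop can extend $C_1$ by more than one vector in a single step when $k' > 1$; the bound still goes through because $\gamma$ is strictly increasing, so advancing the size of $C$ by one unit is enough to absorb the $\delta_1 \mapsto 5\delta_1 + 4\alpha$ transformation regardless of how many centers $C'$ actually contains. Combining the invariant with $|C| \leq k$ completes the proof.
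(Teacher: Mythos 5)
Your proposal is correct and follows exactly the approach the paper sketches: the paper's proof is itself just a sketch saying to prove $\delta_1\le\gamma(|C|)$ by induction on the iteration count and then use $\gamma(k')=(5^{k'}-1)\alpha\le\delta'$; you simply fill in the base case and the two inductive cases (the ``half-peel'' branch leaving $(|C|,\delta_1)$ unchanged, and the extension branch using $5\gamma(j)+4\alpha=\gamma(j+1)$ together with monotonicity of $\gamma$). Nothing is missing, and the handling of $k'>1$ via monotonicity of $\gamma$ is the right observation.
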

\begin{proof}[Proof Sketch]
By induction on the number of iteration of the algorithm one can prove that at the end of each iteration, 
for any partial solution $(C,S,\delta_1)\in \PP$, $\delta_1 \leq \gamma(\vert C\vert)$. 
Recall that $\gamma(k')=(5^{k'}-1)\alpha$ and $\gamma(k')\leq \delta'$ for all $k'\in \{0,1,\ldots,k\}$.
Thus the observation follows. 
\end{proof}

\begin{claim}
\label{claim:corr}
Correctness invariant is maintained at the end of every iteration.  
\end{claim}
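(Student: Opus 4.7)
I will prove the claim by induction on the iteration count $t$, writing $p=\varepsilon\beta^r/(1+\varepsilon)$ for the per-step probability. The base case $t=0$ is immediate: $\PP=\{(\emptyset,X,0)\}$, and the pair $(\emptyset,\emptyset)$ is trivially $0$-extendable (take any optimal solution to $J$ as the extension), witnessing the invariant with $q=0$ and probability $1$. For the inductive step, suppose the invariant holds at iteration $t$ with witness $(C^*,S^*,\delta_1^*)\in\PP$. If iteration $t+1$ processes a different partial solution, the witness remains in $\PP$ and the invariant is inherited. Otherwise $(C^*,S^*,\delta_1^*)$ is the one processed, and I must produce a replacement witness. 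Conditioning on the inductive event, $(C^*,X\setminus S^*)$ is $\delta_1^*$-extendable, so Lemma~\ref{lem:bestC2} yields a good $\delta_1^*$-extension $C_2$ with accompanying index set $I\in\binom{[k]}{|C^*|}$ and partition $Z_1\uplus Z_2=S^*$ as in Definition~\ref{def:goodetn}; by Observation~\ref{obs:deltaprime}, $\delta_1^*\leq\delta'$. Define $B'$ as in Lemma~\ref{lem:safetoremove}, and split on its size.

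In the first case, $|B'|\geq\lfloor|S^*|/2\rfloor$, I will invoke Lemma~\ref{lem:safetoremove}(i) to identify $B'$ as a prefix of the ordering $\pi$ from line~\ref{step:sort}. Since $S^*\setminus S'$ is also a prefix of $\pi$ of size $\lfloor|S^*|/2\rfloor$, the inclusion $S^*\setminus S'\subseteq B'$ follows, which implies $X\setminus S'\subseteq(X\setminus S^*)\cup B'$. Lemma~\ref{lem:safetoremove}(ii) then gives that $(C^*,(X\setminus S^*)\cup B')$ is $\delta_1^*$-extendable, and Observation~\ref{obs:largeext} transfers this to $(C^*,X\setminus S')$. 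Hence the partial solution $(C^*,S',\delta_1^*)$ added by the algorithm witnesses the invariant, and since $|C^*|$ is unchanged no additional probability is consumed.

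In the second case, $|B'|<\lfloor|S^*|/2\rfloor$, the hypothesis that $J$ is $(k,5\delta')$-irreducible (as $\delta'=\varepsilon/(40k^*)$) together with Lemma~\ref{lem:safetoremove}(iii) gives $|Z_2|\geq\frac{\delta'}{2(1+\delta')}|S^*|$. I will sort the clusters $X_1,\ldots,X_\ell$ of the $C_2$-optimal reduced instance by non-increasing size and let $k'$ be the smallest index with $|X_{k'}|>(k/\alpha)|X_{k'+1}|$, with $I'=[k']$ indexing the $(k/\alpha)$-heavy clusters. A short telescoping chain of inequalities using $|X_1|\geq|Z_2|/k$ and $|X_{j+1}|\geq(\alpha/k)|X_j|$ for $j<k'$ yields $|X_j|\geq\beta|S^*|$ for all $j\in I'$, and also $|X_j|/|X_{k'}|\in[1,h]$, so the grid $\{(1+\delta)^i\}$ from line~\ref{step:linekprime1} contains valid brackets for each weight. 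For the correct guess of $I,k',I'$ and the $w_j$'s (all explored by the algorithm), Lemma~\ref{lem:samplelargecluster} applied with hidden set $Z=\bigcup_{j\in I'}X_j\subseteq S^*$ returns, with probability at least $\frac{\varepsilon\beta^{rk'}}{1+\varepsilon}\geq p^{k'}$ (using $\varepsilon/(1+\varepsilon)\leq 1$), a $C'$ satisfying $\sum_{j\in I'}\Hdist(X_j,\{c_j\})\leq(1+\varepsilon)^2(1+\delta)\sum_{j\in I'}\Hdist(X_j,\{c_j^*\})\leq(1+\alpha)\sum_{j\in I'}\Hdist(X_j,\{c_j^*\})$, where the final inequality follows from $\varepsilon=\delta=\alpha/7$ and the expansion $(1+\alpha/7)^3\leq1+\alpha$. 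Lemma~\ref{lem:findmore} then certifies that $(C^*\cup C',X\setminus S^*)$ is $(5\delta_1^*+4\alpha)$-extendable. Independence of the fresh sampling randomness from the inductive event multiplies probabilities to give $p^{|C^*|}\cdot p^{k'}=p^{|C^*\cup C'|}$, so the partial solution $(C^*\cup C',S^*,5\delta_1^*+4\alpha)$ added in line~\ref{line:Cprime} witnesses the invariant.

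The main obstacle is the parameter bookkeeping in the second case: the single choice of $\beta,\alpha,\delta,\varepsilon$ fixed in line~2 of the algorithm must simultaneously satisfy the distinct hypotheses of Lemmas~\ref{lem:safetoremove}, \ref{lem:samplelargecluster}, and~\ref{lem:findmore}, and the weight grid must always contain a valid bracket for the unknown cluster sizes. The telescoping of the approximation factor $(1+\varepsilon)^2(1+\delta)$ into $1+\alpha$ via $(1+\alpha/7)^3\leq1+\alpha$ and the reduction $\frac{\varepsilon\beta^{rk'}}{1+\varepsilon}\geq p^{k'}$ are routine once carefully stated.
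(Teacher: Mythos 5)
Your proposal is correct and follows essentially the same route as the paper's proof: induction on iterations, tracking the witness, the case split on $\vert B'\vert$ relative to half of $S$, and the chain Lemma~\ref{lem:bestC2} $\to$ Lemma~\ref{lem:safetoremove} $\to$ Lemma~\ref{lem:samplelargecluster} $\to$ Lemma~\ref{lem:findmore} with the same probability bookkeeping. The only differences are cosmetic (your case boundary at $\lfloor\vert S^*\vert/2\rfloor$ versus the paper's $\vert S\vert/2$, and your explicit verification of $(1+\alpha/7)^3\leq 1+\alpha$), neither of which changes the argument.
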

\begin{proof}
We prove the claim using induction on the number of iterations. The base case is for the $0th$ iteration. Since 
at the end of $0th$ iteration $(\emptyset,X,0)\in \PP$ and the fact that  $(\emptyset,\emptyset)$ is $0$-extendable for $J$ with 
probability $1$, the base case follows. 
 Now we consider the induction 
step. That is consider the iteration $i$ of the algorithm where $i>0$. Let $(C_1,S,\delta^*)$ be the partial solution for which the iteration $i$ corresponds to. At the beginning of iteration  $i$, $(C_1,S,\delta^*)\in \PP$ and by induction hypothesis there is a partial solution $(C',S',\delta_1)\in \PP$ satisfying 
the properties mentioned in the correctness invariant. Notice that during the iteration $i$, we will delete $(C_1,S,\delta^*)$ from 
$\PP$ and add many partial solutions to $\PP$. If $(C',S',\delta_1)\neq (C_1,S,\delta^*)$ then at the end of iteration $i$, $(C',S',\delta_1)\in \PP$ and the invariant follows.

So now we assume that $(C_1,S,\delta^*)=(C',S',\delta_1)$. This implies that $(C_1,X\setminus S)$ is $\delta^*$-extendable for $J$ with probability at least  $\left(\frac{\varepsilon \beta^r}{1+\varepsilon}\right)^q$,  where $q=\vert C_1\vert$. 
By Lemma~\ref{lem:bestC2}, there is 
a good $\delta^*$-extension $C_2$ of $(C_1,X\setminus S)$, a 
partition $Z_1\uplus Z_2$ of $X\setminus S$ and $I\in \binom{k}{\vert C_1\vert}$ such that  $C_1\in \proj{\RR}{I}$ and  $C_2$ is an optimum solution to $(Z_2,k-\vert C_1\vert, \RR(I,C_1))$
and $\Hdist(S,C_1\cup C_2)=\Hdist(Z_1,C_1)+\Hdist(Z_2,C_2)$. 
 Let $t=\min_{\bfc\in C_1,\bfc'\in C_2}\hdist(\bfc,\bfc')$.
Let $B'=\bigcup_{\bfc\in C_1}\BB(\bfc,t'/2)\cap S$ and $S^*=S\setminus B'$. Then by Lemma~\ref{lem:safetoremove}, we get the following.
\begin{itemize}
\item[$(i)$]$B'\subseteq Z_1$ and $B'$ consists of the first $\vert B'\vert$ vectors of $S$ in the ordering  
according to the non-decreasing distance $\hdist(\bfx,C_1)$ where $\bfx\in S$. 
\item[$(ii)$] $\Hdist(S^*,C_1\cup C_2)=\Hdist(Z_1\setminus B',C_1)+\Hdist(Z_2,C_2)$. Moreover, $(C_1, X\setminus S^*)$ is $\delta^*$-extendable for $J$ and $C_2$ is a $\delta^*$-extension of $(C_1, X\setminus S^*)$.   
\item[$(iii)$] Since $\delta^* \leq \delta'$ (by Observation~\ref{obs:deltaprime})  and the fact that $J$ is $(k,5\delta')$-irreducible (because of our assumption), we have that   $\vert Z_2\vert \geq \left(\frac{\delta'}{1+\delta'}\right) \vert S^*\vert$.  If $\vert B'\vert \leq \frac{\vert S \vert}{2}$, then 
$\vert Z_2\vert \geq \left(\frac{\delta'}{2(1+\delta')}\right) \vert S\vert$. 
\end{itemize}


Suppose  $\vert B'\vert > \frac{\vert S \vert}{2}$. Recall the definition of $S'$ from Line~\ref{line:Sprime}. Notice that $S\setminus S'\subseteq B'$. Therefore, by the assumption that $(C_1,X\setminus S)$ is $\delta^*$-extendable for $J$ and by 
Lemma~\ref{lem:safetoremove}$(ii)$ and Observation~\ref{obs:largeext},  $(C_1,X\setminus S')$ is  $\delta^*$-extendable for $J$. Since 
 $(C_1,X\setminus S)$ is $\delta^*$-extendable with probability at least  $\left(\frac{\varepsilon \beta^r}{1+\varepsilon}\right)^q$, $(C_1,X\setminus S')$ is $\delta^*$-extendable with probability at least  $\left(\frac{\varepsilon \beta^r}{1+\varepsilon}\right)^q$.

Now consider the case $\vert B'\vert \leq \frac{\vert S \vert}{2}$. We know that $C_2$ is an optimum solution to 
$(Z_2,\ell=k-\vert C_1\vert, \RR(I,C_1))$.  Let $C_2=\{\bfc_1,\ldots,\bfc_{\ell}\}$ and $Y_1,\ldots Y_{\ell}$ be the clusters corresponding to the solution $C_2$ of $(Z_2,\ell=k-\vert C_1\vert, \RR(I,C_1))$.  Let $\pi$ be a permutation of $[\ell]$ such that $\vert Y_{\pi(1)}\vert \geq \ldots \geq \vert Y_{\pi(\ell)}\vert$. 
Let $X_j=Y_{\pi(j)}$ and $\bfc_j^*=\bfc_{\pi(j)}$ for all $j\in [\ell]$.  Let $k'$ be the smallest integer in $[\ell]$ such that $\vert X_{k'}\vert >\left(\frac{k}{\alpha}\right)\vert X_{k'+1}\vert$. This implies that $\vert X_1 \vert \leq \left(\frac{k}{\alpha}\right)\vert X_2\vert \leq \ldots \leq \left(\frac{k}{\alpha}\right)^{k'-1}\vert X_{k'}\vert$. Thus, by the fact that  $\vert X_1\vert \geq \frac{\vert Z_2\vert}{\ell} \geq \frac{\vert Z_2\vert}{k}$, 
we have that for any $j\in [k']$,

\begin{equation}
\label{eqn:largecluster}
\vert X_{j}\vert \geq \left(\frac{\alpha}{k}\right)^{k'-1}\frac{\vert Z_2\vert}{k}.
\end{equation}
Let $I'=\{\pi(j)\colon j\in [k']\}$. By statement $(iii)$ above and \eqref{eqn:largecluster},
for all $j\in I'$, 
\begin{equation}
\label{eqn:choicebeta}
\vert Y_j\vert \geq  \left(\frac{\alpha}{k}\right)^{k'-1}\cdot \frac{\delta'}{2k(1+\delta')}\vert S\vert\geq \left(\frac{\alpha}{k}\right)^{k-1}\cdot \frac{\delta' \cdot \vert S\vert}{2k(1+\delta')} \qquad(\mbox{Since }\frac{\alpha}{k}\leq 1)
\end{equation}
This implies that for all $j\in I'$, $\vert Y_j\vert \geq \beta \vert S\vert$. 
Let $C^*=\{\bfc_j\colon j\in I'\}$. 
Let $V=\sum_{j\in I'} \Hdist(Y_j,\bfc_j)$. 
Moreover, $(a)$ $\{Y_j\colon j\in I'\}$ is a set of $\frac{k}{\alpha}$-heavy clusters (see Definition~\ref{def:heavyclusters}). 
Let $c=\vert Y_{k'}\vert$. Notice that $(b)$ for any $j\in I'$, $\frac{Y_j}{c}\leq  \left( \frac{k}{\alpha}\right)^{k'-1}$. 
Let $\RR'=\RR(I,C_1)$ and $J'$ be the instance $(Z=\bigcup_{j\in I'}Y_j,k',\proj{\RR'}{I'})$. 
Now in the iteration $i$, consider the execution of {\bf for loop} at Line~\ref{step:linekprime}  for  $k'$ and $I'$.  Notice that for any $j\in I'$, there exists $w \in \{(1+\delta)^0,(1+\delta)^1\ldots,(1+\delta)^{\log_{1+\delta} h}\}$, where $h=\left(\frac{k}{\alpha}\right)^{k'-1}$, 
such that $\frac{\vert Y_j\vert}{c}\leq w \leq \frac{(1+\delta)\vert Y_j\vert}{c}$. 
Now consider the {\bf for loop} at Line~\ref{step:linekprime1} for values $w_1,\ldots,w_{k'}$ such that  $\frac{\vert Y_{\pi(j)}\vert}{c}\leq w_j \leq \frac{(1+\delta)\vert Y_{\pi(j)}\vert}{c}$ for all $j\in [k']$.  Then, by Lemma~\ref{lem:samplelargecluster}, in Line~\ref{line:Cprime} we get a set $C'=\{\bfc_i'\colon i\in I'\}$ of $k'$ cluster centers such that with probability at least  $\frac{\varepsilon \cdot \beta^{r\cdot k'}}{1+\varepsilon}$,
\begin{equation}
\sum_{i\in I'}\Hdist(Y_i,\{\bfc_i'\})\leq (1+\varepsilon)^2 (1+\delta)V\leq (1+\alpha)\cdot \sum_{j\in I'} \Hdist(Y_j,\bfc_j). \label{boundcorr}
\end{equation}
Therefore, $(C_1\cup C',S,5\delta^*+4\alpha)$ belongs to $\PP$ at the end of iteration $i$ with probability 
at least $\left(\frac{\varepsilon \cdot \beta^{r}}{1+\varepsilon}\right)^{(q+ k')}=\left(\frac{\varepsilon \cdot \beta^{r }}{1+\varepsilon}\right)^{\vert C_1\cup C'\vert}$. 
Then by Lemma~\ref{lem:findmore}, $(C_1\cup C',X\setminus S)$ is $(5\delta^*+4\alpha)$-exdendable for $J$ and this completes 
the proof of the claim. 
\end{proof}

Now for the proof of the correctness of the algorithm consider the invariant at the end of the last iteration. 
At the end of last iteration for all $(C,S,\delta_1)\in \PP$, we have that either $\vert C\vert=k$ or $S=\emptyset$.  Then by Claim~\ref{claim:corr}, at the end of the last iteration, $\PP$ contains a partial solution $(C,S,\delta_1)$ which is $\delta_1$-extendable 
with probability at least 
\begin{equation}\label{eq:p-eps}
p=\left(\frac{\varepsilon \beta^r}{1+\varepsilon}\right)^{k}.
\end{equation} 
Then the output  set $D\supseteq C$ is  a $(1+\delta_1)$-approximate solution of $J$ with probability at least $p$, 
by Proposition~\ref{prop:smallsolution}. By Observation~\ref{obs:deltaprime},  $D$ is a   $(1+\frac{\varepsilon}{40k^*})$-approximate solution of $J$. 
By substituting the value of $\beta$ and $r$ into \eqref{eq:p-eps}, 
we bound the value of $p$ as follows.  
\begin{eqnarray*}
p&=&\left( \frac{\varepsilon}{1+\varepsilon} \right)^k \cdot \left(   \left( \frac{\delta'}{k(5^k-1)} \right)^{k-1} \frac{\delta'}{2k(1+\delta')}  \right)^{r\cdot k}\\
&=& \left( \frac{\varepsilon}{1+\varepsilon} \right)^k \cdot \left( \frac{\left(\frac{\varepsilon}{40k^*}\right)^k }{\left( k(5^k-1)\right)^{k-1}\cdot 2k\cdot (1+\frac{\varepsilon}{40k^*})} \right)^{r\cdot k}\\
&=& \left( \frac{\varepsilon}{1+\varepsilon} \right)^k \cdot \left( \frac{\varepsilon^k }{\left(40k\cdot k^*\cdot (5^k-1)\right)^{k-1}\cdot 2k\cdot (40k^*+\varepsilon)} \right)^{r\cdot k}\\
&=& \left( \frac{\varepsilon}{1+\varepsilon} \right)^k \cdot \left( \frac{\varepsilon^k }{\left(40k\cdot k^*\cdot (5^k-1)\right)^{k-1}\cdot 2k\cdot (40k^*+\varepsilon)} \right)^{\frac{c'\cdot k^2}{\varepsilon^2} \log \frac{1}{\varepsilon}}
\end{eqnarray*}

%
%
\paragraph*{Running time.}
Now we analyze the running time of the algorithm.  Notice that in the algorithm initially we have $\PP=\{(\emptyset,X,0)\}$ 
and in each step we delete a partial solution from $\PP$ and add many partial solutions. Towards analysing the running time 
we define a node-labelled rooted tree $T$ as follows. The root is labelled with  $(\emptyset,X,0)$. 
Each node of the  tree is labelled with a partial solution $(C,S,\delta_1)$, which corresponds an execution of {\bf for loop} at Line~\ref{iteration}. For a node labelled with $(C,S,\delta_1)$, let $S_1,\ldots,S_{\ell}$ be the tuples added to $\PP$ during the 
iteration corresponding to $(C,S,\delta_1)$. Then the node labelled $(C,S,\delta_1)$ will have $\ell$ children and they are labelled 
with $S_1,\ldots,S_{\ell}$, respectively.  Therefore, the number of iterations in the algorithm is equal to the number of nodes in the tree 
$T$. In each iteration (corresponding to a partial solution $(C,S,\delta_1)$) we sort the set $S$ of  vectors  according to the Hamming distance to $C$  (see Line~\ref{step:sort}). This can be done in time $\OO(kn'm)$, where $n'=\vert S\vert$.  Then we add $(C,S',\delta_1)$ to $\PP$, where 
$\vert S'\vert \leq \frac{\vert S \vert}{2}$.   Then because of  Lines~\ref{step:guesI}, \ref{step:linekprime} and \ref{step:linekprime1}, 
we add at most $L=2^{2k} \cdot \left(\log_{1+\delta}\frac{k}{\alpha}\right)^k$ tuples to $\PP$ (in Line~\ref{line:Cprime}) where 
the cardinality of the first entry of the each tuple is strictly more that $\vert C\vert$. 
The time required to execute the Line~\ref{line:Cprime} is at most $\OO\left(m 2^k \left(\frac{k}{\varepsilon}\right)^2 \log \frac{1}{\varepsilon}\right)$ (by Lemma~\ref{lem:samplelargecluster}).
Therefore  the time spend in one iteration of the algorithm is at most  
 $$\OO\left(L\cdot 2^{k} n' \cdot m \cdot  \left(\frac{k}{\varepsilon}\right)^2\log \frac{1}{\varepsilon}\right).$$

For any node $v$ labelled with  $(C,S,\delta_1)$, let $N(k-\vert C\vert ,\vert S\vert)$ be the time taken by the iterations which are labelled by the nodes of the subtree of $T$, rooted at the node $v$.   The value of  $N(k-\vert C\vert,\vert S\vert )$ can be upper bounded using the following recurrence formula. There is a constant $c$ such that  for any $0\leq k'\leq k$ and $0\leq n'\leq n$,   
\begin{eqnarray}
N(k',n') \leq
\left\{ \begin{array}{clc}
c & \mbox{if} &  k'=0 \mbox{ or } n'=0\\ 
N(k'+\frac{n'}{2})+L\cdot N(k'-1,n') +
\left( cL\cdot 2^{k} n' m  \left(\frac{k}{\varepsilon}\right)^2\log \frac{1}{\varepsilon}\right)& \mbox{if} &  k', n'>0
\end{array}\right. \label{eqn:runtime}
\end{eqnarray}


Clearly, the running time of the algorithm will be upper bounded by $N(k,n)$. We claim that $$N(k,n)\leq c \cdot (2L)^{k} 2^{3k^2}\cdot n\cdot m \cdot \left(\frac{k}{\varepsilon}\right)^2\log \frac{1}{\varepsilon}.$$ 
Towards that we prove   that for any $0\leq k'\leq k$ and $0\leq n'\leq n$,  $N(k',n')\leq  c\cdot  2^{k} \cdot L^{k'} 2^{3{k'}^2}\cdot n'\cdot m \left(\frac{k}{\varepsilon}\right)^2\log \frac{1}{\varepsilon}$ using induction. 
The base case is when $k'=0$ or $n'=0$ and it holds by \eqref{eqn:runtime}. By induction hypothesis we have that 
\begin{eqnarray*}
N(k',n')\leq \left( c\cdot 2^{k} L^{k'} 2^{3{k'}^2}\frac{n'm}{2}\left(\frac{k}{\varepsilon}\right)^2\log \frac{1}{\varepsilon}\right) &+& 
\left(
 L  \cdot {c} 2^{k}  L^{k'-1} 2^{3{(k'-1)}^2} n' m \left(\frac{k}{\varepsilon}\right)^2\log \frac{1}{\varepsilon}
\right) \\ 
&+& \left(cL\cdot 2^{k} n'\cdot m \left(\frac{k}{\varepsilon}\right)^2\log \frac{1}{\varepsilon}\right).
\end{eqnarray*}
To prove that $N(k',n')\leq c 2^{k} \cdot L^{k'} 2^{3{k'}^2}\cdot n'\cdot m \cdot \left(\frac{k}{\varepsilon}\right)^2\log \frac{1}{\varepsilon}$, it is enough to prove 
that $  2^{3{k'}^2-1} + 2^{3{(k'-1)}^2} + 1 \leq 2^{3{k'}^2}$, which is true for any $k'\geq 1$. 

Therefore  we upper bound the running time of the algorithm as follow. 

\begin{eqnarray*}
N(k,n)&\leq& \left({c} 2^{k}  2^{3k^2}\cdot n\cdot m \left( \frac{k}{\varepsilon}\right)^2\log \frac{1}{\varepsilon}\right)\cdot L^k \\
&\leq& \left(2^{\OO(k^2)}\cdot n\cdot m \left( \frac{1}{\varepsilon}\right)^2\log \frac{1}{\varepsilon}\right)\cdot 2^{2k} \cdot \left(\log_{1+\delta}\frac{k}{\alpha}\right)^k \\
&\leq& \left(2^{\OO(k^2)}\cdot n\cdot m \left( \frac{1}{\varepsilon}\right)^2\log \frac{1}{\varepsilon}\right)\cdot \left(\frac{\ln \frac{k}{\alpha}}{\ln (1+\delta)} \right)^k \\
&\leq& \left( 2^{\OO(k^2+k\log(k^*))}\cdot n\cdot m \cdot \frac{1}{\varepsilon^2}\log \frac{1}{\varepsilon}\right)\cdot \left(\frac{\ln \frac{1}{\varepsilon}}{\ln (1+\delta)} \right)^k \\
&\leq& \left( 2^{\OO(k^2+k\log(k^*))}\cdot n\cdot m \cdot \frac{1}{\varepsilon^2}\log \frac{1}{\varepsilon}\right)\cdot \left(\frac{(1+\delta)\ln \frac{1}{\varepsilon}}{\delta} \right)^k\\
&\leq& \left( 2^{\OO(k^2+k\log(k^*))}\cdot n\cdot m \right)\cdot \left(\frac{1}{\varepsilon} \right)^{\OO(k)}.
\end{eqnarray*}
This completes the proof of the theorem. 
\end{proof}

\subsection{The final step}
Now we are ready to prove    Theorem~\ref{thm:mainrclustering}. 

Let $J=(X,k,\RR)$ be the input instance. 
Let $$p(k^*,k,\varepsilon)=
\left( \frac{\varepsilon}{1+\varepsilon} \right)^k \cdot \left( \frac{\varepsilon^k }{\left(40k\cdot k^*\cdot (5^k-1)\right)^{k-1}\cdot 2k\cdot (40k^*+\varepsilon)} \right)^{\frac{c'\cdot k^2}{\varepsilon^2} \log \frac{1}{\varepsilon}}$$
be the success probability from Theorem~\ref{thm:iterm}, where $c'$ is a constant. 
For each $I \subseteq [k]$, we apply Theorem~\ref{thm:iterm} on $(X,\vert I \vert ,\proj{\RR}{I})$ and $\frac{\varepsilon}{4}$ (where we substitute $k=\vert I \vert$ and $k^*=k$) $\frac{1}{p(k,\vert I\vert,\frac{\varepsilon}{4})}$ times. Let $C'_{I}$ be the best solution obtained by the above process. By using Proposition~\ref{prop:smallsolution}, 
let $C_{I}$ be the solution for $J$ obtained from $C'_I$. Then we output the best solution  among $\{C_I\colon I\subseteq [k]\}$. 
The running time of the algorithm mentioned in Theorem~\ref{thm:iterm} is $ 2^{\OO(k^2+k\log(k^*))}\cdot n\cdot m \cdot \left(\frac{1}{\varepsilon} \right)^{\OO(k)}$  and the running time of the algorithm mentioned in Proposition~\ref{prop:smallsolution} is linear in the input size. Thus the running time of our algorithm is at most $\frac{1}{p(k,k,\frac{\varepsilon}{4})} \cdot 2^{\OO(k^2)}\cdot \left(\frac{1}{\varepsilon} \right)^{\OO(k)}\cdot n\cdot m$. The value of $\frac{1}{p(k,k,\frac{\varepsilon}{4})}$ is upper bounded as follows. 

\begin{eqnarray*}
\frac{1}{p(k,k,\frac{\varepsilon}{4})}&=&\left( \frac{4+\varepsilon}{\varepsilon} \right)^k \cdot \left( \frac{\left(40 k^2\cdot (5^k-1)\right)^{k-1}\cdot 2k\cdot (40k+\frac{\varepsilon}{4})}{\left(\frac{\varepsilon}{4}\right)^k } \right)^{\frac{c'\cdot k^2}{\varepsilon^2} \log \frac{1}{\varepsilon}}\\
&=&\left(\frac{1}{\varepsilon}\right)^{\OO(k)}\cdot 2^{\OO\left(\frac{ k^4}{\varepsilon^2} \log \frac{1}{\varepsilon}\right)}\cdot \left(\frac{1}{\varepsilon}\right)^{\OO\left(\frac{ k}{\varepsilon^2} \log \frac{1}{\varepsilon}\right)}\\
&=&2^{\OO\left(\frac{ k^4}{\varepsilon^2} \log \frac{1}{\varepsilon}\right)}\cdot \left(\frac{1}{\varepsilon}\right)^{\OO\left(\frac{ k}{\varepsilon^2} \log \frac{1}{\varepsilon}\right)}
\end{eqnarray*}
Thus the running time of our algorithm is $2^{\OO\left(\frac{ k^4}{\varepsilon^2} \log \frac{1}{\varepsilon}\right)}\cdot \left(\frac{1}{\varepsilon}\right)^{\OO\left(\frac{ k}{\varepsilon^2} \log \frac{1}{\varepsilon}\right)}n \cdot m$. 


Now we prove the correctness of the algorithm. Let $\widehat{k}\in [k]$ be the integer defined in Lemma~\ref{lem:irr}. We know that 
$\opt_{\widehat k}(J)\leq (1+\frac{\varepsilon}{4})\opt(J)$. This implies that there is $I\in \binom{[k]}{\widehat k}$ such that 
the $\opt(J')$ is at most $(1+\frac{\varepsilon}{4})\opt(J)$, where $J'=(X,\widehat{k},\proj{\RR}{I})$. In the iteration of our algorithm corresponding to $I$, we get a solution $C_I'$ to $J'$ of cost at most $(1+\frac{\varepsilon}{160k})\opt(J')$ with probability at least $1-(1-p(k^*,\vert I\vert,\frac{\varepsilon}{4})^{1/p(k^*,\vert I\vert,\frac{\varepsilon}{4})}\geq 1-\frac{1}{e}$. By  Proposition~\ref{prop:smallsolution}, $C_I$ is a solution to $J$ of cost at most
$(1+\frac{\varepsilon}{160k})\opt(J')$. Thus by Lemma~\ref{lem:irr}, the cost of $C_I$  is $(1+\frac{\varepsilon}{4})(1+\frac{\varepsilon}{160k})\opt(J)\leq (1+\varepsilon)\opt(J)$. This completes the proof of the theorem.  


\bibliographystyle{siam}
\bibliography{book_pc}

\end{document}